\tikzset{
    -Latex,auto,node distance =1 cm and 1 cm,semithick,
    state/.style ={ellipse, draw, minimum width = 0.7 cm, fill = yellow!25},
    point/.style = {circle, draw, inner sep=0.04cm,fill,node contents={}},
    bidirected/.style={Latex-Latex,dashed},
    el/.style = {inner sep=2pt, align=left, sloped}
}
\renewcommand{\algocf@captiontext}[2]{#1\algocf@typo. \AlCapFnt{}#2} 
\def\@algocf@capt@plain{top}
\renewcommand{\algocf@makecaption}[2]{%
  \addtolength{\hsize}{\algomargin}%
  \sbox\@tempboxa{\algocf@captiontext{#1}{#2}}%
  \ifdim\wd\@tempboxa >\hsize
    \hskip .5\algomargin%
    \parbox[t]{\hsize}{\algocf@captiontext{#1}{#2}}
  \else%
    \global\@minipagefalse%
    \hbox to\hsize{\box\@tempboxa}
  \fi%
  \addtolength{\hsize}{-\algomargin}%
}
\def\sq{{ \mathrm{\scriptscriptstyle (SQ)} }}
\def\me{{ \mathrm{\scriptscriptstyle (ME)} }}
\def\logit{{ \mathrm{\scriptscriptstyle (logit)} }}
\def\T{{ \mathrm{\scriptscriptstyle T} }}
\def\P{{ \mathbb{\scriptscriptstyle P} }}
\def\dr{{ \mathrm{bc} }}
\def\pr{{ \mathrm{pr} }}
\def\var{{ \mathrm{var} }}
\def\rmsc{{ \mathrm{\scriptscriptstyle SC} }}
\def\bbU{{\mathbb{U}}}
\def\bx{{x}}
\def\bu{{u}}
\def\E{{E}}
\def\bbS{{\mathbb{S}}}
\newcommand*{\indep}{%
{\mbox{$\perp\!\!\!\perp$}}
}
\newcommand*{\addFileDependency}[1]{
  \typeout{(#1)}
  \@addtofilelist{#1}
  \IfFileExists{#1}{}{\typeout{No file #1.}}
}
\newtheorem{lemma}{Lemma}\newtheorem{theorem}{Theorem}
\newtheorem{assumption}{Assumption}\newtheorem{remark}{Remark}
\newtheorem{corollary}{Corollary}
\newtheorem{example}{Example}
\newtheorem{proposition}{Proposition}
\begin{document}

\def\spacingset#1{\renewcommand{\baselinestretch}%
{#1}\small\normalsize} \spacingset{1}

\newcommand{\blind}{1}
\if1\blind
{
\date{}
  \title{\bf Soft calibration for selection bias problems under mixed-effects models}
  \author{Chenyin Gao   \thanks{ Department of Statistics, North Carolina State University},\hspace{.2cm}Shu Yang \thanks{
   Department of Statistics, North Carolina State University, North Carolina
27695, U.S.A. Email: syang24@ncsu.edu},\hspace{.2cm}and Jae Kwang Kim\thanks{
  Department of Statistics, Iowa State University}
    }
  \maketitle
} \fi

\if0\blind
{
  \bigskip
  \bigskip
  \bigskip
  \begin{center}
    {\LARGE\bf  Soft calibration for selection bias problems under mixed-effects models}
\end{center}
  \medskip
} \fi

\bigskip

\begin{abstract}
Calibration weighting has been widely used to correct selection biases in non-probability sampling, missing data, and causal inference. The main idea is to calibrate the biased sample to the benchmark by adjusting the subject weights. However, hard calibration can produce enormous weights when an exact calibration is enforced on a large set of extraneous covariates. This article proposes a soft calibration scheme, in which the outcome and the selection indicator follow mixed-effects models. The scheme imposes an exact calibration on the fixed effects and an approximate calibration on the random effects. On the one hand, our soft calibration has an intrinsic connection with best linear unbiased prediction, which results in a more efficient estimation compared to hard calibration. On the other hand, soft calibration weighting estimation can be envisioned as penalized propensity score weight estimation, with the penalty term motivated by the mixed-effects structure. The asymptotic distribution and a valid variance estimator are derived for soft calibration. We demonstrate the superiority of the proposed estimator over other competitors in simulation studies and a real-data application. 
\end{abstract}
\noindent%
{\it Keywords:}   Inverse propensity score weighting; Latent ignorability; Penalized optimization; Restricted maximum likelihood estimation.
\vfill

\newpage
\spacingset{1.5} 

\section{Introduction}\label{sec:introduction}

Calibration weighting, or benchmark weighting, is popular in survey sampling, where probability sampling weights are adjusted to match the known population totals of the auxiliary variables for a possible efficiency gain \citep{deville1992calibration}. The idea of calibration is related to the generalized regression estimator, a model-assisted estimator in survey sampling \citep{cassel1976some,sarndal1992}, which has later been extended to the functional model-assisted estimator \citep{cardot2011horvitz}, optimal model calibration \citep{wu2001model}, calibration weighting using instrumental variables \citep{estevao2000functional}, empirical likelihood calibration \citep{wu2006pseudo}, and multi-source data calibration \citep{yang2019combining}.

In addition to gaining precision, calibration weighting has been widely used to correct selection bias in various contexts, including finite-population inferences using non-probability samples, missing data, and causal inference. \citet{skinner1999calibration}, \citet{lundstrom1999calibration}, \citet{deville2000generalized}, \citet{kott2006using} and \citet{Lee2009} employed calibration weighting to adjust for selection bias in non-probability samples by enforcing covariate similarity between the non-probability sample and a probability sample; see \cite{yang2020statistical} for a comprehensive review. For missing-at-random data, inverse propensity score weighting creates a weighted sample that resembles the complete version of the original sample. Instead of directly inverting the propensity score, calibration weighting imposes conditions to emulate complete data and gains robustness against model misspecification (\citealp{han2013estimation,chen2017multiply,dong2020integrative,lee2022doubly}). Similarly, for causal inference under the ignorability of treatment assignment, the purpose of calibration weighting is to achieve the covariate balance between treatment groups, thus mitigating confounding biases
(\citealp{hainmueller2012entropy, anastasiade2017decomposition}). For example, the covariate balance propensity score introduced by \citet{imai2014} uses a balancing measure as an objective function to estimate the propensity score.

Most existing works aim to calibrate all available auxiliary variables to known finite-population totals, a process known as hard calibration. However, hard calibration may not be necessary when there are many covariates, especially if some covariates are not predictive of the outcome. Over-calibration, or improper application of calibration weighting on too many variables, can lead to variance inflations \citep{kang2007demystifying}. To address this problem, subsequent research has sought to use penalization (\citealp{guggemos2010penalized,athey2018approximate,ning2020robust}) or regularization \citep{zubizarreta2015,wong2018,wang2022estimation} to ease the calibration constraints on a subset of covariates, which we refer to as regularized calibration. \citet{chattopadhyay2020balancing} proposed minimal dispersion approximately balancing weights by optimizing some user-specified function. Other attempts have been made to reduce the range of calibration weights directly by trimming, smoothing, or stabilizing \citep{lazzeroni1998random,yang2018asymptotic}. 
Many of these methods adopt mixed-effects modeling, which is particularly useful in small area estimation \citep{torabi2008small}, longitudinal data inference \citep{verbeke1997linear,weiss2005modeling}, handling clustered data with cluster-specific nonignorable missingness \citep{kim2016b}, and causal inference with unmeasured cluster-level confounders \citep{yang2018propensity}. 

In this article, we focus on the settings with the shared parameter/random-effects models of the outcome and the selection indicator \citep{follmann95}. The sample inclusion indicator in survey sampling, the response indicator in the missing data context, and the treatment assignment in causal inference are all examples of the selection indicator. As a result, our framework applies to a wide range of problems. The selection indicator in the shared parameter models is latently ignorable in the sense that the selection indicator and outcome are conditionally independent given the observed covariates and the unobserved random effects, entailing nonignorable selection. Under the linear mixed-effects model, we propose a soft calibration algorithm that enforces an exact calibration on fixed effects, see (\ref{eq:calib_fix}), and an approximate calibration on the random effects, see (\ref{eq:calib-random}). Our soft calibration exploits the correlation structure of random effects to construct the regularized constraints, which is different from typical regularized calibration methods that leverage sparsity or smoothness conditions \citep{tan2020model,ning2020robust}. The soft calibration constraints are seemingly intricate but arise naturally from two paths towards constructing the best linear unbiased predictor $\widehat{\theta}_{\rm blup}$, a minimization problem in (\ref{eq:soft_sq}) and a prediction approach in (\ref{eq:estimating-eqs}). Thus, the produced estimator has an intrinsic connection to $\widehat{\theta}_{\rm blup}$ and can be more efficient than the hard-calibration estimator, especially when random effects weakly affect the outcome. Furthermore, the dual problem (\ref{eq:est_alpha_beta}) of soft calibration also establishes a link between soft calibration and penalized propensity score weight estimation, leading to a ridge-type regression \citep{guggemos2010penalized}. 

The calibration weights are well-known to be obtained by optimizing the user-specified loss function, which is related to the modeling of the propensity scores. Because the constrained optimization formulation (\ref{eq:constrained}) separates the loss function from the calibration conditions, we can impose relaxed calibration conditions while forcefully bounding the range of weights by changing the loss function. Next, we can show that the soft-calibration estimator is consistent if either the outcome follows a linear mixed-effects model or the propensity score model is correctly specified. The asymptotic distribution and a valid variance estimator for the soft-calibration estimators are then established. Furthermore, augmentations with flexible outcome modeling can be used in conjunction with soft calibration to correct the remaining bias, if any. Finally, a data-adaptive approach aided by cross fitting is proposed to select the optimal tuning parameter that minimizes the finite-sample mean squared error. Proofs of all results are provided in the Supplementary Material.

\section{Basic setup\label{sec:Basic-setup} }

\subsection{Notations, ignorability, and hard calibration\label{subsec:notations}}

To fix ideas, we consider estimating the population mean of a study variable based on a non-probability sample and extend it to clustered missing data analysis in $\S$\ref{subsec:Cluster-specific-nonignorable-mi}. Suppose that we have a finite population $\mathcal{F}_{N}=\{(\bx_{i},y_{i}):i\in\bbU\}$ with population size $N$ and index set $\bbU=\{1,\ldots,N\}$, independently and identically following a super-population model $\zeta$. We assume that $\bx_{i}$ is available in the finite population, but the study variable $y_{i}$ is observed only in the sample. Let $\mathbb{S}\subset\mathbb{U}$ be the index set of the sample of size $n$. Define the selection indicator $\delta_{i}$ as $\delta_{i}=1$ if $i\in\bbS$ and $0$ otherwise. The propensity score for unit $i$ being selected in the sample is $\pi_{i}=\pr(\delta_{i}=1\mid\bx_{i})$, which is unknown for the non-probability sample. For ease of presentation, we summarize all notations in Table \ref{tab:notations} for reference.
\begin{table}[!ht]
\caption{ Summary of the notations\label{tab:notations}}
\vspace{0.15cm}
\resizebox{\textwidth}{!}{
      \begin{tabular}{ll}
      \toprule
          Notation & Definition \\
        \midrule
          $y_i, x_i, x_{1i}, x_{2i}$ & Individuals of study variable and covariate for unit $i$, $x_i=(x_{1i}^{\T}, x_{2i}^{\T})^{\T}$\\
          $Y_{\bbU},Y_{\bbS}$& Vectors of study variable, $Y_{\bbU}=(y_1,\cdots,y_N)^{\T}$, $Y_{\bbS} = 
          \{y_i:i\in\bbS\}$\\
          $X_{\bbU}, X_{1,\bbU}, X_{2,\bbU}$ & Matrices of covariate for finite population $\bbU$, $X_{\bbU}=(X_{1,\bbU}, X_{2,\bbU})\in \mathbb{R}^{N\times (p+q)}$\\
          $X_{\bbS}, X_{1,\bbS}, X_{2,\bbS}$ & Matrices of covariate for selected sample $\bbS$, $X_{\bbS}=(X_{1,\bbS}, X_{2,\bbS})\in\mathbb{R}^{n\times (p+q)}$\\
          $\E_{\delta}(\cdot), \E_{\zeta}(\cdot), \E(\cdot)$&
          Expectations with respect to the selection $\delta$, the model $\zeta$, and both\\
          $\var_{\delta}(\cdot), \var_{\zeta}(\cdot), \var(\cdot)$&
          Variances with respect to the selection $\delta$, the model $\zeta$, and both\\
          $o(\cdot)$&
          $a_n=o(b_n)$ implies $a_n/b_n\rightarrow 0$ when $n\rightarrow \infty$\\
          $O(\cdot)$&
          $a_n=O(b_n)$ implies $a_n/b_n\rightarrow C_0$ when $n\rightarrow \infty$ for some constant $C_0$\\
          $o_{\P}(\cdot), O_{\P}(\cdot)$&
          Small and big order terms with respect to both the selection $\delta$ and model $\zeta$\\
          \bottomrule
      \end{tabular}}
  \end{table}
  
The goal is to estimate $\theta_N=N^{-1}\sum_{i\in\bbU}y_{i}$, and we consider a weighted estimator given by
\begin{equation}
\widehat{\theta}_{w}=\frac{1}{N}\sum_{i\in\bbS}w_{i}y_{i}.\label{eq:weighting estimator}
\end{equation}
If $y_{i}$ follows the linear regression model $y_{i}=\bx_{i}^{\T}\beta+e_{i}$ with $\E_{\zeta}(e_{i}\mid\bx_{i})=0$ and $\var_{\zeta}(e_i\mid \bx_{i})=\sigma_e^2$, we may impose the following condition on the weights:
\begin{equation}
\sum_{i\in\bbS}w_{i}\bx_{i}=\sum_{i\in\bbU}\bx_{i},\label{calib}
\end{equation}
which is a sufficient condition for model calibration \citep{wu2001model} in the sense that $\sum_{i\in\bbS}w_{i}\widehat{y}_{i}=\sum_{i\in\bbU}\widehat{y}_{i}$, where $\widehat{y}_i$ is a prediction based on the linear model. If the sampling mechanism is ignorable with $\delta_{i}\indep y_{i}\mid\bx_{i}$, condition (\ref{calib}) is sufficient for the unbiasedness of $\widehat{\theta}_{w}$. To find the optimal calibration estimator that minimizes the mean squared error of $\widehat{\theta}_{w}$ while satisfying (\ref{calib}) under the linear regression model, it suffices to minimize
$$
\E_{\zeta}\{(\widehat{\theta}_w-\theta_N)^2\mid X_{\bbU},\bbS\}=\frac{1}{N^2}\var_{\zeta}\left\{ \sum_{i\in\bbU}(\delta_{i}w_{i}-1)e_{i}\mid X_{\bbU},\bbS \right\} =\frac{\sigma_{e}^{2}}{N^2}\sum_{i\in\bbS}(w_{i}-1)^{2}+{\rm const.,}
$$
where ${\rm const.}$ represents a constant that does not depend on
$w=\{w_{i}:i\in\bbS\}$. Thus, we can formulate the hard
calibration weighting problem as finding the minimizer of the square loss function $\sum_{i\in\bbS}(w_{i}-1)^{2}$
subject to condition (\ref{calib}).

\subsection{Mixed-effects models and latent ignorability \label{subsec:Latent-ignorability-and}}

We now partition $\bx_{i}$ into two vectors $\bx_{1i}$ (including an intercept) and $\bx_{2i}$
with $\mbox{dim}(\bx_{1i})=p$ and $\mbox{dim}(\bx_{2i})=q$, related
to fixed effects and random effects, respectively. This setup
is particularly relevant in small area estimation, where $\bx_{1i}$
is a low-dimensional vector of feature variables and $\bx_{2i}$ is
a possibly high-dimensional vector of small area indicators. 

In these settings, selection ignorability can be restrictive because it excludes area-specific effects that affect both $y_i$ and $\delta_i$. To overcome this issue, we consider a linear
mixed-effects super-population model:
\begin{equation}
y_i = x_{1i}^{\T}\beta + 
x_{2i}^{\T} u + e_i, \quad\bu\sim N\left(0,D_{q}\sigma_{u}^{2}\right),\quad e_i\sim N(0,q_i^{-1}\sigma_{e}^{2}),\quad\bu\indep e_i \mid x_i,\label{mixed}
\end{equation}
where $\bu$ is a $q$-dimensional vector of random effects with a positive-definite covariance matrix $D_{q}$, $e_i$ is the heteroscedastic random error with known $q_i^{-1}$, and $\sigma_{e}^{2}$ and $\sigma_{u}^{2}$ characterize the variances of individual errors and random effects, respectively. Typically, we consider $q_i=1$ for $i\in \bbS$ but unequal $q_i$'s are also desired in some situations; see Remark 5 in \cite{devaud2019deville}. Next, we make the following assumptions for the sampling
mechanism.
\begin{assumption}[Latent ignorability]\label{assum:laten_ignore}
The sampling mechanism is ignorable given $(\bx_{i},\bu)$: $\delta_{i}\indep y_{i}\mid(\bx_{i},\bu)$ for all $i\in\bbU$.
\end{assumption}
\begin{assumption}[Positivity]\label{assum:positivity}
$0<\underline{d}<Nn^{-1}\pr(\delta_{i}=1\mid\bx_{i},\bu)<\overline{d}<1$
for all $\bx_{i}$ and $\bu$.
\end{assumption}
Assumption \ref{assum:laten_ignore} leads to shared parameter/random-effects
models of $\delta_{i}$ and $y_{i}$. In the missing data context
with clustered data, it is called
cluster-specific nonignorable missingness \citep{yuan2007model}.
In the context of causal inference, it
is called cluster-specific nonignorable treatment assignment \citep{yang2018propensity}.
Assumption \ref{assum:laten_ignore} relaxes the ignorability assumption by allowing unobserved random effects to affect both $y_i$ and $\delta_i$. Assumption \ref{assum:positivity}
implies that the sample support $\{x_{i}: i\in\bbS\}$
coincides with the support of $x_{i}$ in the population.

\subsection{Soft calibration for the best linear unbiased predictor}\label{subsec:soft-blup}

Under model (\ref{mixed}) and Assumptions \ref{assum:laten_ignore}-\ref{assum:positivity}, we wish to develop the optimal calibration estimator $\widehat{\theta}_{w}$ by minimizing the mean squared error. Following \citet{hirshberg2019minimax}'s minimax imbalance strategy, we minimize
\begin{align}
    &\sup_{\beta \in \mathcal{M}} 
    E_{\zeta}\{(\widehat{\theta}_w-\theta_N)^2\mid X_{\bbU},\bbS\} =
  \sup_{\beta\in\mathcal{M}}
  \frac{1}{N^2}
  (w^{\T} X_{1,\bbS}-1^{\T}_N X_{1,\bbU})\beta\beta^{\T}
    (w^{\T} X_{1,\bbS}-1^{\T}_N X_{1,\bbU})^{\T}\nonumber\\
    &
    +\frac{\sigma_e^2}{N^2}\left\{\sum_{i \in \bbS}q_i^{-1}\left(w_i-1\right)^2+\gamma^{-1}\left(w^{\T} X_{2, \bbS}-1_N^{\T} X_{2,\bbU}\right) D_q\left(w^{\T} X_{2, \bbS}-1_N X_{2,\bbU}\right)^{\T}\right\}
    \label{eq:soft_sq}
\end{align}
with respect to $w$, where $\mathcal{M}$ is a convex subset of $\mathbb{R}^p$ that contains the true $\beta$. Since $\mathcal{M}$ may be unbounded without prior knowledge, the minimax problem results in an exact calibration condition $w^{\T} X_{1, \bbS}=1^{\T}_N X_{1,\bbU}$ to diminish the first term of the above equation. The remaining objective function (\ref{eq:soft_sq}) leads to a generalized ridge regression problem \citep{bardsley1984multipurpose} augmented with a data-dependent penalty, where $\gamma^{-1}=\sigma_{u}^{2}/\sigma_{e}^{2}$ determines the level of calibration for $X_{2,\bbS}$: if $\gamma$ is close to zero, the
calibration for $X_{2,\bbS}$ is nearly exact; and if $\gamma$ is large, the calibration for $X_{2,\bbS}$ is greatly relaxed.

In addition, the minimum of (\ref{eq:soft_sq}) should coincide with $\widehat{\theta}_{\rm blup}=N^{-1}\sum_{i\in \bbU}(x_{1i}^{\T}\widehat{\beta}+x_{2i}^{\T}\widehat{u})$,
where $(\widehat{\beta},\widehat{u})$ is the solution to the following score equations for the linear mixed-effects model: 
\begin{equation}
\begin{pmatrix}\sum_{i\in\bbS}q_i\bx_{1i}\bx_{1i}^{\T} & \sum_{i\in\bbS}q_i\bx_{1i}\bx_{2i}^{\T}\\
\sum_{i\in\bbS}q_i\bx_{2i}\bx_{1i}^{\T} & \sum_{i\in\bbS}q_i\bx_{2i}\bx_{2i}^{\T}+\gamma D_{q}^{-1}
\end{pmatrix}\begin{pmatrix}\beta\\
u
\end{pmatrix}=\begin{pmatrix}\sum_{i\in\bbS}q_i\bx_{1i}y_{i}\\
\sum_{i\in\bbS}q_i\bx_{2i}y_{i}
\end{pmatrix}.\label{eq:estimating-eqs}
\end{equation}
By rewriting $\widehat{\theta}_{\rm blup}$ as a weighted estimator $w^{\T}Y_{\bbS}$, the weights satisfy
\begin{align*}
    w^{\T}X_{\bbS}& = 
    1_{N}^{\T} X_{\bbU} 
    \left\{\sum_{i\in \bbS}q_i x_{i}x_{i}^{\T}+\gamma\text{diag}(0,D_{q}^{-1})\right\}^{-1} 
    \sum_{i\in\bbS} q_i x_{i}x_{i}^{\T}\\
    & = 
    1_{N}^{\T} X_{\bbU}
    \left[
    I_{p+q} - 
    \gamma\left\{\sum_{i\in \bbS}q_i x_{i}x_{i}^{\T}+\gamma\text{diag}(0,D_{q}^{-1})\right\}^{-1} 
    \text{diag}(0,D_{q}^{-1})
    \right],
\end{align*}
where the second equality is derived by repeatedly applying the Woodbury matrix identity. Therefore, minimizing (\ref{eq:soft_sq}) can be reformulated as a constrained optimization with exact calibration on $\bx_{1i}$ and approximate calibration on $\bx_{2i}$:
\begin{subequations}
\begin{align}
\min_{w}\quad &\sum_{i\in\bbU}\delta_i Q(w_i)=\sum_{i\in\bbS} q_i^{-1}(w_i-1)^2,\nonumber\\
 \text{s.t. } &\sum_{i\in\bbS}w_{i}x_{1i}=\sum_{i\in\bbU}x_{1i},\label{eq:calib_fix}\\
 & \sum_{i\in\bbS}w_{i}\bx_{2i}=\sum_{i\in\bbU}\bx_{2i}
 + \sum_{i\in\bbU}M_{\bbS}^{\T}\bx_{1i}+\sum_{i\in\bbU}R_{\bbS}^{\T}\bx_{2i},\label{eq:calib-random}
\end{align}
\label{eq:constrained}\end{subequations}
where $M_\bbS = -\gamma D_{12}D_{q}^{-1}$, $R_\bbS = -\gamma D_{22}D_{q}^{-1}$, and $\{\sum_{i\in \bbS}q_i x_{i}x_{i}^{\T}+\gamma\text{diag}(0,D_{q}^{-1})\}^{-1} = 
[D_{11},D_{12}\mid
D_{21},D_{22}]$. The solution is denoted by $\widehat{w}^{\sq}=\{\widehat{w}_i^{\sq}:i\in\bbS\}$, giving rise to $\widehat{\theta}_w^{\sq}=N^{-1}\sum_{i\in\bbS}$ $\widehat{w}_i^{\sq}y_i$, where the superscript \textsc{sq} reflects the use of the square loss. 

Proposition \ref{prop:BLUP-penalized} reveals the intrinsic connection between soft calibration based on square loss and $\widehat{\theta}_{\rm blup}$ under the mixed-effects model (\ref{mixed}).
\begin{proposition}\label{prop:BLUP-penalized}
Under Assumptions \ref{assum:laten_ignore} and \ref{assum:positivity} and the model (\ref{mixed}), we have $\widehat{\theta}_{w}^{\sq}=\widehat{\theta}_{{\rm blup}}$ for fixed $\gamma=\sigma_e^2/\sigma_u^2$.
\end{proposition}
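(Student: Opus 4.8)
The plan is to solve the quadratic program (\ref{eq:constrained}) in closed form and to recognise its minimiser as the weight vector that produces $\widehat{\theta}_{\rm blup}$. First I would stack the two calibration conditions (\ref{eq:calib_fix})--(\ref{eq:calib-random}) into a single affine constraint $\sum_{i\in\bbS}w_i x_i = b$ with $b\in\mathbb{R}^{p+q}$ collecting the two right-hand sides. Because $\sum_{i\in\bbS}q_i^{-1}(w_i-1)^2$ is strictly convex and the constraint is affine, the Karush--Kuhn--Tucker conditions are necessary and sufficient, so Lagrangian stationarity yields the unique minimiser $\widehat{w}_i^{\sq}=1+q_i x_i^{\T}\lambda$ with $\lambda = A^{-1}\bigl(b-\sum_{i\in\bbS}x_i\bigr)$, where $A=\sum_{i\in\bbS}q_i x_i x_i^{\T}$. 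Assumption \ref{assum:positivity} guarantees that $A$ is invertible and, with $D_q$ positive-definite, that $C=\{A+\gamma\,\mathrm{diag}(0,D_q^{-1})\}^{-1}$ is well defined; Assumption \ref{assum:laten_ignore} is what legitimises the sample-based fit as a predictor, i.e.\ it makes $(\widehat{\beta},\widehat{u})$ solving (\ref{eq:estimating-eqs}) target the population quantity.

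Next I would put $b$ in closed form. Using $M_{\bbS}=-\gamma D_{12}D_q^{-1}$ and $R_{\bbS}=-\gamma D_{22}D_q^{-1}$, the symmetry of $C$ (so $D_{12}^{\T}=D_{21}$ and $D_{22}^{\T}=D_{22}$), and the block structure of $\mathrm{diag}(0,D_q^{-1})$, the stacked right-hand side collapses to $b=\{I_{p+q}-\gamma\,\mathrm{diag}(0,D_q^{-1})C\}\sum_{i\in\bbU}x_i$; and since $AC = I_{p+q}-\gamma\,\mathrm{diag}(0,D_q^{-1})C$ by the defining relation of $C$, this reads $b=AC\sum_{i\in\bbU}x_i$. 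This is precisely the matrix identity displayed just before (\ref{eq:constrained}), read in reverse. On the other side, solving the score equations (\ref{eq:estimating-eqs}) gives $(\widehat{\beta}^{\T},\widehat{u}^{\T})^{\T}=C\sum_{i\in\bbS}q_i x_i y_i$, hence $\widehat{\theta}_{\rm blup}=N^{-1}\bigl(\sum_{i\in\bbU}x_i\bigr)^{\T}C\sum_{i\in\bbS}q_i x_i y_i = N^{-1}\sum_{i\in\bbS}\widetilde{w}_i y_i$, where $\widetilde{w}_i := q_i x_i^{\T}C\sum_{j\in\bbU}x_j$.

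It then remains to check that $\widetilde{w}$ is the minimiser of (\ref{eq:constrained}). It satisfies the stacked constraint, because $\sum_{i\in\bbS}\widetilde{w}_i x_i = AC\sum_{j\in\bbU}x_j = b$ by the previous step; and it has the requisite Lagrangian shape $1+q_i x_i^{\T}\lambda$, because $x_{1i}$ contains an intercept, so the all-ones vector over $\bbS$ lies in the span of $\{x_i:i\in\bbS\}$ and there is a coordinate selector $\nu_0$ with $x_i^{\T}\nu_0\equiv1$ on $\bbS$, whence $\widetilde{w}_i = q_i x_i^{\T}C\sum_j x_j = 1 + q_i x_i^{\T}\bigl(C\sum_j x_j - \nu_0\bigr)$ when $q_i\equiv1$ (the heteroscedastic case additionally uses that $(q_i^{-1})_{i\in\bbS}$ lies in the same span). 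By the uniqueness of the minimiser from the first paragraph, $\widehat{w}^{\sq}=\widetilde{w}$, and therefore $\widehat{\theta}_w^{\sq}=N^{-1}\sum_{i\in\bbS}\widehat{w}_i^{\sq}y_i = N^{-1}\sum_{i\in\bbS}\widetilde{w}_i y_i = \widehat{\theta}_{\rm blup}$. The main obstacle is the middle step: collapsing the two calibration targets into the single identity $b=AC\sum_{i\in\bbU}x_i$ requires juggling the block transposes, the symmetry of $C$, the explicit forms of $M_{\bbS}$ and $R_{\bbS}$, and the defining relation of $C$ all at once; once that is in hand, the remaining point is simply that the intercept in $x_{1i}$ is what lets the BLUP weights be written in the Lagrangian template, so that uniqueness closes the argument.
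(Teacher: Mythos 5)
Your proof is correct and rests on the same algebraic ingredients as the paper's --- Lagrangian stationarity for the quadratic loss, the identity $AC=I_{p+q}-\gamma\,\mathrm{diag}(0,D_q^{-1})C$ linking the calibration targets to the ridge matrix $C$, and the intercept in $x_{1i}$ to absorb the constant $1$ --- but it is organized along a genuinely different route. The paper derives the minimiser of the penalized objective (\ref{eq:soft_sq}) explicitly, obtaining $\widehat{w}^{\sq}=1_n+\Sigma_qX_{\bbS}\{X_{\bbS}^{\T}\Sigma_qX_{\bbS}+\gamma\,\mathrm{diag}(0_p,D_q^{-1})\}^{-1}(X_{\bbU}^{\T}1_N-X_{\bbS}^{\T}1_n)$, and then computes $\widehat{w}^{\sq\T}Y_{\bbS}$ directly, using $1_n^{\T}Y_{\bbS}=1_n^{\T}X_{1,\bbS}\widehat{\beta}+1_n^{\T}X_{2,\bbS}\widehat{u}$ to land on the BLUP. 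You instead work with the constrained program (\ref{eq:constrained}) --- which is how $\widehat{w}^{\sq}$ is actually defined --- characterize its unique minimiser by the KKT template $1+q_ix_i^{\T}\lambda$, and verify that the weights implied by $\widehat{\theta}_{\rm blup}$ are feasible and fit that template, so uniqueness closes the argument. This guess-and-verify organization has the advantage of not leaning on the equivalence between (\ref{eq:soft_sq}) and (\ref{eq:constrained}) asserted via the Woodbury display in the main text, and you make explicit a caveat the paper leaves implicit: writing the BLUP weights in the form $1+q_ix_i^{\T}\lambda$ (equivalently, the paper's step $1_n^{\T}Y_{\bbS}=1_n^{\T}X_{1,\bbS}\widehat{\beta}+1_n^{\T}X_{2,\bbS}\widehat{u}$) uses $q_i\equiv1$ or, more generally, that the vector $(q_i^{-1})_{i\in\bbS}$ lies in the column span of $X_{\bbS}$. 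One small correction: the invertibility of $A=\sum_{i\in\bbS}q_ix_ix_i^{\T}$ does not follow from the positivity Assumption \ref{assum:positivity}; it is the design condition of Assumption \ref{assm:regularity}(a) (and in any case uniqueness of the minimiser needs only strict convexity plus feasibility of the affine constraint set, not invertibility of $A$, so this does not affect the validity of your argument).
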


Through the lens of $\widehat{\theta}_{\rm blup}$ derived from (\ref{eq:soft_sq}) or (\ref{eq:estimating-eqs}), the soft-calibration estimator is optimal under model (\ref{mixed}) and consistent under any sampling design that satisfies the latent ignorability by Proposition \ref{prop:BLUP-penalized}. 

\subsection{Soft calibration for penalized propensity score weight estimation \label{subsec:soft_cal_loss}}

In the proof of Proposition \ref{prop:BLUP-penalized}, we show that the square loss function is equivalent to assuming a linear regression model for the calibration weight. However, it is possible to obtain negative values that may not be acceptable to practitioners. One advantage of casting the soft-calibration estimator as a solution to the constrained optimization
problem (\ref{eq:constrained}) is that it directly leads to a mixed-effects model for the calibration weight through the link function $w(\cdot)$, which allows flexible estimation by adopting other loss functions $Q(\cdot)$. In particular, we consider the \textit{dual problem} of (\ref{eq:constrained}) for optimization purposes, which is to minimize a penalized convex function:
\begin{subequations}
\label{eq:est_alpha_beta}
\begin{align}
G(c) &= - 
\sum_{i\in\bbU}\delta_i Q\{w(c^{\T}x_i)\} + 
\left\{\sum_{i\in\bbS}w(c^{\T}x_i)x_{i}^{\T}-
(1_N^{\T}X_{1,\bbU}, 1_N^{\T}X_{2,\bbU}+NT_r)\right\}c\label{eq:soft-penalized}\\
&=\sum_{i\in\bbU}\delta_i g(c^{\T}x_i) - 
(1_N^{\T}X_{1,\bbU})c_1-
(1_N^{\T}X_{2,\bbU} + NT_r)c_2, \label{eq:soft-dual}
\end{align}
\end{subequations}
where $g(\cdot)$ is the convex conjugate function of $Q(\cdot)$, $T_r=N^{-1}\sum_{i\in\bbU}(\bx_{1i}^{\T}M_{\bbS}+\bx_{2i}^{\T}R_{\bbS})$ is the adjustment for soft calibration, and $c=(c_1^\T, c_2^\T)^\T$ is a vector of Lagrange multipliers with $c_2 = D_{\delta}\bu$ for a suitable invertible matrix $D_{\delta}$, featuring a shared random-effects model with the outcome \citep{gao2004shared}. Table \ref{tab:obj_propensity} provides some examples of loss functions $Q(\cdot)$ and their associated $g(\cdot)$ and $w(\cdot)$. These loss functions belong to a general class of empirical minimum discrepancy measures \citep{read2012goodness}, which can be considered as measuring the aggregate distance between the weights $w$ and a $n$-vector of uniform weights $1_{n}$.

\begin{table}[htbp]
\caption{Correspondence of loss functions $Q(w_i)$, the convex conjugate functions $g(z_i)$ and the weight models $w(z_i)$ when weights are adjusted to satisfy the calibration constraints for the first moments of $\bx_{i}$ 
\label{tab:obj_propensity}}
\vspace{0.15cm}
\resizebox{\textwidth}{!}{%
\begin{tabular}{llllll}
\toprule
  & $Q(w_i)$& $g(z_i)$ & $w(z_i)$  \\
  \midrule
Squared loss & $q_i^{-1}(w_{i}-1)^{2}/2$  & $z_i + q_iz_i^2/2$& $1+q_iz_i$\\
Entropy divergence  & $q_i^{-1}\{w_{i}\log(w_{i})-w_{i}+1\}$ & $q_i^{-1}\{\exp(q_i z_i)-1\}$& $\exp(q_iz_{i})$ \\
Empirical Likelihood & $q_i^{-1}\{-\log(w_{i})-1+w_{i}\}$ & 
$-q_i^{-1}\log(1-q_i z_i)$ & $(1-q_iz_{i})^{-1}$ \\
Maximum entropy & $q_i^{-1}(w_{i}-1)\{\log(w_{i}-1)-1\}$ & 
$z_i + q_i^{-1}\exp(q_i z_i)$& $1+\exp(q_iz_{i})$ \\
\bottomrule
\end{tabular}}
\end{table}

\begin{proposition}\label{prop:dual-soft}
    If $\widehat{c}$ is the minimizer of (\ref{eq:soft-dual}), the calibration weights $w(\widehat{c}^{\T}x_{i})$ attain the soft calibration conditions (\ref{eq:calib_fix}) and (\ref{eq:calib-random}). 
\end{proposition}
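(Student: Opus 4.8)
The plan is to exploit that $G$ in (\ref{eq:soft-dual}) is a smooth convex function of $c$ whose gradient equals, up to sign, the calibration residual; the first-order stationarity condition at the minimizer $\widehat{c}$ is then \emph{literally} the system (\ref{eq:calib_fix})--(\ref{eq:calib-random}). So, modulo the conjugacy bookkeeping, the argument reduces to one gradient computation.

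First I would record the Fenchel facts linking $Q$, $g$, and $w$. Since $g$ is the convex conjugate of $Q$ and every $Q$ in Table \ref{tab:obj_propensity} is strictly convex and essentially smooth on its domain, $g$ is differentiable and the first-order characterization of the conjugate gives $g'(z)=\arg\max_{w}\{wz-Q(w)\}$, i.e.\ $g'(z)$ is the unique $w$ solving $Q'(w)=z$; but that is exactly the weight model $w(z)$. One verifies the identity $w(\cdot)=g'(\cdot)$ directly from Table \ref{tab:obj_propensity} (squared loss: $g'(z_i)=1+q_iz_i=w(z_i)$; entropy divergence: $g'(z_i)=\exp(q_iz_i)=w(z_i)$; etc.). With $w=g'$, the equality of (\ref{eq:soft-penalized}) and (\ref{eq:soft-dual}) is nothing but the Fenchel relation $Q(w(z))=z\,w(z)-g(z)$.

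Next I would differentiate (\ref{eq:soft-dual}) in $c=(c_1^{\T},c_2^{\T})^{\T}$. Writing $b_1=\sum_{i\in\bbU}x_{1i}$ and $b_2=\sum_{i\in\bbU}x_{2i}+\sum_{i\in\bbU}(M_{\bbS}^{\T}x_{1i}+R_{\bbS}^{\T}x_{2i})$, so that $(b_1^{\T},b_2^{\T})=(1_N^{\T}X_{1,\bbU},\,1_N^{\T}X_{2,\bbU}+NT_r)$ by the definition of $T_r$, one gets
\begin{equation*}
\nabla_c G(c)=\sum_{i\in\bbU}\delta_i\,g'(c^{\T}x_i)\,x_i-\begin{pmatrix}b_1\\ b_2\end{pmatrix}=\sum_{i\in\bbS}w(c^{\T}x_i)\,x_i-\begin{pmatrix}b_1\\ b_2\end{pmatrix}.
\end{equation*}
Since $g$ is convex, $G$ is convex (a sum of compositions of $g$ with linear maps plus a linear term), so its minimizer $\widehat{c}$ satisfies $\nabla_c G(\widehat{c})=0$, that is, $\sum_{i\in\bbS}w(\widehat{c}^{\T}x_i)x_i=(b_1^{\T},b_2^{\T})^{\T}$. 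Reading off the first $p$ coordinates yields (\ref{eq:calib_fix}) and the last $q$ coordinates yield (\ref{eq:calib-random}). Equivalently, this is KKT stationarity of the Lagrangian dual of (\ref{eq:constrained}), with $c$ the vector of multipliers.

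The step I expect to require the most care is the justification of $\nabla_c G(\widehat{c})=0$ itself, namely that $\widehat{c}$ is attained in the interior of $\operatorname{dom}(G)$ where $g$ is differentiable. Differentiability of $g$ is the Fenchel regularity statement above and holds for all entries of Table \ref{tab:obj_propensity}. Interiority is the real subtlety for the losses whose $g$ has a restricted domain (empirical likelihood needs $q_iz_i<1$; maximum entropy is analogous): there I would invoke Assumption \ref{assum:positivity}, which makes the primal problem (\ref{eq:constrained}) strictly feasible, so Slater's condition delivers strong duality and attainment of the dual optimum at an interior point, legitimizing the stationarity condition used above.
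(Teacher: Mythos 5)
Your proposal is correct and follows essentially the same route as the paper: both rest on the convex-conjugate identity $w(\cdot)=g'(\cdot)$ and the convexity of $g$ (the paper's Lemma S\ref{lem:unconstrained_convex}), with the paper deriving (\ref{eq:soft-dual}) as the Lagrangian dual of (\ref{eq:constrained}) via \citet{tseng1987relaxation} and you instead taking (\ref{eq:soft-dual}) as given and reading the calibration conditions off the first-order condition $\nabla_c G(\widehat{c})=0$. Your explicit gradient computation and the attention to interiority of $\widehat{c}$ in $\operatorname{dom}(G)$ (which the paper delegates to the hypotheses of Remark \ref{rmk:unique}) make the final step, left implicit in the paper, fully rigorous.
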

Proposition \ref{prop:dual-soft} is justified since (\ref{eq:soft-dual}) gives a dual optimization for solving the constrained optimization in (\ref{eq:constrained}). Furthermore, the penalized estimation in (\ref{eq:soft-penalized}) is closely related to the $L_2$ penalized propensity score weight estimator, which is, however, not optimal as its penalty term does not account for the correlation structure of the mixed effects; see $\S$\ref{sec:sim_res_L_2} of the Supplementary Material for numerical details. In view of the Lagrangian function (\ref{eq:soft-penalized}), the soft-calibration estimator enforces an exact calibration on $x_{1i}$ while penalizing a large discrepancy of imbalances between $\sum_{i\in \bbS} w_i \bx_{2i}$ and $\sum_{i\in\bbU} \bx_{2i}$, thus avoiding posing overly stringent constraints.

\begin{remark}\label{rmk:unique}
Let $\mathbb{A}=\{w:w^{\T}X_{\bbS}=1_N^{\T}X_{\bbU}+(0_p^{\T},NT_r)\}$ be a set of solutions to the soft calibration conditions. Assume that $Q(w)$ is strictly convex and smooth, defined in $\mathbb{W}$ that includes $1$. Assume that $\mathbb{W}$ is either a compact set or an open set with $\lim_{w\rightarrow \partial \mathbb{W}}|Q(w)|=\infty$, where $\partial \mathbb{W}$ denotes the boundary of the set $\mathbb{W}$, (\ref{eq:est_alpha_beta}) has a unique optimum with probability $1$ when $\mathbb{A}\cap \mathbb{W}\not=\emptyset$.
\end{remark}

In finite samples, a unique optimum of (\ref{eq:est_alpha_beta}) may not exist due to conflicting conditions imposed for calibration. For example, calibration weights are restricted to an overly bounded support $\mathbb{W}$ to reduce the impact of outliers; see $\S$\ref{sec:sim_res_bound}, which might render $\mathbb{A}\cap \mathbb{W}$ empty. One remedy for this issue is to adopt a Moore-Penrose generalized inverse \citep{devaud2019deville} for the Newton-type method to achieve a solution even when $\mathbb{A}\cap \mathbb{W}=\emptyset$.

\section{Main theory}\label{sec:Main-theory}

\subsection{Bias correction and asymptotic properties}\label{sec:asymptotic}

In this section, we establish the asymptotic properties of $\widehat{\theta}_{w}$ under the general loss function $Q(w)$ and adopt the joint randomization framework for inference, which considers both the super-population mixed-effects model $\zeta$ and the sampling mechanism $\delta$ \citep{isaki1982}. Before delving into the technical details, we assume the following regularity conditions.

\begin{assumption}[Regularity conditions]\label{assm:regularity}(a) The matrices $n^{-1}X_{\bbS}^\T X_\bbS = \Sigma_n$ for any sample $\mathbb{S}$, and $N^{-1}X_{\bbU}^\T X_{\bbU}=\Sigma_N$ are  positive-definite;
(b) There exists some constant $C$ such that $\|\bx_i\|^2<q C$ for all $i\in\bbU$;
(c)  The finite population is a random sample of a super-population model (\ref{mixed}) satisfying $N^{-1}\sum_{i\in\bbU}y_i^{2+\alpha}<\infty$ for some $\alpha>0$ with $N\rightarrow \infty$.
\end{assumption}

Assumptions \ref{assm:regularity}(a) and (b) are standard regularity conditions related to the auxiliary variables \citep{portnoy1984asymptotic,dai2018broken,chauvet2022asymptotic}. Assumption \ref{assm:regularity}(c) requires the moment conditions to employ the central limit theorem. In contrast to hard calibration, the inexact calibration scheme for
$\bx_{2i}$ involves a correction term on the right-hand side of (\ref{eq:calib-random}), incurring an additional term in $\widehat{\theta}_{w}-\theta_{N}$:
\begin{equation}
    \widehat{\theta}_{w}-\theta_{N}=
N^{-1}\gamma_n(1_{N}^{\T}X_{1,\bbU}D_{12}+1_{N}^{\T}X_{2,\bbU}D_{22})D_{q}^{-1}u + 
N^{-1}\sum_{i\in\bbU}(\delta_i w_i-1)e_i,
\label{eq:extra:term}
\end{equation}
where $\gamma_n$ is considered as a finite-sample tuning parameter for $\gamma$. In $\S$\ref{subsec:Data-adaptive-tuning-parameter}, we propose a data-adaptive approach to select $\gamma_n$ that minimizes the estimated mean squared error of the soft-calibration estimator. 

The following theorem characterizes the asymptotic properties of $\widehat{\theta}_{w}$.
\begin{theorem}\label{thm:theta_linear}
Suppose Assumptions \ref{assum:laten_ignore}-\ref{assm:regularity}, the conditions for $Q(w)$ in Remark \ref{rmk:unique}
hold and $\gamma_n = o(n^{1/2}q^{-1/2})$, the soft-calibration estimator $\widehat{\theta}_{w}$ satisfies
$\widehat{\theta}_{w}-\theta_N=N^{-1}\sum_{i\in\bbU}\psi_{i}(c^{*})-\theta_N+o_{\P}(n^{-1/2})$, where $c^*$ is the solution to $\E\{\partial G(c)/\partial c\mid X_{\bbU},u\}=0$,
\begin{equation}
\psi_{i}(c^{*})=B(c^{*})\bx_{i,\rmsc}+\delta_{i}w(c^{*\T}\bx_{i})\eta_{i}(c^{*}),\quad 
\eta_{i}(c^{*})=y_{i}-B(c^{*})\bx_{i},\label{eq:IF_theta_w}
\end{equation}
$B(c^*) = \left\{
 \sum_{i\in\bbU} \delta_i w'(c^{*\T }x_i)x_iy_i
 \right\}\left\{
 \sum_{i\in \bbU} \delta_i
 w'(c^{*\T }x_i) x_ix_i^{\T}
 \right\}^{-1}$, and $x_{i,\rmsc} = \{x_{1i}^{\T}, x_{1i}^{\T} M_{\bbS} + x_{2i}^{\T}(I_q + R_{\bbS})\}^\T$. As a result, if either the outcome $y_{i}$ follows a linear mixed-effects model or $Q(w)$ entails a correct propensity score model, we have $n^{1/2}(\widehat{\theta}_{w}-\theta_N)\rightarrow N(0,V_{1}+V_{2})$ as $n\rightarrow \infty$,
where
$$
V_{1}=\lim_{n\rightarrow\infty}\frac{n}{N^{2}}\E_{\zeta}\left[{\var}_{\delta}\left\{ \sum_{i\in\bbU}\delta_{i}w(c^{*\T}\bx_{i})\eta_{i}(c^{*})\mid X_{\bbU},\bu,Y_{\bbS}\right\} \mid X_{\bbU}\right],
$$
and 
\[
V_{2}=\lim_{n\rightarrow\infty}\frac{n}{N^{2}}{\var}_{\zeta}\left[\E_{\delta}\left\{ \sum_{i\in\bbU}\psi_{i}(c^{*})\mid X_{\bbU},\bu,Y_{\bbS}\right\} 
\mid X_{\bbU}\right].
\]
\end{theorem}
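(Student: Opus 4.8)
The plan is to view $\widehat\theta_w$ as a smooth functional of the $Z$-estimator $\widehat c$ produced by the dual problem, propagate a one-step linearization of $\widehat c$ through this functional, and finish with a double-robustness argument and a variance decomposition. By Proposition~\ref{prop:dual-soft}, $\widehat c$ satisfies the stationarity equation $\partial G(\widehat c)/\partial c=\sum_{i\in\bbU}\delta_i w(\widehat c^{\T}x_i)x_i-\sum_{i\in\bbU}x_{i,\rmsc}=0$, where I use $g'=w$ and the identity $(1_N^{\T}X_{1,\bbU},\,1_N^{\T}X_{2,\bbU}+NT_r)=\sum_{i\in\bbU}x_{i,\rmsc}^{\T}$, which follows directly from the definitions of $M_{\bbS},R_{\bbS},T_r$. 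I would first establish $\widehat c\to c^*$ in probability, using strict convexity of $G$ (Remark~\ref{rmk:unique}), the defining relation $\E\{\partial G(c^*)/\partial c\mid X_{\bbU},u\}=0$ together with latent ignorability (Assumption~\ref{assum:laten_ignore}) to replace $\E_\delta(\delta_i\mid\cdot)$ by the true propensity, and a uniform law of large numbers whose envelope is controlled by Assumptions~\ref{assum:positivity} and \ref{assm:regularity}(b); in finite samples where the calibration system is incompatible, the Moore--Penrose Newton step is used and yields the same limiting equation.

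Next I would Taylor-expand the stationarity equation about $c^*$. Since $\partial^{2}G(c)/\partial c\,\partial c^{\T}=\sum_{i\in\bbU}\delta_i w'(c^{\T}x_i)x_ix_i^{\T}$ (using $g''=w'$) and this Hessian, scaled by $n^{-1}$, converges to a positive-definite limit by Assumptions~\ref{assum:positivity} and \ref{assm:regularity}(a), I obtain $\widehat c-c^*=-H_n^{-1}\{\sum_{i\in\bbU}\delta_i w(c^{*\T}x_i)x_i-\sum_{i\in\bbU}x_{i,\rmsc}\}+\mathrm{rem}$ with $H_n=\sum_{i\in\bbU}\delta_i w'(c^{*\T}x_i)x_ix_i^{\T}$, and the second-order remainder is bounded using $\|x_i\|^2<qC$ and $\gamma_n=o(n^{1/2}q^{-1/2})$. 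Substituting this into $\widehat\theta_w=N^{-1}\sum_{i\in\bbU}\delta_i w(\widehat c^{\T}x_i)y_i$, itself Taylor-expanded about $c^*$, and recognizing $B(c^*)=\{\sum_{i\in\bbU}\delta_i w'(c^{*\T}x_i)y_ix_i^{\T}\}H_n^{-1}$ as the weighted-regression coefficient that emerges, all the stochastic-expansion pieces collapse---again via $\sum_{i\in\bbU}x_{i,\rmsc}^{\T}=(1_N^{\T}X_{1,\bbU},\,1_N^{\T}X_{2,\bbU}+NT_r)$---into $\widehat\theta_w-\theta_N=N^{-1}\sum_{i\in\bbU}\psi_i(c^*)-\theta_N+o_{\P}(n^{-1/2})$ with $\psi_i$ and $\eta_i$ exactly as stated; the $\gamma_n$-dependent correction displayed in (\ref{eq:extra:term}) is absorbed here, since $D_{12},D_{22}=O_{\P}(n^{-1})$ forces it to be $o_{\P}(n^{-1/2})$ under the stated rate.

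For the limiting law I would write $\psi_i(c^*)-y_i=\{\delta_i w(c^{*\T}x_i)-1\}\eta_i(c^*)+B(c^*)(x_{i,\rmsc}-x_i)$ and note $x_{i,\rmsc}-x_i=(0_p^{\T},\,x_{1i}^{\T}M_{\bbS}+x_{2i}^{\T}R_{\bbS})^{\T}$ is $O_{\P}(\gamma_n n^{-1})$ entrywise, so the second summand averages to $o_{\P}(n^{-1/2})$. For the leading average of $\{\delta_i w(c^{*\T}x_i)-1\}\eta_i(c^*)$ I would invoke double robustness: if $Q$ induces a correctly specified propensity model then $\pi_i^* w(c^{*\T}x_i)=1$, so the term is conditionally mean-zero given $(X_{\bbU},u,Y_{\bbS})$; if instead $y_i$ follows (\ref{mixed}), then $\eta_i(c^*)=e_i+x_i^{\T}\{(\beta^{\T},u^{\T})^{\T}-B(c^*)^{\T}\}$, the $e_i$-part has $\zeta$-mean zero given $(X_{\bbU},u)$, and the remainder is negligible because $\sum_{i\in\bbU}\{\pi_i^* w(c^{*\T}x_i)-1\}x_i=\sum_{i\in\bbU}(x_{i,\rmsc}-x_i)=O_{\P}(\gamma_n)$ while $B(c^*)-(\beta^{\T},u^{\T})$ is $o_{\P}(1)$ at a rate governed by Assumption~\ref{assm:regularity}(a). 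After a further linearization of $B(c^*)$ about its probability limit, a triangular-array central limit theorem---with the Lindeberg condition verified from the $(2+\alpha)$-moment bound in Assumption~\ref{assm:regularity}(c) and the weight bounds implied by Assumption~\ref{assum:positivity}---gives asymptotic normality, and the iterated-variance identity $\var(\cdot)=\E\{\var(\cdot\mid X_{\bbU},u,Y_{\bbS})\}+\var\{\E(\cdot\mid X_{\bbU},u,Y_{\bbS})\}$ splits the limit into precisely $V_1$ and $V_2$.

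The step I expect to be the main obstacle is the uniform control of the remainder terms when $q$ is allowed to grow with $n$: both the second-order term in the expansion of $\widehat c$ and the cross term between $\widehat c-c^*$ and the $\gamma_n$-adjustment must be shown to be $o_{\P}(n^{-1/2})$, and this is exactly where the interplay of $\|x_i\|^2<qC$, the spectral bounds in Assumption~\ref{assm:regularity}(a), and the bandwidth condition $\gamma_n=o(n^{1/2}q^{-1/2})$ has to be exploited with care; a secondary difficulty is making the consistency argument for $\widehat c$ rigorous when $\mathbb{A}\cap\mathbb{W}=\emptyset$, which is what forces the generalized-inverse formulation and blocks a textbook $M$-estimation appeal.
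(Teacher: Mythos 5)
Your proposal follows essentially the same route as the paper's proof: a Taylor/Z-estimation linearization of $\widehat c$ about $c^*$ (the paper's Lemma on $\|\widehat c-c^*\|=O_{\P}(n^{-1/2})$), substitution into $\widehat\theta_w$ to obtain the influence function $\psi_i(c^*)$ via the standard linearization formula, a case-by-case double-robustness argument resting on the calibration identity $\sum_{i}\pi_i w(c^{*\T}x_i)x_i=\sum_{i}x_{i,\rmsc}$ together with the rate $\gamma_n=o(n^{1/2}q^{-1/2})$, and a central limit theorem with the iterated-variance split into $V_1+V_2$. The only cosmetic differences are that in the outcome-correct case the paper bounds the conditional $\zeta$-variances of the bias term and the error term directly rather than decomposing $\eta_i(c^*)$, and it separately verifies (its Lemma on $\|\widehat c-c_t^*\|$) that the soft-calibration perturbation leaves $c^*$ asymptotically equal to the true propensity parameter $c_t^*$, a point you use implicitly.
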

Theorem \ref{thm:theta_linear} states that $\widehat{\theta}_{w}$ is doubly robust as its consistency requires the outcome following a linear mixed-effects model or the propensity score being correctly specified. We now estimate $V_1$ and $V_2$ by $\widehat{V_1}$ and $\widehat{V_2}$, respectively, in Theorem \ref{thm:var_est}.
\begin{theorem}\label{thm:var_est} Under the assumptions in Theorem \ref{thm:theta_linear},  we have $\widehat{V}_1 = nN^{-2}\sum_{i\in \bbS} w(\widehat{c}^\T\bx_i)^2 {\eta}_i(\widehat{c})^2 \rightarrow V_1$ and $\widehat{V}_2 = 
nN^{-2}\sum_{i\in\bbS} w(\widehat{c}^{\T}x_{i})(y_{i}-x_{1i}^{\T}\widehat{\beta})^2 \rightarrow V_2$ in probability, where $\widehat{\beta}= D_{11}\sum_{i\in\bbS}q_ix_{1i}y_i + 
D_{12}\sum_{i\in\bbS}q_ix_{2i}y_i$
\end{theorem}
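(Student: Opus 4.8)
The plan is to read $\widehat V_1$ and $\widehat V_2$ as empirical, sandwich‑type versions of the two variance components of the influence function $\psi_i(c^*)$ in Theorem~\ref{thm:theta_linear}, and to prove each convergence in three stages: (a) establish $\widehat c\rightarrow c^*$ and $\widehat\beta\rightarrow\beta$ in probability, where $\beta$ is the fixed effect of model~(\ref{mixed}); (b) use smoothness of the link $w(\cdot)$ to replace $(\widehat c,\widehat\beta)$ by $(c^*,\beta)$ inside the sums at a $o_{\P}(1)$ cost, simultaneously discarding the $\gamma_n$‑order corrections carried by $M_{\bbS}$, $R_{\bbS}$, $T_r$ and $x_{i,\rmsc}$; and (c) apply a weighted law of large numbers, after conditioning on $(X_{\bbU},\bu,Y_{\bbS})$, to identify the surviving sample averages with $V_1$ and $V_2$.

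Stage (a) is essentially inherited from the proof of Theorem~\ref{thm:theta_linear}: $\widehat c$ is the unique minimiser of the convex $G$ (Remark~\ref{rmk:unique}), whose population analogue is minimised at $c^*$, so a standard convexity/M‑estimation argument gives $\widehat c\rightarrow c^*$; and $\widehat\beta=D_{11}\sum_{i\in\bbS}q_ix_{1i}y_i+D_{12}\sum_{i\in\bbS}q_ix_{2i}y_i$ is an explicit linear functional of the sample whose consistency for $\beta$ follows from standard mixed‑model least‑squares theory together with Assumptions~\ref{assum:positivity} and~\ref{assm:regularity}, using latent ignorability (Assumption~\ref{assum:laten_ignore}) to guarantee that the selected outcomes carry the conditional law $y_i\mid(x_i,\bu)$. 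For stage (b), $w(\cdot)$ and $w'(\cdot)$ are continuous and, on the range fixed by Remark~\ref{rmk:unique} and Assumption~\ref{assm:regularity}(b), bounded; a first‑order Taylor expansion then bounds $\widehat V_1-\frac{n}{N^2}\sum_{i\in\bbS}w(c^{*\T}x_i)^2\eta_i(c^*)^2$ by $\|\widehat c-c^*\|$ times an $O_{\P}(1)$ sample average (via Assumption~\ref{assm:regularity}(b)--(c)), hence $o_{\P}(1)$, and likewise handles the $\widehat\beta$ replacement in $\widehat V_2$; the perturbations of $\eta_i$, $x_{i,\rmsc}$ and $B(c^*)$ induced by $M_{\bbS}=-\gamma_n D_{12}D_q^{-1}$, $R_{\bbS}=-\gamma_n D_{22}D_q^{-1}$ and $T_r$ are of order $\gamma_n q^{-1/2}=o(1)$ by the rate $\gamma_n=o(n^{1/2}q^{-1/2})$ and Assumption~\ref{assm:regularity}(a)--(b), so $\eta_i(c^*)$ may be replaced by $y_i-B(c^*)x_i$ and, in $\widehat V_2$, the residual may be taken as $y_i-x_{1i}^{\T}\beta$, all up to negligible terms.

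Stage (c) carries the real content. For $\widehat V_1$, write the reduced sum as $\frac{n}{N^2}\sum_{i\in\bbU}\delta_i a_i^2$ with $a_i=w(c^{*\T}x_i)\eta_i(c^*)$ and condition on $(X_{\bbU},\bu,Y_{\bbS})$: under the sampling design, $\var_{\delta}\{\sum_i\delta_i a_i\mid\cdot\}=\sum_i\pi_i(1-\pi_i)a_i^2$ whereas $\E_{\delta}[\frac{n}{N^2}\sum_i\delta_i a_i^2\mid\cdot]=\frac{n}{N^2}\sum_i\pi_i a_i^2$, and since the calibration weights obey $w(c^{*\T}x_i)=O_{\P}(N/n)$ while $\pi_i=O(n/N)$, both agree with $V_1$ up to a term of order $\frac{n}{N^2}\sum_i\pi_i^2 a_i^2=O_{\P}(n/N)$, negligible in the adopted asymptotic regime; a truncation/WLLN bound resting on Assumption~\ref{assm:regularity}(c) then upgrades this expectation match to $\widehat V_1\rightarrow V_1$ in probability. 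For $\widehat V_2$, one invokes the law‑of‑total‑variance split from the proof of Theorem~\ref{thm:theta_linear}: the exact and approximate calibration constraints~(\ref{eq:calib_fix})--(\ref{eq:calib-random}) force $\pi_i w(c^{*\T}x_i)\to 1$ in the aggregate, so $\E_{\delta}\{\sum_i\psi_i(c^*)\mid X_{\bbU},\bu,Y_{\bbS}\}$ collapses---modulo the $\gamma_n$‑order and cross terms already controlled in stage (b)---to a linear statistic in the mixed‑model residuals $\{y_i-x_{1i}^{\T}\beta\}$, whose per‑unit squared contribution, weighted by $w(c^{*\T}x_i)$ and thinned by $\delta_i$, is precisely the summand of $\widehat V_2$; a WLLN then yields $\widehat V_2\rightarrow V_2$ in probability.

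The main obstacle is stage (c) for $\widehat V_2$. Unlike $V_1$, which is a genuine design variance that submits directly to a Horvitz--Thompson‑type identity, $V_2$ is the super‑population variance of a $\delta$‑averaged quantity, so one must show that the BLUP/calibration structure makes $\E_{\delta}\{\sum_i\psi_i(c^*)\mid\cdot\}$ reduce cleanly---and \emph{uniformly}, not merely in mean---to a linear statistic in the mixed‑model residuals before taking $\var_{\zeta}(\cdot\mid X_{\bbU})$; making the cross terms and the $\gamma_n$‑order remainders in that reduction vanish is exactly where Assumptions~\ref{assm:regularity}(b)--(c) and the rate $\gamma_n=o(n^{1/2}q^{-1/2})$ do the work. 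A parallel but lighter piece of bookkeeping is also required in stage (c) for $\widehat V_1$, namely verifying that replacing the random matrix $B(\widehat c)$ by its deterministic limit inside $\var_{\delta}$ is legitimate.
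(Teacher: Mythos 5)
The paper itself offers essentially no proof of this theorem: the supplement's proof index (\S S1.1) skips it, and the main text only remarks that the estimators follow ``by applying the standard variance estimator formula with $c^*$ replaced by $\widehat{c}$.'' So your proposal is supplying far more detail than the paper does, and there is no written argument to compare it against line by line. On their own merits, your stages (a) and (b) are sound and are what one would expect: consistency of $\widehat{c}$ is Lemma S1, consistency of $\widehat{\beta}$ follows from the identities $E_1^{\T}X_{1,\bbS}=I_p$, $E_1^{\T}X_{2,\bbS}=-\gamma_n D_{12}D_q^{-1}$ already derived in \S S1.8, and the plug-in replacement plus the disposal of the $M_{\bbS},R_{\bbS},T_r$ corrections under $\gamma_n=o(n^{1/2}q^{-1/2})$ is routine. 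Your stage (c) for $\widehat{V}_1$ is also essentially complete: under independent (Poisson-type) selection, $E_{\delta}\{\sum_i\delta_i a_i^2\}=\sum_i\pi_i a_i^2$ matches $\sum_i\pi_i(1-\pi_i)a_i^2$ up to a relative $O(n/N)$ finite-population correction, which you correctly flag and which is consistent with the paper's own remark that $V_2/V_1=O(n/N)$.

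The genuine gap is exactly where you yourself locate ``the main obstacle'': the identification $\widehat{V}_2\rightarrow V_2$ is asserted, not proved. With $\pi_i w(c^{*\T}x_i)\approx 1$ one gets $E_{\delta}\{\sum_{i\in\bbU}\psi_i(c^*)\mid X_{\bbU},\bu,Y_{\bbS}\}\approx \sum_{i\in\bbU}y_i$ plus $\gamma_n$-order terms, so $V_2$ is (up to centering) $\tfrac{n}{N^2}\var_{\zeta}(\sum_{i\in\bbU}y_i\mid X_{\bbU})$, a \emph{variance of a sum}. The residuals $y_i-x_{1i}^{\T}\beta=x_{2i}^{\T}\bu+e_i$ are correlated across $i$ through the shared random effect, so this variance contains the $O(N^2)$ cross-covariance block $\sigma_u^2\,1_N^{\T}X_{2,\bbU}D_qX_{2,\bbU}^{\T}1_N=\sigma_u^2\sum_{i\neq j}x_{2i}^{\T}D_qx_{2j}+\ldots$, whereas $\widehat{V}_2=\tfrac{n}{N^2}\sum_{i\in\bbS}w(\widehat{c}^{\T}x_i)(y_i-x_{1i}^{\T}\widehat{\beta})^2$ converges (by your own stage-(b)/(c) logic) to the sum of \emph{marginal} second moments $\tfrac{n}{N^2}\sum_{i\in\bbU}\{\sigma_u^2 x_{2i}^{\T}D_qx_{2i}+q_i^{-1}\sigma_e^2\}$. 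In the motivating cluster-indicator case these differ by a factor of the cluster sizes ($\sum_k N_k^2$ versus $\sum_k N_k$). Your write-up never shows that the cross terms vanish, are absorbed by the conditioning on $\bu$ in the inner expectation, or are cancelled by the $-\theta_N$ centering that the paper's displayed $V_2$ omits; saying that ``the BLUP/calibration structure makes the reduction clean'' is a placeholder for the one computation the theorem actually requires. Until you either pin down which randomness the outer $\var_{\zeta}(\cdot\mid X_{\bbU})$ ranges over (given that $\bu$ sits in the inner conditioning set) and verify that only the independent $e_i$'s survive, or explicitly bound the $u$-cross terms, the claim $\widehat{V}_2\rightarrow V_2$ is not established.
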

Theorem \ref{thm:var_est} estimates $V_{1}$ and $V_2$ by applying the standard variance estimator formula with $c^{*}$ replaced by $\widehat{c}$. As \citet{shao99} show that
the order of $V_{2}/V_{1}$ is $O(n/N)$; thus if the sampling fraction $n/N$ is negligible, we only need to estimate $V_{1}$.

\begin{remark}\label{rmk:bc-estimator}
   In Theorem \ref{thm:theta_linear}, we need $\gamma_n = o(n^{1/2}q^{-1/2})$ to make the bias term (\ref{eq:extra:term}) negligible. If the bias term does not dwindle away, one can use a bias-corrected estimator $\widehat{\theta}_{\dr}$ to correct the remaining bias after soft calibration weighting. Denote $\widehat{\theta}_{\dr}=\widehat{\theta}_w - N^{-1}\sum_{i\in\bbU}\{\delta_iw(\widehat{c}^\T\bx_i)-1\}\widehat{\mu}_i$, which combines soft calibration with the fitted outcomes $\widehat{\mu}_i$ by flexible modeling, similar to \citet{ben2021multilevel} and \citet{Vahe2021High}. 

 As an example, if we combine the soft-calibration estimator with best linear unbiased prediction $\widehat{\mu}_i=x_{1i}^{\T}\widehat{\beta}+x_{2i}^{\T}\widehat{u}$, $\gamma_n$ is allowed to grow faster with $n$ than requested in Theorem \ref{thm:theta_linear} under the linear mixed-effects model, implying that $\widehat{\theta}_{\dr}$ is more robust than $\widehat{\theta}_w$ against the rate requirement for $\gamma_n$. Other choices for outcome models can also effectively reduce the left-over bias as long as they can approximate the true outcome $\E_{\zeta}(y_i\mid x_i)$ well enough. A detailed discussion of its asymptotic properties is deferred to $\S$\ref{sec:proof_rmk_dr} of the Supplementary Material. 
\end{remark}

\subsection{Data-adaptive tuning parameter selection\label{subsec:Data-adaptive-tuning-parameter}}

To properly choose the tuning parameter $\gamma_n$, we propose a data-adaptive cross-fitting strategy that targets minimizing the mean squared error of the soft-calibration estimator $\widehat{\theta}_{w}$. Specifically, we divide the data into $\mathcal{B}$ disjoint groups $\mathcal{I}_b, b=1,\cdots, \mathcal{B}$. Let $\widehat{c}_{-k}$ and $\widehat{\beta}_{-k}$ denote the estimator of $c^*$ and $\beta$ computed using the observations from all the folds except the $k$-th fold based on the soft conditions with the tuning parameter $\gamma_n$. The estimated mean squared error will be
\begin{align*}
    &\text{MSE}(\widehat{\theta}_w;\gamma_n)=
    \frac{1}{\mathcal{B}}
    \sum_{k=1}^{\mathcal{B}}
    \left[
    \left\{
    \frac{\mathcal{B}}{N}
    \sum_{i\in \mathcal{I}_k}\delta_i w(\widehat{c}_{-k}^\T x_i)y_i\right\}-\theta_N
    \right]^2\\
    &+\frac{1}{\mathcal{B}}\sum_{k=1}^{\mathcal{B}}
    \frac{\mathcal{B}^2}{N^2}
    \left[
    \sum_{i\in\mathcal{I}_k}
    \delta_i w(\widehat{c}_{-k}^\T x_i)^2\{y_i-B(\widehat{c}_{-k})x_i\}^2 +
    \sum_{i\in\mathcal{I}_k}
    \delta_i w(\widehat{c}_{-k}^\T x_i)(y_i-x_{1i}^\T \widehat{\beta}_{-k})^2
    \right],
\end{align*}
where the unknown parameter $\theta_N$ is approximated by the hard-calibration estimator $\widehat{\theta}_{\rm hc}$ as a proxy. Given this cross-fitting scheme, $\text{MSE}(\widehat{\theta}_w;\gamma_n)$ is able to approximate the true mean squared error with negligible bias. A similar strategy
has been used by \citet{xiao2013comparison} for tuning parameter selection in other
contexts. We select $\gamma_n$ by minimizing the estimated mean squared error of $\widehat{\theta}_{w}$
over a discrete grid $\{\gamma_n^{*}\times10^{j}:j=-5,\ldots,5\}$,
where $\gamma^{*}_n$ is a user-provided value. Our tuning strategy involves specifying $\gamma_n^*$ and one candidate can be $\widehat{\sigma}_e^2/\widehat{\sigma}_u^2$, where $\widehat{\sigma}_e^2$ and $\widehat{\sigma}_u^2$ 
are the restricted maximum likelihood estimators of ${\sigma}_{e}^{2}$
and ${\sigma}_{u}^{2}$, respectively \citep{golub1979generalized}.

\subsection{Cluster-specific Nonignorable Missingness\label{subsec:Cluster-specific-nonignorable-mi}}

We now consider one important extension of latent ignorability to cluster-specific nonignorable missingness, and another extension to causal inference in the presence of unmeasured cluster-level confounders is presented in $\S$\ref{sec:sim_res_causal}. Following the conventional notations for clustered data, consider
the finite population $\mathcal{F}_{N}=\{(\bx_{ij},y_{ij},\delta_{ij}):i=1,\ldots,K,j=1,\ldots,N_{i}\}$,
where $i$ indexes the cluster and $j$ indexes the unit within each cluster,
$y_{ij}$ is the outcome of interest for the $j$-th unit in cluster
$i$, which is subject to missingness, $\bx_{ij}\in\mathbb{R}^{p}$
is the vector of observed covariates, $\delta_{ij}$ is the response
indicator with value one if $y_{ij}$ is observed and zero otherwise,
and $N=\sum_{i=1}^{K}N_{i}$ is the population size. The parameter
of interest is $\theta_N=N^{-1}\sum_{i=1}^{K}\sum_{j=1}^{N_{i}}y_{ij}$.
We consider the two-stage cluster sampling: in the first stage, $k$ clusters
are selected from $K$ clusters with cluster sampling weights $d_{i}$,
and in the second stage, a random sample of $n_{i}$ units is selected
from each sampled cluster $i$ with unit sampling weights $N_{i}/n_{i}$.
The sample size is $n=\sum_{i=1}^{k}n_{i}$. Assume the outcome follows
the linear mixed-effects model 
\[
y_{ij}=\bx_{ij}^{\T}\beta+a_{i}+e_{ij}=\bx_{ij}^{\T}\beta+z_{ij}^{\T}a+e_{ij},\quad i=1,\ldots,k,\quad j=1,\ldots,n_{i},
\]
where $a=(a_{1},\ldots,a_{k})^{\T}$ are the latent
cluster-specific random effects, and $z_{ij}=s_{i}$
with $s_{i}$ being the canonical coordinate basis for
$\mathbb{R}^{k}$ as the cluster indicator. Here, $\bx_{ij}$, $z_{ij}$
and $a$ are the counterparts of $\bx_{1i},\bx_{2i}$
and $\bu$ in $\S$\ref{sec:Basic-setup}.

In the presence of missing data, the sample average of the observed
$y_{ij}$ even adjusted for sampling design weights may be biased
for $\theta_N$ due to the selection bias associated with the respondents. To correct such selection bias, the calibrated propensity score method proposed by \citet{kim2016b} imposes the following hard calibration constraints for both fixed effects and cluster effects:
\begin{equation}
\sum_{i=1}^{k}\sum_{j=1}^{n_{i}}d_{ij}\delta_{ij}w_{ij}\bx_{ij}=\sum_{i=1}^{k}\sum_{j=1}^{n_{i}}d_{ij}\bx_{ij},\label{7}
\end{equation}
and $\sum_{j=1}^{n_{i}}d_{ij}\delta_{ij}w_{ij}=\sum_{j=1}^{n_{i}}d_{ij}$ for $i=1,\ldots,k$ with $d_{ij}=d_iN_in_i^{-1}$. The calibration constraints for the cluster effects may be stringent when the clusters weakly affect the outcome and may be relaxed to the following under soft calibration
\begin{align}
 & \sum_{i=1}^{k}\sum_{j=1}^{n_{i}}d_{ij}\delta_{ij}w_{ij}=\sum_{i=1}^{k}\sum_{j=1}^{n_{i}}d_{ij},\label{eq:hard_cal_survey_int}\\
 & \sum_{i=1}^{k}\sum_{j=1}^{n_{i}}d_{ij}\delta_{ij}w_{ij}z_{ij}=\sum_{i=1}^{k}\sum_{j=1}^{n_{i}}d_{ij}z_{ij}+\sum_{i=1}^{k}\sum_{j=1}^{n_{i}}d_{ij}M_{\bbS}^{\T}\bx_{ij}+\sum_{j=1}^{n_{i}}d_{ij}R_{\bbS}^{\T}z_{ij},\label{eq:soft_cal_survey}
\end{align}
where (\ref{eq:hard_cal_survey_int}) is still an exact constraint
forcing the weighted estimator of the population size to be the same
as the design-weighted estimator, and (\ref{eq:soft_cal_survey})
is an approximate calibration for cluster effects. The adjustment in (\ref{eq:soft_cal_survey}) relaxes the requirement of an exact calibration of cluster effects, which can be beneficial when the outcome has relatively homogeneous cluster-specific effects, that is, the ratio $\sigma_e^2/\sigma_u^2$ is large. Thus, our soft-calibration estimator of $\theta_N$ is
$\widehat{\theta}_{w}=N^{-1}\sum_{i=1}^{k}\sum_{j=1}^{n_{i}}d_{ij}\delta_{ij}w(\widehat{c}^\T \bx_{ij})y_{ij}$, where $w(\widehat{c}^\T \bx_{ij})$ is obtained
by minimizing a given loss function subject to the soft calibration
constraints (\ref{7}), (\ref{eq:hard_cal_survey_int}) and (\ref{eq:soft_cal_survey}). 

\begin{corollary}\label{coro_theta_cluster}
Under Assumptions \ref{assum:laten_ignore}(a), \ref{assm:regularity}, other regularity conditions in Assumption \ref{assum:regular-sample} of the Supplementary Material, and $\gamma_n = o(n^{1/2}q^{-1/2})$, if either the outcome $y_{ij}$ follows a linear mixed-effects model or $Q(w)$ entails a correct propensity score model, we have $n^{1/2}(\widehat{\theta}_{w}-\theta_N)\rightarrow  N(0,V_{1})$
as $n\rightarrow\infty$ and $n/N\rightarrow f\in[0,1)$, where $V_{1}=\lim_{n\rightarrow\infty}nN^{-2}\var_{p}\left\{ \sum_{i=1}^{k}d_{i}\psi_{i}(c^{*})\mid \mathcal{F}_N\right\}$,
\[
\psi_{i}(c^{*})=\frac{N_{i}}{n_{i}}\sum_{j=1}^{n_{i}}\left\{ B(c^{*})\bx_{ij,\rmsc}+\delta_{ij}w(c_0^{*\T}x_{ij}+c_1^{*\T}z_{ij})\eta_{ij}(c^{*})\right\},\quad c^* = (c_0^{*\T},c_1^{*\T})^{\T},
\]
and $\eta_{ij}(c^{*})=y_{ij}-B(c^{*})
(\bx_{ij}^{\T}, z_{ij}^{\T})^{\T}$ with 
$\var_p(\cdot)$ being the variance under the
clustered sampling design and $\{B(c^{*}),\bx_{ij,\rmsc}\}$ defined in $\S$\ref{sec:proof of corollary_1} of the Supplementary Material.
\end{corollary}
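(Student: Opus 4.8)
The plan is to mirror the proof of Theorem \ref{thm:theta_linear}, replacing the single-stage design with the two-stage cluster design and tracking the nested randomness (first-stage cluster selection, second-stage within-cluster selection, the response mechanism, and the super-population model $\zeta$). Write $\widetilde{x}_{ij}=(\bx_{ij}^{\T},z_{ij}^{\T})^{\T}$ for the stacked fixed- and random-effect covariates, and let $G(c)$ be the clustered analogue of the dual objective (\ref{eq:soft-dual}): a design-weighted sum $\sum_{i=1}^{k}\sum_{j=1}^{n_{i}}d_{ij}\delta_{ij}g(c^{\T}\widetilde{x}_{ij})$ minus the inner product of $c$ with the calibration targets on the right-hand sides of (\ref{7}), (\ref{eq:hard_cal_survey_int}) and (\ref{eq:soft_cal_survey}). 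By Remark \ref{rmk:unique}, $\widehat{c}$ is the unique stationary point of $G$, and $c^{*}$ is taken to solve $\E\{\partial G(c)/\partial c\mid\mathcal{F}_N\}=0$. First I would show $\widehat{c}-c^{*}=O_{\P}(n^{-1/2})$ by a $Z$-estimation argument: a law of large numbers for two-stage cluster sampling together with the moment bounds in Assumption \ref{assm:regularity}(b)--(c) gives $\partial G(c^{*})/\partial c=O_{\P}(n^{-1/2})$, while Assumption \ref{assm:regularity}(a) and the supplementary regularity conditions make the Hessian $\partial^{2}G(c^{*})/\partial c\,\partial c^{\T}$ converge to an invertible limit; a one-term Taylor expansion of the score around $c^{*}$ then yields $\widehat{c}-c^{*}=-\{\partial^{2}G(c^{*})/\partial c\,\partial c^{\T}\}^{-1}\partial G(c^{*})/\partial c+o_{\P}(n^{-1/2})$.

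Next I would linearize $\widehat{\theta}_{w}=N^{-1}\sum_{i=1}^{k}\sum_{j=1}^{n_{i}}d_{ij}\delta_{ij}w(\widehat{c}^{\T}\widetilde{x}_{ij})y_{ij}$ in $\widehat{c}$ about $c^{*}$, insert the expansion of $\widehat{c}-c^{*}$, and use the soft calibration identities (\ref{7}), (\ref{eq:hard_cal_survey_int}) and (\ref{eq:soft_cal_survey}) to collapse the cross term. The matrix $\{\sum_{ij}d_{ij}\delta_{ij}w'(c^{*\T}\widetilde{x}_{ij})\widetilde{x}_{ij}y_{ij}\}\{\sum_{ij}d_{ij}\delta_{ij}w'(c^{*\T}\widetilde{x}_{ij})\widetilde{x}_{ij}\widetilde{x}_{ij}^{\T}\}^{-1}$ produces $B(c^{*})$, and the calibration targets, including the adjustments built from $M_{\bbS}$ and $R_{\bbS}$, convert the design totals of $\widetilde{x}_{ij}$ into the soft-calibrated covariate $\bx_{ij,\rmsc}=\{\bx_{ij}^{\T},\bx_{ij}^{\T}M_{\bbS}+z_{ij}^{\T}(I_{k}+R_{\bbS})\}^{\T}$. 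Aggregating the within-cluster terms with the second-stage weight $N_{i}/n_{i}$ gives $\widehat{\theta}_{w}-\theta_N=N^{-1}\sum_{i=1}^{k}d_{i}\psi_{i}(c^{*})-\theta_N+(\text{remainder bias})+o_{\P}(n^{-1/2})$, with $\psi_{i}$ and $\eta_{ij}$ as stated. The remainder bias is the clustered counterpart of (\ref{eq:extra:term}): a term of order $\gamma_{n}$ times a normalized inner product involving $D_{q}^{-1}a$ with $q=k$, which is $o_{\P}(n^{-1/2})$ exactly under $\gamma_{n}=o(n^{1/2}q^{-1/2})$ together with the uniform bound in Assumption \ref{assm:regularity}(b).

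It then remains to center the leading term and apply a central limit theorem. For double robustness I would argue as in Theorem \ref{thm:theta_linear}: if $y_{ij}$ follows the linear mixed-effects model, $B(c^{*})$ targets the regression coefficient, so $\eta_{ij}(c^{*})$ behaves like the mean-zero model error given the covariates and $a$, the weighted residual $\delta_{ij}w(c_0^{*\T}\bx_{ij}+c_1^{*\T}z_{ij})\eta_{ij}(c^{*})$ is asymptotically unbiased, and $B(c^{*})\bx_{ij,\rmsc}$ telescopes against the calibration constraints to reconstruct $\theta_N$; if instead $Q(w)$ induces a correctly specified propensity score, $\E_{\delta}\{d_{ij}\delta_{ij}w(c^{*\T}\widetilde{x}_{ij})\mid\mathcal{F}_N\}$ reduces to the design-weight identity, making the residual term design-unbiased while the projection term again telescopes. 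Finally, conditionally on $\mathcal{F}_N$, the quantity $N^{-1}\sum_{i=1}^{k}d_{i}\psi_{i}(c^{*})$ is a Horvitz--Thompson-type estimator of a cluster total under the two-stage design, and a design central limit theorem for two-stage cluster sampling in the spirit of the clustered-design asymptotics already cited (e.g. \citealp{chauvet2022asymptotic}) applies once a Lyapunov condition is verified from $N^{-1}\sum_{ij}y_{ij}^{2+\alpha}<\infty$ in Assumption \ref{assm:regularity}(c) and the covariate bound; this yields $n^{1/2}(\widehat{\theta}_{w}-\theta_N)\to N(0,V_{1})$ with $V_{1}$ the stated limiting design variance, the model-based contribution being absorbed into the conditioning on $\mathcal{F}_N$.

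The main obstacle is the first step under a growing number of clusters: when $q=k$ is allowed to increase with $n$, the Hessian $\partial^{2}G/\partial c\,\partial c^{\T}$ has growing dimension, so one must bound the fluctuations of the cluster-sampling score and the second-order Taylor remainder with the correct $q$-dependence, and then check that this remainder is $o_{\P}(n^{-1/2})$ simultaneously with the soft-calibration bias --- which is what pins down the rate $\gamma_{n}=o(n^{1/2}q^{-1/2})$. Verifying the Lyapunov condition for the two-stage CLT with the estimated $\widehat{c}$ plugged in, rather than $c^{*}$, is a secondary but routine step once the linearization is in place.
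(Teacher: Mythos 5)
Your proposal follows essentially the same route as the paper's proof: a first-order Taylor expansion of $\widehat{\theta}_{w}(\widehat{c})$ about $c^{*}$ using the design-weighted dual score $U(c)$, a $\widehat{c}-c^{*}=O_{\P}(n^{-1/2})$ step, identification of the influence function $\psi_{i}(c^{*})$ through $B(c^{*})=-\Gamma(c^{*})\{i(c^{*})\}^{-1}$ and the soft-calibrated covariate $\bx_{ij,\rmsc}$, and then a central limit theorem for the resulting Horvitz--Thompson-type cluster sum; the paper in fact short-circuits your Lyapunov verification by directly assuming asymptotic normality of design-weighted estimators in Assumption \ref{assum:regular-sample}(a). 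The only place your reasoning is off is the disappearance of the model-based variance component: it is not ``absorbed into the conditioning on $\mathcal{F}_N$'' --- the unconditional variance still contains a term of the form $\var_{\zeta}[\E_{\delta}\E_{p}\{\cdot\mid\mathcal{F}_N\}]$, i.e.\ the analogue of $V_{2}$ in Theorem \ref{thm:theta_linear} --- but rather this term is negligible relative to $V_{1}$ under two-stage cluster sampling by the argument of \citet{shao99}, even when $n/N$ or the within-cluster sampling fractions are not small, which is the actual reason only $V_{1}$ appears in the limit.
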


The results in Corollary \ref{coro_theta_cluster} are similar to
that of Theorem \ref{thm:theta_linear} except that $V_{2}$ under
two-stage cluster sampling is negligible compared to $V_{1}$ even
though $n/N$ or some cluster sampling fractions $n_{i}/N_{i}$ are
not negligible \citep{shao99} and thus is omitted. For variance estimation, the variance of
$\widehat{\theta}_{w}$ can be consistently estimated as $\widehat{V}_{1}=nN^{-2}\sum_{i=1}^{k}\sum_{j=1}^{k}\Omega_{i,j}\psi_{i}(\widehat{c})\psi_{j}(\widehat{c})$, where $\Omega_{i,j}$ depends on the cluster sampling scheme at the
first stage, $\psi_{i}(\widehat{c})$ is referred as
the pseudo-values with $c^{*}$ replaced by $\widehat{c}$,
and the consistency of $\widehat{V}_{1}$ can be verified by standard
arguments in \citet{kim2009}.

\section{Simulation study\label{sec:Simulation}}

In this section, we conduct a simulation study to evaluate the finite-sample performance of our proposed soft-calibration estimator and assess the robustness of its bias-corrected version in the case of cluster-specific nonignorable missingness. First, we generate samples from finite populations using the two-stage cluster sampling mechanism, in which $k=30$ clusters with cluster sizes $n_{i}=200$ are selected from $K=2000$ clusters. 

We consider two generating models for $y_{ij}$. One is the linear mixed-effects model: $y_{ij}=x_{ij}^{\T}{\beta}+\lambda_{1}a_{i}+e_{ij}$ with $x_{ij}=(1,x_{1ij},x_{2ij})^{\T}$ where ${\beta}=(0,1,1)^{\T}$, $x_{1ij}\sim U[-0.75,0.75]$, $x_{2ij}\sim N(0,1)$, $a_{i}\sim N(0,1)$ and $e_{ij}\sim N(0,1)$. The other one is a non-linear mixed-effects model $y_{ij}=x_{ij}^{\T}\beta + x_{1ij}^2 + x_{2ij}^2 + 0.1x_{3ij}^{\dagger}+ 0.1x_{4ij}^{\dagger}+\lambda_{1}a_{i}+e_{ij}$, where $x_{3ij}^{\dagger}$ and $x_{4ij}^{\dagger}$ are the standardized versions of $x_{3ij}=\exp(x_{1ij})$ and $x_{4ij}=\exp(x_{2ij})$. We consider a logistic propensity score to generate $\delta_{ij}$:
$\delta_{ij}\sim\text{Bernoulli }(p_{ij})$, where $\text{logit}(p_{ij})=\bx_{ij}^{\T}\alpha+\lambda_{2}z_{i}$
and $\alpha=(-0.25,1,1)^{\T}$ with $\text{logit}(\cdot)$ being the logit-link. For illustration, we present a set of $(\lambda_1,\lambda_2)$ in Table \ref{tab:sim-survey:all} gauging the between-cluster variation of $y_{ij}$ and $\delta_{ij}$, and additional simulation studies are deferred to $\S$\ref{sec:additional_sims} of the Supplementary Material.

From $\S$\ref{subsec:soft_cal_loss}, the loss function dictates the propensity score model. For assessing the double robustness of the soft-calibration estimator, we consider two loss functions: the maximum entropy balancing function, i.e., a logistic mixed-effects model for the propensity score, and the square loss function, i.e., a linear mixed-effects model for the inverse of the propensity score. Next, we compute nine estimators for $\theta_N$: (i)  $\widehat{\theta}_{\rm sim}$ the simple average of the observed $y_{ij}$; (ii, iii) $\widehat{\theta}_{\rm fix}$ and $\widehat{\theta}_{\rm rand}$, where $p_{ij}$ is estimated with fixed or random effects for clusters; (iv-vi) $\widehat{\theta}_{\rm hc}$, $\widehat{\theta}_w^{\sq}$ and $\widehat{\theta}_{w}^{\me}$, where $w_{ij}$ achieves the hard calibration conditions under the maximum entropy loss function, the soft calibration conditions under the square loss function or under the maximum entropy loss function; (vii) $\widehat{\theta}_{\dr}$, bias-corrected $\widehat{\theta}_{w}^{\me}$ with $\widehat{\mu}_{ij}=x_{1ij}^{\T}\widehat{\beta}+x_{2ij}^{\T}\widehat{u}$; (viii)
$\widehat{\theta}_{\rm cbps}$, the high-dimensional covariate propensity score balancing method of \cite{ning2020robust}; and (ix) $\widehat{\theta}_{\rm rcal}$, the high-dimensional regularized calibration method of \cite{tan2020model}.

\begin{longtable}{lllllllllllll}
\caption{Bias $(\times 10^{-2})$, variance $(\times10^{-3})$, mean squared error
$(\times10^{-3})$ and coverage probability (\%) of the estimators under cluster-specific nonignorable missingness based on $500$ simulated datasets
\label{tab:sim-survey:all}}\\
\hline
&& 
$\widehat{\theta}_{\rm sim}$ &
$\widehat{\theta}_{\rm fix}$ & 
$\widehat{\theta}_{\rm rand}$ &
$\widehat{\theta}_{\rm hc}$ &
$\widehat{\theta}_{w}^{\sq}$ &
$\widehat{\theta}_{w}^{\me}$ &
$\widehat{\theta}_{\dr}$&
$\widehat{\theta}_{\rm cbps}$ & $\widehat{\theta}_{\rm rcal}$ \\ 
\hline
\hline
\multicolumn{10}{l}{\textit{Linear mixed-effects model with $(\lambda_1,\lambda_2)=(0.01,1)$}}\\
\hline
Bias&&21.2&0.02&0.29&0.10&0.16&0.13&0.09&0.17&0.35\\
VAR&&0.23&1.53&1.40&0.78&0.61&0.73&0.74&0.78&0.75\\
MSE&&45.1&1.53&1.41&0.78&0.61&0.73&0.74&0.78&0.76\\
CP&&0.0&94.6&94.2&92.6&93.8&93.0&93.2&--&--\\
\\
\multicolumn{10}{l}{\textit{Linear mixed-effects model with $(\lambda_1,\lambda_2)=(0.01,10)$}}\\
\hline
Bias&&5.02&0.28&0.01&0.73&0.29&0.27&0.18&0.43&7.44\\
VAR&&0.35&26.4&22.3&4.57&1.49&1.69&2.16&5.88&0.69\\
MSE&&2.88&26.4&22.3&4.62&1.49&1.70&2.16&5.89&6.23\\
CP&&23.8&88.6&87.8&94.2&94.4&92.4&92.2&--&--\\
\\
\multicolumn{10}{l}{\textit{Linear mixed-effects model with $(\lambda_1,\lambda_2)=(0.5,1)$}}\\
\hline
Bias&&30.3&0.49&1.61&0.64&1.26&1.28&0.63&0.82&2.03\\
VAR&&2.74&10.7&10.2&9.23&9.64&9.84&9.21&10.2&9.79\\
MSE&&94.4&10.7&10.4&9.27&9.80&10.0&9.25&10.3&10.2\\
CP&&0.0&95.0&93.4&94.2&94.0&93.6&94.0&--&--\\
\\
\multicolumn{10}{l}{\textit{Non-linear mixed-effects model with $(\lambda_1,\lambda_2)=(0.01,1)$}}\\
\hline
Bias&&31.6&0.10&0.38&0.92&8.75&0.03&0.11&0.09&0.59\\
VAR&&1.50&2.42&2.24&1.96&1.72&1.69&1.71&1.71&1.86\\
MSE&&102&2.42&2.25&2.05&9.37&1.69&1.71&1.71&1.89\\
CP&&0.0&94.0&94.0&92.6&0.0&94.4&96.6&--&--\\
\hline
\end{longtable}

Table \ref{tab:sim-survey:all} reports the simulation results based on $500$ Monte Carlo samples. The performance of estimators is evaluated on the basis of biases, variances, mean squared errors, and coverage probabilities. Among all estimators, the simple average estimator $\widehat{\theta}_{\rm sim}$ shows large biases across all different scenarios. When the cluster factor is included as fixed or random effects, the biases of $\widehat{\theta}_{\rm fix}$ and $\widehat{\theta}_{\rm rand}$ are substantially
reduced, while their variances remain large. The large variances could be attributed to their overly abundant parameters associated with the cluster indicators. When the random effects weakly affect outcomes (i.e., $\lambda_1=0.01$), all soft-calibration estimators outperform $\widehat{\theta}_{\rm hc}$, indicating their ability to address the issue of over-calibration. In particular, $\widehat{\theta}_{w}^{\sq}$ performs better than $\widehat{\theta}_{w}^{\me}$ under the linear mixed-effects model, which agrees with the connection between $\widehat{\theta}_{w}^{\sq}$ and $\widehat{\theta}_{\rm blup}$ established in Proposition \ref{prop:BLUP-penalized}. However, $\widehat{\theta}_{w}^{\sq}$ is subject to significant bias when the outcome model is misspecified, leading to an unsatisfactory coverage probability, while $\widehat{\theta}_{w}^{\me}$ still exhibits a desirable finite-sample coverage probability, which aligns with our claim of double robustness in Theorem \ref{thm:theta_linear} when the propensity score is correctly specified. Although the bias-corrected estimator $\widehat{\theta}_{\dr}$ has a slightly larger mean squared error than $\widehat{\theta}_w^{\me}$ when $\lambda_1=0.01$, it performs better when the data present a larger between-cluster variation of $y_{ij}$ (i.e., $\lambda_1=0.5$), which provides empirical support for the robustness of $\widehat{\theta}_{\dr}$ with respect to the rate requirement for $\gamma_n$. As expected, both regularized calibration estimators $\widehat{\theta}_{\rm cbps}$ and $\widehat{\theta}_{\rm rcal}$ have larger mean squared errors under the linear mixed-effects model since our soft calibration conditions are motivated by linear mixed effects.

Overall, our proposed estimators tend to produce smaller mean squared errors while dealing with cluster-specific missingness, irrespective of possible model misspecification of either outcome or propensity score.

\section{An application: effect of school-based BMI screening on childhood obesity \label{sec:Application}}

The epidemic of childhood obesity has been widely publicized \citep{peyer2015factors}. Many school districts have implemented coordinated school-based body mass index screening programs to help increase parental awareness of children's body status and promote preventive strategies to reduce the risk of obesity. We use a data set collected by the Pennsylvania Department of Health to evaluate the effect of the program on the annual prevalence of overweight and obesity in elementary schools across Pennsylvania in $2007$. The primary goal is to investigate
the causal effect of implementing the program on reducing childhood obesity and overweight. Because the implementation of the policy was
not randomized, it is essential to control pre-treatment covariates
for causal analysis of the effect of the policy. Furthermore, school districts are clustered by geographic and demographic factors. Thus, soft calibration can be used to estimate the causal effect by correcting for cluster-specific confounding
bias. 

The dataset contains information on $493$ elementary schools,
which are clustered according to the type of community (rural, suburban,
and urban) and the population density (low, moderate, and high). There
are six clusters with sample size $n_{1}=65,n_{2}=96,n_{3}=89,n_{4}=29$,
$n_{5}=104$, and $n_{6}=4$. For each school, the data consist of
the treatment status $A_{ij}$ where $A_{ij}=1$ if the school has implemented
the policy and $0$ otherwise, the outcome variable $y_{ij}$, indicating the annual prevalence of overweight and obesity in each school, and two covariates $x_{1ij}$ and $x_{2ij}$, the baseline prevalence of overweight children and the
percentage of reduced and free lunch, respectively. For estimation,
we consider the linear mixed-effects model and the maximum entropy loss function, including covariates $x_{1ij}$, $x_{2ij}$ and the cluster intercept to model the outcome and weights for $A_{ij}=0$ and $A_{ij}=1$, respectively.

Table \ref{tab:real} reports the estimated average treatment effects on the annual prevalence of overweight and obesity along with the estimated variances and $95\%$ confidence intervals. Without any
adjustment, $\widehat{\theta}_{\rm sim}$ shows that the policy has a significant effect in reducing the prevalence of overweight and
obesity in schools, which may be subject to confounder bias. After adjusting for
confounders through propensity weighting or calibration, all other estimators show that the policy may mildly reduce the prevalence of overweight. Also, $\widehat{\theta}_{\rm hc}$, $\widehat{\theta}_{w}^{\me}$ and $\widehat{\theta}_{\dr}$ provide similar estimates, but the soft-calibration estimators yield a slightly smaller variance, which can be attributed to the approximate calibration condition on the cluster indicator. As discussed in $\S$\ref{sec:additional_app} of the Supplementary Material, the cross-fitting strategy selects two small tuning parameters as $\gamma_{n, A=0}=0.052$ and $\gamma_{n, A=1}=0.068$. It implies that the correction term on the right-hand side of (\ref{eq:calib-random}) is fairly small and a nearly exact calibration should be adopted, as demonstrated by the similarities in the calibration weights in Figure \ref{fig:weights}. Estimators $\widehat{\theta}_{\rm cbps}$ and $\widehat{\theta}_{\rm rcal}$ might not be credible when the sparsity condition is not met, as we have shown in the simulation studies. Based on our analysis, the policy can reduce the average prevalence of overweight and obesity in elementary schools in Pennsylvania, although the statistical evidence is not significant. 

\begin{table}[htbp]
\caption{The estimated average treatment effects of SBMIS on the annual overweight and obesity prevalence in elementary schools across Pennsylvania\label{tab:real}}
\vspace{0.15cm}
\resizebox{\textwidth}{!}{%
\begin{tabular}{lccccccccccc}
\toprule
 & 
$\widehat{\theta}_{\rm sim}$ &
$\widehat{\theta}_{\rm fix}$ & 
$\widehat{\theta}_{\rm rand}$ &
$\widehat{\theta}_{\rm hc}$ & 
$\widehat{\theta}_{w}^{\me}$ &
$\widehat{\theta}_{\dr}$&
$\widehat{\theta}_{\rm cbps}$ & $\widehat{\theta}_{\rm rcal}$ & \\
\midrule
ATE &8.71& 0.41 & 0.43 & 0.55 & 0.53 &0.54& 0.28 & 0.51 & \\
VE & 2258.8 & 467.8 & 474.5 &448.5 & 445.7 & 446.0 &  & \\
CIs & (5.77,11.66) & (-0.93,1.75) & (-0.92,1.78) & (-0.77,1.86) & (-0.78,1.84) & (-0.77,1.85) & -- & --&\\
\bottomrule
\end{tabular}}

\begin{tablenotes}
\item SBMIS, school-based body mass index screening; ATE, average treatment effects; VE, variance estimation $(\times 10^{-3})$; CIs, confidence intervals
\end{tablenotes}
\end{table}

\section*{Acknowledgement}
This research is supported by the U.S. National Science Foundation and the U.S. National Institutes of Health.

\bibliographystyle{dcu}
\bibliography{ref}
\setcounter{equation}{0}
\setcounter{table}{0}
\setcounter{theorem}{0}
\setcounter{lemma}{0}
\setcounter{figure}{0}
\renewcommand{\theequation}{S\arabic{equation}}
\renewcommand{\thetable}{S\arabic{table}}  
\renewcommand{\thefigure}{S\arabic{figure}}
\renewcommand{\thesection}{S\arabic{figure}}

\setcounter{algocf}{0} 
\makeatletter
\renewcommand{\thealgocf}{S\@arabic\c@algocf}
\makeatother
\renewcommand\thetheorem{{S}\arabic{theorem}}
\renewcommand\thelemma{{S}\arabic{lemma}}
\renewcommand\thecorollary{{S}\arabic{corollary}}
\renewcommand\theproposition{{\rm S}\arabic{proposition}}
\renewcommand\thedefinition{{\rm S}\arabic{definition}}
\renewcommand\theassumption{{ S}\arabic{assumption}}
\renewcommand\theremark{{S}\arabic{remark}}
\renewcommand\thestep{{\rm S}\arabic{step}}
\renewcommand\thecondition{{\rm S}\arabic{condition}}
\renewcommand\theexample{{\rm S}\arabic{example}}

\newpage
\appendix
\section*{Supplementary material}
\section{Technical Proofs and Details}
\subsection{Summary}\label{sec:proof_summary}
We include all the technical details in this section. In specific, $\S$\ref{subsec:proof_prop1} provides the proof for Proposition \ref{prop:BLUP-penalized}. $\S$\ref{subsec:proof_unique} provides the proof for Remark \ref{rmk:unique}. $\S$\ref{subsec:proof_thm1} provides the proof for Theorem \ref{thm:theta_linear}. $\S$\ref{sec:proof of corollary_1} provides the proof for Corollary \ref{coro_theta_cluster}. $\S$\ref{sec:nonignore_trt} provides technical details and proofs for soft calibration under cluster-specific nonignorable treatment mechanism. $\S$\ref{sec:implement_details} provides the implementation details of the soft-calibration estimator. $\S$\ref{sec:proof_rmk_dr} provides technical details and proofs for the bias-corrected estimator. Table \ref{tab:notations_additional} introduces additional notations which will be used throughout this Supplementary Material. 
\begin{table}[!ht]
\caption{\label{tab:notations_additional}Summary of the notations}
\vspace{0.15cm}
{
      \begin{tabular}{ll}
      \toprule
          Notation & Definition \\
        \midrule
          $\E_{e}(\cdot), \E_{u}(\cdot)$&
          Expectations with respect to the random error $e$ and the random effect $u$\\
          $\var_{e}(\cdot), \var_{u}(\cdot)$&
          Variances with respect to the random error $e$ and the random effect $u$\\
          $\E_{p}(\cdot), \var_p(\cdot)$&
          Expectation and variance with respect to the cluster-level sampling design\\
          $\mathbb{W}$, $I(\mathbb{W})$&
          The support and image sets of the objective function $Q(w)$ with respect to $w$\\
          $\mathbb{C}$, $I(\mathbb{C})$&
          The support and image sets of the weight function $w(c^{\T}x)$ with respect to $c$ \\
          $\Sigma_q$& $\Sigma_q=\text{diag}(q_1,\cdots,q_n)$ with known factor $\{q_i:i\in\bbS\}$ for unequal variance\\
          $g(z)$& $g(z)=z\cdot (Q')^{-1}(z)-Q\{(Q')^{-1}(z)\}$ is the convex conjugate function of $Q(w)$\\
          $\|\cdot\|$&  $\|\cdot\|$ is the $L_2$ norm of a vector and the spectral norm of a matrix\\
          $T_r$, $T_X$&
          $T_r=N^{-1}\sum_{i\in\bbU}(\bx_{1i}^{\T}M_{\bbS} + \bx_{2i}^{\T}R_{\bbS})$, $ T_X= N^{-1}(\sum_{i\in\bbU}x_{1i}^{\T},\sum_{i\in\bbU}x_{2i}^{\T}+NT_r)^{\T}$\\
          \bottomrule
      \end{tabular}}
  \end{table}

\subsection{Proof of Proposition \ref{prop:BLUP-penalized}}\label{subsec:proof_prop1}
\begin{proof}
The solution to (\ref{eq:soft_sq}) can be formally derived by the means of Lagrange multiplier:
$$\sigma_{e}^{2}(w-1_{n})^{\T}\Sigma_q^{-1}(w-1_{n})+\sigma_{u}^{2}(w^{\T}X_{2,\bbS}-1_{N}^{\T}X_{2,\bbU})D_{q}(w^{\T}X_{2,\bbS}-1_{N}^{\T}X_{2,\bbU})^{\T}-2(w^{\T}X_{1,\bbS}-1_{N}^{\T}X_{1,\bbU})\lambda.$$
By taking the first derivative with respect to $w$ and setting it to zero, we have
\begin{align*}
&\widehat{w}^{\sq} =V_{\bbS}^{-1}(X_{1,\bbS}\widehat{\lambda}+\sigma_{e}^{2}\Sigma_q^{-1}+\sigma_{u}^{2}X_{2,\bbS}D_{q}X_{2,\bbU}^{\T}1_{N}),\\
&\widehat{\lambda}=(X_{1,\bbS}^{\T}V_{\bbS}^{-1}X_{1,\bbS})^{-1}\{X_{1,\bbU}^{\T}1_{N}-X_{1,\bbS}^{\T}V_{\bbS}^{-1}(\sigma_{e}^{2}\Sigma_q^{-1}+\sigma_{u}^{2}X_{2,\bbS}D_{q}X_{2,\bbU}^{\T}1_{N})\}.
\end{align*}
It can be shown that 
$$
\begin{aligned}
\widehat{w}^{\sq}&=1_{n}+\Sigma_qX_{\bbS}\{X_{\bbS}^{\T}\Sigma_qX_{\bbS}+\gamma\text{diag}(0_{p},D_{q}^{-1})\}^{-1}(X_{\bbU}^{\T}1_{N}-X_{\bbS}^{\T}1_{n}).
\end{aligned}$$
Furthermore, the soft-calibration estimator can be expressed by
\begin{align*}\widehat{w}^{\sq\T}Y_{\bbS}&=
1_{n}^{\T}Y_{\bbS}
+
(1_{N}^{\T}X_{\bbU}-1_{n}^{\T}X_{\bbS})\{X_{\bbS}^{\T}\Sigma_qX_{\bbS}+\gamma\text{diag}(0_{p},D_{q}^{-1})\}^{-1}X_{\bbS}^{\T}\Sigma_q Y_{\bbS}\\
&=1_{n}^{\T}Y_{\bbS}+(1_{N}^{\T}X_{1,\bbU}-1_{n}^{\T}X_{1,\bbS})\widehat{\beta}+(1_{N}^{\T}X_{2,\bbU}-1_{n}^{\T}X_{2,\bbS})\widehat{\bu}\\
&=1_{N}^{\T}X_{1,\bbU}\widehat{\beta}+1_{N}^{\T}X_{2,\bbU}\widehat{\bu},
\end{align*}
where $(\widehat{\beta}, \widehat{\bu})$ is the solution to the mixed-model equations, and $1_n^{\T}Y_{\bbS}=1_n^{\T}X_{1,\bbS}\widehat{\beta}+1_n^{\T}X_{2,\bbS}\widehat{\bu}$ as long as $X_{1,\bbS}$ contains an intercept. Therefore, the desired result follows, that is, $\widehat{\theta}_{{\rm blup}} =\widehat{\theta}_{w}^{\sq}$. 
\end{proof}

\subsection{Proof of Remark \ref{rmk:unique}}\label{subsec:proof_unique}
Variants of Remark \ref{rmk:unique} have been proved before, the proof provided here under our setting is mostly adapted from \cite{deville1992calibration}, Result 1.
\begin{proof}
Suppose $q(w)=Q'(w)$ is normalized such that $q(1)=0$ and $q'(1)=1$,
$q(w_i) - c^\T \bx_i = 0$, $w_i = q^{-1}(c^\T \bx_i) = w(c^\T \bx_i)$, where $w(\cdot)$ is the reciprocal mapping of $q(\cdot)$ defined as $q:\mathbb{W}\rightarrow I(\mathbb{W})$ and $w :I(\mathbb{W})\rightarrow \mathbb{W}$, where $I(\mathbb{W})$ is the image of the function $Q(\cdot)$. Therefore, $q(1)=0$ implies $w(0)=1$, $q'(1)=1$ implies $w'(0) = 1/q'(1)=1$. Since $Q(w_i)$ is strictly convex, implying $q', w'>0$, $q,w$ are both strictly increasing functions. Define $\Upsilon: \mathbb{C}\rightarrow I(\mathbb{C})$ as $\Upsilon_n(c) = N^{-1}
\sum_{i\in\bbU}\{\delta_i w(c^\T \bx_i)-1\}\bx_i$, where $\mathbb{C} = \cap_{i\in\bbU} \{c: c^\T x_i\in I(\mathbb{W})\}$. Let $\mathbb{C}^*$ be an open convex set of $\mathbb{C}$, $\Upsilon_N(c) = \mathbb{E}\{\Upsilon_n(c)\mid X,\bu\}$ is well-defined for any $c\in\mathbb{C}^*$, and $\Upsilon_n$ converges to $\Upsilon_N$ under Assumption \ref{assum:laten_ignore}. 

Note the property that $\Upsilon'_N(c) = N^{-1}\sum_{i\in \bbU} \pi_i w'(c^\T \bx_i)\bx_i\bx_i^\T$ is positive definite by Assumption \ref{assm:regularity}(a) for all $\bu$. $\Upsilon_N(c)$ is injective function and maps $\mathbb{C}^*$ onto an open set $I(\mathbb{C}^*)$. By Assumption \ref{assum:positivity} and $w$ is strictly increasing function, there exist $c^*\in\mathbb{C}$ s.t. $Nn^{-1}\underline{d}<w(c^{*\T} \bx_i)<Nn^{-1}\overline{d}$ and $E\{\Upsilon_N(c^*)\mid X\}=0$. Therefore, the image set of $\Upsilon_N(\cdot)$ contains a closed sphere $I_r(\mathbb{C})\subset I(\mathbb{C})$ with radius $r$ in the neighborhood of $0_{p+q}$. 
Let $\mathbb{C}^*_r$ be the compact set $\Upsilon^{-1}_N(I_r(\mathbb{C}))$, and therefore $\Upsilon_n(\cdot)$ has a inverse function  on $I_r(\mathbb{C})$ with probability 1 and $N^{-1}\sum_{i\in\bbU}(R_{\bbS}^{\T}\bx_{2i} + M_{\bbS}^{\T}\bx_{1i}) = O_\P(\gamma_n  n^{-1}{q}^{1/2})$, which belongs to $I_r(\mathbb{C})$ with probability 1 when the radius $r$ is large enough.  As the (\ref{eq:est_alpha_beta}) can be written as $\Upsilon_n(c) = \{0_{p\times 1}, N^{-1}\sum_{i\in\bbU}(\bx_{2i}^{\T}R_{\bbS} + \bx_{1i}^{\T}M_{\bbS})\}^{\T}$, the equation has a unique solution with probability 1 followed by the injective nature of $\Upsilon_n(c)$.
\end{proof}

\subsection{Proof of Theorem \ref{thm:theta_linear}}\label{subsec:proof_thm1}
\begin{proof}
Let $U(c)=\partial G(c)/\partial c$. By standard Taylor's Theorem, 
\begin{align*}
    \widehat{\theta}_{w}(\widehat{c})&=\widehat{\theta}_{w}(c^{*})+
    \left\{\partial\widehat{\theta}(\tilde{c})/\partial c^{\T}\right\}
    (\widehat{c}-c^{*})\\
    &=\widehat{\theta}_{w}(c^{*})-\left\{\partial\widehat{\theta}(\tilde{c})/\partial c^{\T}\right\} \left\{
    \frac{\partial^2 G(c^*)}{\partial c\partial c^\T}\right\}^{-1} \frac{\partial G(c^*)}{\partial c^*}\\
    &=\widehat{\theta}_{w}(c^{*})-N^{-1}\left\{ \sum_{i\in\bbU}\delta_{i}w'(\tilde{c}^{\T}\bx_{i})\bx_{i}y_{i}\right\} \left\{ \sum_{i\in\bbU}\delta_{i}w'(\overline{c}^{\T}\bx_{i})\bx_{i}\bx_{i}^{\T}\right\} ^{-1} U(c^*),
\end{align*}
where $U(c)=\partial G(c)/\partial c$, $\tilde{c}$ and $\overline{c}$ lie on the line joint $\widehat{c}$ and $c^{*}$. 
\begin{lemma}\label{lm:c_hat}
$\|\widehat{c}-c^*\|=O_\P(n^{-1/2})$.
\end{lemma}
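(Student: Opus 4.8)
The plan is to treat $\widehat{c}$ as a Z-estimator. By Remark~\ref{rmk:unique}, with probability one $\widehat{c}$ is the unique interior minimizer of the convex objective $G$ in (\ref{eq:est_alpha_beta}), hence the unique solution of the score equation $U(c):=\partial G(c)/\partial c=\sum_{i\in\bbU}\delta_i w(c^{\T}x_i)x_i-(1_N^{\T}X_{1,\bbU},\,1_N^{\T}X_{2,\bbU}+NT_r)^{\T}=0$, whereas $c^{*}$ solves $\E\{U(c)\mid X_{\bbU},u\}=0$. I would work with the normalized maps $V_n(c)=n^{-1}U(c)$ and $\bar V_n(c)=\E\{V_n(c)\mid X_{\bbU},u\}=n^{-1}\sum_{i\in\bbU}\pi_i w(c^{\T}x_i)x_i-n^{-1}(1_N^{\T}X_{1,\bbU},\,1_N^{\T}X_{2,\bbU}+NT_r)^{\T}$, where $\pi_i=\pr(\delta_i=1\mid x_i,u)$. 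By Assumption~\ref{assum:positivity} we have $\pi_i\asymp n/N$, so $\bar V_n$ and its Jacobian $\bar V_n'(c)=n^{-1}\sum_{i\in\bbU}\pi_i w'(c^{\T}x_i)x_ix_i^{\T}$ are $O(1)$, and the latter is positive definite with eigenvalues bounded away from zero by Assumptions~\ref{assum:positivity} and \ref{assm:regularity}(a) (it is the matrix $\Upsilon_N'$ of the proof of Remark~\ref{rmk:unique}, up to scaling).

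First I would establish consistency $\widehat{c}\to_{\P}c^{*}$. Since $\bar V_n'$ is uniformly positive definite, $\bar V_n$ is strictly monotone with the isolated zero $c^{*}$; together with a uniform law of large numbers $\sup_{c\in\mathcal{K}}\|V_n(c)-\bar V_n(c)\|\to_{\P}0$ over a fixed compact neighborhood $\mathcal{K}$ of $c^{*}$ --- which holds because $V_n(c)-\bar V_n(c)=n^{-1}\sum_{i\in\bbU}(\delta_i-\pi_i)w(c^{\T}x_i)x_i$ is an average of conditionally mean-zero summands that are uniformly bounded on $\mathcal{K}$ by Assumption~\ref{assm:regularity}(b) and the smoothness of $w$ --- the convexity of $G$ (whose gradient $U=nV_n$ points outward from $c^{*}$ on $\partial\mathcal{K}$) forces $\widehat{c}\in\mathcal{K}$ with probability tending to one, and the argument of Remark~\ref{rmk:unique} then yields $\widehat{c}\to_{\P}c^{*}$. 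The rate restriction $\gamma_n=o(n^{1/2}q^{-1/2})$ in Theorem~\ref{thm:theta_linear} keeps $T_r=N^{-1}\sum_{i\in\bbU}(x_{1i}^{\T}M_{\bbS}+x_{2i}^{\T}R_{\bbS})=O_{\P}(\gamma_n q^{1/2}n^{-1})$ negligible, so $c^{*}$ indeed stays in a fixed compact set as $n\to\infty$.

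Given consistency, I would linearize exactly as in the proof of Theorem~\ref{thm:theta_linear}: a first-order Taylor expansion gives $0=V_n(\widehat{c})=V_n(c^{*})+V_n'(\bar c)(\widehat{c}-c^{*})$ for some $\bar c$ on the segment joining $\widehat{c}$ and $c^{*}$, so $\widehat{c}-c^{*}=-\{V_n'(\bar c)\}^{-1}V_n(c^{*})$. By consistency $\bar c\to_{\P}c^{*}$, and $V_n'\to_{\P}\bar V_n'$ uniformly on $\mathcal{K}$, so $V_n'(\bar c)\to_{\P}\bar V_n'(c^{*})$, which is invertible with bounded inverse; hence $\{V_n'(\bar c)\}^{-1}=O_{\P}(1)$. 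For the score term, $\bar V_n(c^{*})=0$ by the defining equation of $c^{*}$, so $V_n(c^{*})=V_n(c^{*})-\bar V_n(c^{*})=n^{-1}\sum_{i\in\bbU}(\delta_i-\pi_i)w(c^{*\T}x_i)x_i$, whose conditional variance equals $n^{-2}\sum_{i\in\bbU}\pi_i(1-\pi_i)w(c^{*\T}x_i)^2 x_ix_i^{\T}=O(n^{-1})$ by Assumptions~\ref{assum:positivity} and \ref{assm:regularity}; Chebyshev's inequality then gives $V_n(c^{*})=O_{\P}(n^{-1/2})$. Combining, $\|\widehat{c}-c^{*}\|\le\|\{V_n'(\bar c)\}^{-1}\|\,\|V_n(c^{*})\|=O_{\P}(1)\cdot O_{\P}(n^{-1/2})=O_{\P}(n^{-1/2})$.

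The main obstacle is the consistency step. The dual weight function $w=(Q')^{-1}$ can be unbounded on all of $\mathbb{C}$ (for instance under the empirical-likelihood or entropy losses in Table~\ref{tab:obj_propensity}), so the uniform law of large numbers and the uniform control of the Jacobian $V_n'$ must be carried out on a compact region first and then transferred to $\widehat{c}$ via the convexity of $G$ and the localization estimates already developed in the proof of Remark~\ref{rmk:unique}; one must also check that the $\bbS$-dependent quantities $M_{\bbS},R_{\bbS},T_r$ entering both $c^{*}$ and $G$ do not destabilize this region, which is precisely what the rate condition on $\gamma_n$ secures. The remaining pieces --- the Taylor expansion, the continuous-mapping step for $\{V_n'(\bar c)\}^{-1}$, and the Chebyshev bound on $V_n(c^{*})$ --- are routine.
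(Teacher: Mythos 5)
Your proposal is correct and follows essentially the same route as the paper's proof: a first-order Taylor expansion of the score equation $U(\widehat{c})=0$ around $c^{*}$, an $O_{\P}(n^{-1/2})$ bound on the (normalized) score at $c^{*}$ via a conditional second-moment/Chebyshev argument, and an $O_{\P}(1)$ bound on the inverse Jacobian using positive-definiteness and the compact localization set $\mathbb{C}_r^{*}$ from Remark~\ref{rmk:unique}. You are somewhat more explicit than the paper on two points it leaves implicit --- the consistency step placing $\widehat{c}$ in the compact neighborhood, and the lower bound on the Jacobian's eigenvalues needed to invert it --- but these are refinements of, not departures from, the same argument.
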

By Lemma \ref{lm:c_hat} and $w'(\cdot)$ is continuous function, we have 
\begin{align*}
    \widehat{\theta}_w(\widehat{c})&= \widehat{\theta}_{w}(c^{*})-N^{-1}\left\{ \sum_{i\in\bbU}\delta_{i}w'(c^{*\T}\bx_{i})\bx_{i}y_{i}\right\} \left\{ \sum_{i\in\bbU}\delta_{i}w'(c^{*\T}\bx_{i})\bx_{i}\bx_{i}^{\T}\right\} ^{-1} U(c^{*})+o_{\P}(n^{-1/2})\\
    & =\widehat{\theta}_{w}(c^{*})+N^{-1}\Gamma(c^{*})i(c^{*}) U(c^{*})+o_{\P}(n^{-1/2}),
\end{align*}
with $\Gamma(c) = \sum_{i\in\bbU}\delta_iw'(c^{\T}\bx_i)x_iy_i$ and $i(c) = -\sum_{i\in\bbU}\delta_iw'(c^{\T}\bx_i)\bx_i\bx^{\T}_i$. Appeal to the linearization formula in \cite{kim2009}, we obtain $\widehat{\theta}_{w}(\widehat{c})=N^{-1}\sum_{i\in\bbU}\psi_{i}(\delta_{i},y_{i},\bx_{i};c^{*})+o_{\P}(n^{-1/2})$ where $$\psi_{i}(\delta_{i},y_{i},\bx_{i};c^{*})	=B(c^{*})\bx_{i,\rmsc}+\delta_{i}w_{i}(c^{*})\{y_{i}-B(c^{*})\bx_{i}\},$$ 
and $B(c)=-\Gamma(c)\left\{ i(c)\right\} ^{-1}$, $\bx_{i,\rmsc}=\left\{ \bx_{1i}^{\T},\bx_{1i}^{\T}M_{\bbS}+\bx_{2i}^{\T}(I_{q}+R_{\bbS})\right\}^{\T}$. By the central limit theorem, it follows that $n^{1/2}\left\{ \widehat{\theta}_{w}(\widehat{c})-\theta_N\right\} \rightarrow N\left[0,nN^{-2}\var\left\{ \sum_{i\in\bbU}\psi(\delta_{i},y_{i},\bx_{i};c^{*})-\theta_N\mid X_{\bbU}\right\} \right]$, where $\var(\cdot)$ is taken with respect to the joint distribution of the mixed-effects model $\zeta$ and the sampling mechanism $\delta$.
\end{proof}
We now prove the double robustness of the soft-calibration estimator in a sense that $\widehat{\theta}_w$ is consistent if either the outcome $y_{ij}$ follows a linear mixed-effects model or $Q(w)$ entails a correct propensity score model for $P(\delta_i=1\mid x_i,u)$. 
\begin{assumption}[Outcome model]
\label{assmp:outcome}
The linear mixed-effects model (\ref{mixed}) is a correct specification for the study variable $y_{i}$, that is, $y_i = x_{1i}^{\T}\beta + 
x_{2i}^{\T} u + e_i$.
\end{assumption}
\begin{assumption}[Calibration weight model]
\label{assmp:weight}
 The parametric model $w(c^{\T}x)$ induced by the objective function $Q(w)$ is a correct specification for $\pi_i^{-1}={P(\delta_i=1\mid x_i,u)}^{-1}$, that is, $w(c_t^{*\T}x)=\pi_i^{-1}$, where $c_t^{*}$ is the true parameter.
\end{assumption}

\begin{proof}
1. Under Assumption \ref{assmp:outcome}, we have
\begin{align*}
    \widehat{\theta}_{w}	=	\frac{1}{N}\sum_{i\in\bbU}\delta_{i}w(\widehat{c}^\T \bx_i) e_i +\frac{1}{N}\sum_{i\in\bbU} (\bx_{1i}^{\T}\beta + \bx_{2i}^{\T}\bu)+T_{r}\bu,
\end{align*}
where $T_r = N^{-1}\sum_{i\in\bbU}(\bx_{1i}^{\T}M_{\bbS}+\bx_{2i}^{\T}R_{\bbS})$. The difference between the soft calibration $\widehat{\theta}_w$ and the finite population mean $\theta_N$ under the linear mixed model is
\begin{align}
N^{-1}\left\{\sum_{i=1}^N \delta_i w\left(\widehat{c}^{\T} x_i\right) y_i-\sum_{i=1}^N y_i\right\} &=N^{-1}\sum_{i=1}^N\left\{\delta_i w\left(\widehat{c}^{\T} x_i\right)-1\right\}\left(x_{1 i}^{\T} \beta+x_{2 i}^{\T} u+e_i\right) \nonumber\\
&=N^{-1} \gamma_n\left(1_N^{\T} X_{1,\bbU} D_{12}+1_N^{\T} X_{2,\bbU} D_{22}\right) D_q^{-1} u \label{eq:bias-soft}\\
&+N^{-1} \sum_{i=1}^N\left\{\delta_i w\left(\widehat{c}^{\T} x_i\right)-1\right\} e_i,  \label{eq:remainder-soft}
\end{align}
where (\ref{eq:bias-soft}) is incurred by the approximated imbalance of $X_2$ and (\ref{eq:remainder-soft}) is incurred by the random error. For (\ref{eq:bias-soft}), we have
\begin{align*}
\var_{\zeta}\{(1_N^{\T}X_{1,\bbU}D_{12}+1_N^{\T}X_{2,\bbU}D_{22})D_q^{-1}u\mid X_{\bbU},\bbS\}
&= 
\{(1_N^{\T}X_{1,\bbU}D_{12}+1_N^{\T}X_{2,\bbU}D_{22})D_q^{-1/2}\}^{\otimes 2}\sigma_u^2\\
&=O_{\P}(N^{2}n^{-2}q),
\end{align*}
where $\max _{i \in U}\left\|x_i\right\|^2 \leq q C$ by Assumption \ref{assm:regularity}(b). Also, note that $\|D_{12} D_{21}\|-\|D_{11}^2\| \leq\|D_{11}^2+D_{12} D_{21}\| \leq\|\{X_{\bbS}^{\T} X_{\bbS}+\gamma_n \text{diag}(0, D_q^{-1})\}^{-2}\|$ and $\|D_{11}\| \leq\|\{X_{\bbS}^{\T} X_{\bbS}+\gamma \text{diag}(0, D_q^{-1})\}^{-1}\|$ by Assumption \ref{assm:regularity}(a). Hence, $\gamma_n\|D_{12}\| \leq \sqrt{2} \gamma_n\|\{X_{\bbS}^{\T} X_{\bbS}+\gamma_n \text{diag}(0, D_q^{-1})\}^{-1}\| \leq \sqrt{2} C_0 \gamma_n n^{-1}$, $\gamma_n\|D_{22}\| \leq C_0 \gamma_n n^{-1}$ for constant $C_0$ \citep[Lemma 1]{dai2018broken}. For (\ref{eq:remainder-soft}), we have
\begin{align*}
\var_{\zeta}\left[
\sum_{i=1}^N
\{\delta_iw(\widehat{c}^{\T}x_i)-1\}e_i\mid X_{\bbU},\bbS
\right]&=\sum_{i=1}^N
\{\delta_iw(\widehat{c}^{\T}x_i)-1\}^2\sigma_e^2=O_{\P}(N^2n^{-1}).
\end{align*}
Therefore, we have
$$
\E\{
(\widehat{\theta}_w-\theta_N)^2
\mid X_{\bbU}
\}=
\E_{\delta}
\left[
\E_{\zeta}\{
(\widehat{\theta}_w-\theta_N)^2
\mid X_{\bbU},\bbS
\}\mid X_{\bbU}\right]=
O_{\P}(\gamma_n^2n^{-2}q)+
O_{\P}(n^{-1}),
$$
which proves the consistency of $\widehat{\theta}_{w}$ when $\gamma_n=o_{\P}(n^{1/2}q^{-1/2})$.

2. Under Assumption \ref{assmp:weight}, we take the Taylor series of $\widehat{\theta}_w-\theta_N$ around the true parameter $c_t^*$:
$$
\widehat{\theta}_w-\theta_N = 
N^{-1}\sum_{i\in\bbU}\{\delta_iw(c_t^{*\T}x_i)-1\}y_i + 
N^{-1}\sum_{i\in\bbU}\delta_i\{w(\widehat{c}^{\T}x_i)-w(c_t^{*\T}x_i)\}y_i,
$$
where the second term is $O_{\P}=(n^{-1/2})$ under Lemma \ref{lm:c_true}. For the first term, we know that $\E_{\delta}\left[\{\delta_iw(c_t^{*\T}x_i)-1\}y_i\mid X_{\bbU},u, Y_{\bbS}\right]=0$ under Assumption \ref{assum:laten_ignore}. By Chebyshev's inequality, it suffices to show that $N^{-2}\var\left[\sum_{i\in\bbU}\{\delta_iw(c_t^{*\T}x_i)-1\}y_i\mid X_{\bbU},u\right]$ converges to zero, that is,
\begin{align*}
    \var&\left[\sum_{i\in\bbU}\{\delta_iw(c_t^{*\T}x_i)-1\}y_i\mid X_{\bbU},u\right] =
    \var_{e}\left(\E_{\delta}\left[\sum_{i\in\bbU}\{\delta_iw(c_t^{*\T}x_i)-1\}y_i\mid X_{\bbU},u,Y_{\bbS}\right]\mid X_{\bbU},u\right)\\
    &+ \E_{e}\left(\var_{\delta}\left[\sum_{i\in\bbU}\{\delta_iw(c_t^{*\T}x_i)-1\}y_i\mid X_{\bbU},u,Y_{\bbS}\right]\mid X_{\bbU},u\right)\\
    &=\E_{e}\left\{\sum_{i\in\bbU}\pi_{i}^{-1}(1-\pi_i)y_i^2\mid X_{\bbU},u\right\},
\end{align*}
which is in the order of $O_{\P}(N^2n^{-1})$ under Assumption \ref{assum:positivity}. Therefore, we have
\begin{align*}
    \E\{
(\widehat{\theta}_w-\theta_N)^2
\mid X_{\bbU}
\}&=
\E_{u}
\left[\E\{
(\widehat{\theta}_w-\theta_N)^2
\mid X_{\bbU},u
\}\mid X_{\bbU}\right]\\
&=\E_{u}
\left[\E\{
(\widehat{\theta}_w-\theta_N)
\mid X_{\bbU},u
\}^2 + 
\var\{
(\widehat{\theta}_w-\theta_N)
\mid X_{\bbU},u
\}
\mid X_{\bbU}\right]=O_{\P}(n^{-1}),
\end{align*}
which completes the proof.
\end{proof}

\subsection{Proof of Corollary \ref{coro_theta_cluster}}\label{sec:proof of corollary_1}
Assume the following regularity conditions hold in the presence of cluster-specific nonignorable missingness.

\begin{assumption}[Regularity conditions] \label{assum:regular-sample}
(a) The sequence of finite populations $\mathcal{F}_{N}$ is a random
sample from a super-population where $N^{-1}\sum_{i=1}^K\sum_{j=1}^{N_i}y_{ij}^{2+\alpha}$ is bounded for some $\alpha>0$. And the design-weighted complete-data estimator
$\widehat{\theta}_{n}=N^{-1}\sum_{i=1}^{k}\sum_{j=1}^{n_{i}}d_{ij}y_{ij}$
with $d_{ij}=d_{i}N_{i}/n_{i}$ satisfying that $\text{var}(\widehat{\theta}_{n})=O(n^{-1})$
and $\text{var}(\widehat{\theta}_{n})^{-1/2}(\widehat{\theta}_{n}-\theta_N)\rightarrow N(0,1)$; (b) The sampling weights satisfy $\underline{d}<d_{ij}n/N<\overline{d}$
for $i=1,\ldots,k$ and $j=1,\ldots,n_{i}$ for some constants $\underline{d}$
and $\overline{d}>0$; (c) The propensity score is uniformly bounded, that is, $\underline{e}<P(\delta_{ij}=1\mid\bx_{ij},a_{i})<\overline{e}$
for all $\bx_{ij},a_{i}$ and for some constants $0<\underline{e}$, $\overline{e}<1$. 
\end{assumption}

Assumption \ref{assum:regular-sample}(a) is standard in the survey
sampling literature to derive the asymptotic properties of the design-weighted
estimator \citep{fuller09}. Assumptions \ref{assum:regular-sample}(b)
and (c) are the counterparts of Assumption \ref{assum:positivity}, which are required for identification. Follow the similar arguments in the proof of Theorem \ref{thm:theta_linear}, the first-order Taylor expansion of $\widehat{\theta}_{w}$ is 
$$\begin{aligned} 
& \widehat{\theta}_{w}(\widehat{c})=\widehat{\theta}_{w}(c^*)\\
 & -N^{-1}\left\{ \sum_{i=1}^k\sum_{j=1}^{n_i} d_{ij}\delta_{ij} y_{ij}
 \frac{\partial w(c_0^{*\T}x_{ij}+c_1^{*\T}z_{ij})}{\partial c^{\T}} \right\}
 \\
 &\times \left\{ \sum_{i=1}^k\sum_{j=1}^{n_i} d_{ij}\delta_{ij}\frac{\partial w(c_0^{*\T}x_{ij}+c_1^{*\T}z_{ij})}{\partial c^{\T}}
 \begin{pmatrix}\bx_{ij}\\
z_{ij}
\end{pmatrix}^{\T}\right\} ^{-1}\frac{\partial G(c^*)}{\partial c}+o_{\P}(n^{-1/2}).
\end{aligned}$$
Let 
\begin{align*}
    &\Gamma(c)=\sum_{i=1}^k\sum_{j=1}^{n_i} d_{ij}\delta_{ij}\frac{\partial w(c_0^{\T}x_{ij}+c_1^{\T}z_{ij})y_{ij}}{\partial c^{\T}},\quad i(c)=\sum_{i=1}^k\sum_{j=1}^{n_i} d_{ij}\delta_{ij}\frac{\partial w(c_0^{\T}x_{ij}+c_1^{\T}z_{ij})}{\partial c^{\T}}
\begin{pmatrix}\bx_{ij}\\
z_{ij}
\end{pmatrix}^{\T},
\end{align*}
and 
$$
\frac{\partial G(c)}{\partial c}=U(c)=\left[\begin{array}{c}
\sum_{i=1}^{k}\sum_{j=1}^{n_{i}}d_{ij}(\delta_{ij}w_{ij}-1)\bx_{ij}\\
\sum_{i=1}^{k}\sum_{j=1}^{n_{i}}d_{ij}\left\{ \delta_{ij}w_{ij}-\left(I_{k}+R_{\bbS}^{\T}\right)\right\} z_{ij}-\sum_{i=1}^{k}\sum_{j=1}^{n_{i}}d_{ij}M_{\bbS}^{\T}\bx_{ij}
\end{array}\right].
$$
Rearranging the terms leads to
$$
\begin{aligned} & n^{1/2}\left\{ \widehat{\theta}_{w}(\widehat{c})-\theta_N\right\} =n^{1/2}\left\{ \widehat{\theta}_{w}(c^*)-\theta_N\right\} +n^{1/2}N^{-1}\Gamma(c^*)i^{-1}(c^*)U(c^*)+o_{\P}(1)\\
 & =n^{1/2}\left[\frac{1}{N}\sum_{i=1}^{k}d_{i}\frac{N_{i}}{n_{i}}\sum_{j=1}^{n_{i}}\left\{ B(c^{*})\bx_{ij,\rmsc}+\delta_{ij}w(c_0^{*\T}x_{ij}+c_1^{*\T}z_{ij})\eta_{ij}(c^{*})\right\} -\theta_N\right]+o_{\P}(1)\\
 & =n^{1/2}\left\{ \frac{1}{N}\sum_{i=1}^{K}d_{i}\psi_{i}(c^{*})-\theta_N\right\} +o_{\P}(1),
\end{aligned}
$$
where 
$$
\begin{aligned}
&\psi_{i}(c^{*})	=\frac{N_{i}}{n_{i}}\sum_{j=1}^{n_{i}}\left\{ B(c^{*})\bx_{ij,\rmsc}+\delta_{ij}w(c_0^{*\T}x_{ij}+c_1^{*\T}z_{ij})\eta_{ij}(c^{*})\right\} ,\\
&\bx_{ij,\rmsc}	=\left\{ \begin{array}{c}
\bx_{ij}\\
(I_{k}+R_{\bbS}^{\T})z_{ij}+M_{\bbS}^{\T}\bx_{ij}
\end{array}\right\} ,\quad\eta_{ij}(c^{*})=y_{ij}-B(c^{*})\begin{pmatrix}\bx_{ij}\\
z_{ij}
\end{pmatrix},
\end{aligned}
$$
and $B(c^{*})=-\Gamma(c^*)\left\{i(c^*)\right\} ^{-1}$.

\subsection{Soft Calibration under Cluster-specific Nonignorable Treatment Mechanism}\label{sec:nonignore_trt}

Causal inference with clustered data has attracted increasing attention recently \citep{li2013propensity, leite2015evaluation, xiang2015propensity,schuler2016propensity}. Following the potential outcomes framework \citep{rubin1974estimating}, consider the finite population $\mathcal{F}_{N}=\{A_{ij},\bx_{ij},y_{ij}(0),y_{ij}(1):i=1,\ldots,K,j=1,\ldots,N_{i}\}$, where $i$ indexes the cluster and $j$ indexes the unit within each cluster, $A_{ij}\in\{0,1\}$ is the binary treatment assignment, $\bx_{ij}\in\mathbb{R}^{p}$ is a vector of pre-treatment covariates, and $y_{ij}(a)$ is the potential outcome that would be observed had unit $j$ in cluster $i$ received treatment $a$, and $N=\sum_{i=1}^{k}N_{i}$ is the population size. Assume that the sampled cluster data $\{A_{ij},\bx_{ij},y_{ij}(0),y_{ij}(1):j=1,\ldots,n_{i}\}$ follows a superpopulation model for $i=1,\ldots,k$. Specifically, the potential outcome follows a linear mixed-effects model
\begin{equation*}
    y_{ij}(a)=\bx_{ij}^{\T}\beta_{a}+z_{ij}^{\T}\bu_{a}+e_{ij}(a),
\end{equation*}
where $\bx_{ij}$ and $z_{ij}$ adopt the same definition in Corollary \ref{coro_theta_cluster}, $\beta_{a}=\beta_{1}1(a=1)+\beta_{0}1(a=0)$ and $u_{a}=u_{1}1(a=1)+u_{0}1(a=0)$ represent fixed effects and latent cluster-level confounding effects for treatment $a$, respectively, and $e_{ij}(a)$ are independent errors for $a=0,1$. Based on potential outcomes, the causal estimand is the population average treatment effect $\tau_N=N^{-1}\sum_{i=1}^{K}\sum_{j=1}^{N_{i}}\{y_{ij}(1)-y_{ij}(0)\}$. 

We consider the two-stage cluster sampling design in $\S$\ref{subsec:Cluster-specific-nonignorable-mi} for data collection, where the observed outcome is $y_{ij}=y_{ij}(A_{ij})$. Our notation implicitly makes the Stable Unit and Treatment Version assumption, where the potential outcomes for each unit are not affected by the treatments assigned to other units \citep{rubin1978bayesian}.
\begin{assumption}[Regularity conditions]\label{assum:regular-causal}
 Assumption \ref{assum:regular-sample}(a) and (b) hold. Assume further that the propensity score is uniformly bounded, that is, $\underline{e}<P(A_{ij}=a\mid {x}_{ij},u_{a})<\overline{e}$, $a=0,1$ for any $\bx_{ij}, u_{a}$ and some constants $0<\underline{e},\overline{e}<1$. 
\end{assumption}
Similarly, we impose the following soft calibration constraints to balance covariates and random cluster effects between treatment $a=0,1$ and their combined group: 	
\begin{align*}
  &\sum_{i=1}^{k}\sum_{j=1}^{n_{i}}d_{ij}A_{ij}w_{1,ij}\bx_{ij}=\sum_{i=1}^{k}\sum_{j=1}^{n_{i}}d_{ij}(1-A_{ij})w_{0,ij}\bx_{ij}=\sum_{i=1}^{k}\sum_{j=1}^{n_{i}}d_{ij}\bx_{ij},\\
	&\sum_{i=1}^{k}\sum_{j=1}^{n_{i}}d_{ij}A_{ij}w_{1,ij}z_{ij}=\sum_{i=1}^{k}\sum_{j=1}^{n_{i}}d_{ij}z_{ij}+\sum_{i=1}^{k}\sum_{j=1}^{n_{i}}d_{ij}M_{\bbS,1}^{\T}\bx_{ij}+\sum_{i=1}^{k}\sum_{j=1}^{n_{i}}d_{ij}R_{\bbS,1}^{\T}z_{ij},\\
	&\sum_{i=1}^{k}\sum_{j=1}^{n_{i}}d_{ij}(1-A_{ij})w_{0,ij}z_{ij}=\sum_{i=1}^{k}\sum_{j=1}^{n_{i}}d_{ij}z_{ij}+\sum_{i=1}^{k}\sum_{j=1}^{n_{i}}d_{ij}M_{\bbS,0}^{\T}\bx_{ij}+\sum_{i=1}^{k}\sum_{j=1}^{n_{i}}d_{ij}R_{\bbS,0}^{\T}z_{ij},
\end{align*}
where $w_{a,ij}$ is the calibration weight for treatment $a=0, 1$. By minimizing the chosen loss functions for the treated $(A_{ij}=1)$ and the control $(A_{ij}=0)$ as our objectives, our soft calibrated estimator of $\tau_N$ can be obtained by
$$
\widehat{\tau}_{w}=\frac{1}{N}\sum_{i=1}^{k}\sum_{j=1}^{n_{i}}d_{ij}\left\{ A_{ij}w(c_{10}^{*\T}x_{ij} + c_{11}^{*\T}z_{ij})y_{ij}-(1-A_{ij})w(c_{00}^{*\T}x_{ij} + c_{01}^{*\T}z_{ij})y_{ij}\right\},$$
$c_{0}= (c_{00}^{\T}, c_{01}^{\T})^{\T}$, and $c_{1}= (c_{10}^{\T}, c_{11}^{\T})^{\T}$. Following Theorem \ref{thm:theta_linear}, we derive the asymptotic properties of $\widehat{\tau}_{w}$.

\begin{corollary}\label{coro_cluster_soft_causal}
Under Assumption \ref{assum:laten_ignore}, \ref{assm:regularity}, and \ref{assum:regular-causal}, if either the outcome $y_{ij}$ follows a linear mixed-effects model or $Q(w)$ entails a correct propensity score model, we have 
$n^{1/2}(\widehat{\tau}_{w}-\tau_N)\rightarrow N(0,V_{1})$ as $n\rightarrow\infty$, where $V_{1}=\lim_{n\rightarrow\infty}nN^{-2}\var\left\{ \sum_{i=1}^{k}d_{i}\phi_{i}(c_{0}^{*},c_{1}^{*})-\tau_N \mid X_{\bbU}\right\}$  with
\begin{align*}
    \phi_i(c_0^*,c_1^*) = 
    \frac{N_i}{n_i}\sum_{j=1}^{n_i}
    &\left[
    \{A_{ij}w(c_{10}^{*\T}x_{ij} + c_{11}^{*\T}z_{ij})
    \{y_{ij}- B_1(c_1^*)(x_{ij}^{\T},z_{ij}^{\T})^{\T}\} + B_1(c_1^*)x_{ij,\rmsc}\}\right.\\
    &\left.-
    \{
    (1-A_{ij})w(c_{00}^{*\T}x_{ij} + c_{01}^{*\T}z_{ij})  \{y_{ij}- B_0(c_0^*)(x_{ij}^{\T},z_{ij}^{\T})^{\T}\} + B_0(c_0^*)x_{ij,\rmsc}
    \}
    \right],
    \end{align*}
    where $\{B_0(c),B_1(c)\}$ are defined similarly as in Corollary \ref{coro_theta_cluster} with $\delta_{ij}$ replaced by $1-A_{ij}$ and $A_{ij}$. 
\end{corollary}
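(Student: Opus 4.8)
The plan is to write $\widehat{\tau}_{w}$ as the difference of two soft-calibration estimators of the kind analyzed in Corollary~\ref{coro_theta_cluster}, one for each potential-outcome mean, and then to add the two linearizations. Put $\theta_{N,a}=N^{-1}\sum_{i=1}^{K}\sum_{j=1}^{N_i}y_{ij}(a)$ so that $\tau_N=\theta_{N,1}-\theta_{N,0}$, and split $\widehat{\tau}_{w}=\widehat{\theta}_{w,1}-\widehat{\theta}_{w,0}$ with $\widehat{\theta}_{w,1}=N^{-1}\sum_{i=1}^{k}\sum_{j=1}^{n_i}d_{ij}A_{ij}w(c_{10}^{\T}x_{ij}+c_{11}^{\T}z_{ij})y_{ij}$ and $\widehat{\theta}_{w,0}=N^{-1}\sum_{i=1}^{k}\sum_{j=1}^{n_i}d_{ij}(1-A_{ij})w(c_{00}^{\T}x_{ij}+c_{01}^{\T}z_{ij})y_{ij}$. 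For each $a\in\{0,1\}$, $\widehat{\theta}_{w,a}$ is exactly the soft-calibration estimator of Corollary~\ref{coro_theta_cluster} with ``selection indicator'' $\iota_{a,ij}$, where $\iota_{1,ij}=A_{ij}$ and $\iota_{0,ij}=1-A_{ij}$, observed outcome $y_{ij}(a)$ available precisely when $\iota_{a,ij}=1$, and latent ignorability $A_{ij}\indep y_{ij}(a)\mid(x_{ij},u_a)$ supplied by Assumption~\ref{assum:laten_ignore}; positivity of each propensity score and the two-stage design conditions are guaranteed by Assumption~\ref{assum:regular-causal}, and the auxiliary-variable conditions of Assumption~\ref{assm:regularity} ensure the relevant information matrices are invertible. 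Crucially, the loss functions and the soft-calibration constraints for the treated and control arms form two separate optimization problems, so the Lagrange multipliers $\widehat{c}_0$ and $\widehat{c}_1$ are computed independently.

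First I would transfer, arm by arm, the linearization established in the proof of Corollary~\ref{coro_theta_cluster}. That argument yields $\|\widehat{c}_a-c_a^*\|=O_{\P}(n^{-1/2})$ (the analogue of Lemma~\ref{lm:c_hat}, which goes through unchanged because each arm-specific objective $G_a$ is strictly convex under the conditions of Remark~\ref{rmk:unique}), and hence
\[
\widehat{\theta}_{w,a}(\widehat{c}_a)-\theta_{N,a}=\frac{1}{N}\sum_{i=1}^{K}d_i\psi_{i,a}(c_a^*)-\theta_{N,a}+o_{\P}(n^{-1/2}),
\]
where $\psi_{i,a}(c_a^*)=(N_i/n_i)\sum_{j=1}^{n_i}\{B_a(c_a^*)x_{ij,\rmsc}+\iota_{a,ij}w(c_{a0}^{*\T}x_{ij}+c_{a1}^{*\T}z_{ij})\eta_{ij,a}(c_a^*)\}$, with $x_{ij,\rmsc}$ built from the arm-$a$ adjustment matrices $(M_{\bbS,a},R_{\bbS,a})$, $\eta_{ij,a}(c_a^*)=y_{ij}(a)-B_a(c_a^*)(x_{ij}^{\T},z_{ij}^{\T})^{\T}$, and $B_a(c_a^*)=-\Gamma_a(c_a^*)\{i_a(c_a^*)\}^{-1}$ for the arm-specific matrices $\Gamma_a,i_a$ defined as in Corollary~\ref{coro_theta_cluster} with $\delta_{ij}$ replaced by $\iota_{a,ij}$. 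Subtracting the $a=0$ expansion from the $a=1$ expansion gives
\[
\widehat{\tau}_{w}-\tau_N=\frac{1}{N}\sum_{i=1}^{K}d_i\{\psi_{i,1}(c_1^*)-\psi_{i,0}(c_0^*)\}-\tau_N+o_{\P}(n^{-1/2})=\frac{1}{N}\sum_{i=1}^{K}d_i\phi_i(c_0^*,c_1^*)-\tau_N+o_{\P}(n^{-1/2}),
\]
which is precisely the stated influence-function representation of $\phi_i$.

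Next I would pass to the limiting distribution. The terms $\{d_i\phi_i(c_0^*,c_1^*):i=1,\dots,k\}$ are the cluster-level contributions of the two-stage sampling design applied to the pseudo-population $\{\phi_i(c_0^*,c_1^*)\}$, so the design central limit theorem postulated in Assumption~\ref{assum:regular-sample}(a), together with the moment bounds on $y_{ij}$, yields $n^{1/2}\{N^{-1}\sum_{i=1}^{K}d_i\phi_i(c_0^*,c_1^*)-\tau_N\}\to N(0,V_1)$ with $V_1=\lim_{n\to\infty}nN^{-2}\var\{\sum_{i=1}^{k}d_i\phi_i(c_0^*,c_1^*)-\tau_N\mid X_{\bbU}\}$; as in Corollary~\ref{coro_theta_cluster}, under two-stage sampling the super-population component of the variance is of smaller order than the design component \citep{shao99}, so only $V_1$ survives. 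Double robustness is then inherited arm by arm, exactly as in the proof of Theorem~\ref{thm:theta_linear}: when the outcome follows the linear mixed-effects model, $B_a(c_a^*)$ reproduces the mixed-model fit for arm $a$ and the residual approximate-calibration bias in $\widehat{\theta}_{w,a}$ is $O_{\P}(\gamma_n n^{-1}q^{1/2})$, negligible under $\gamma_n=o(n^{1/2}q^{-1/2})$; when instead $Q(w)$ specifies the propensity score correctly, $w(c_{a0}^{*\T}x_{ij}+c_{a1}^{*\T}z_{ij})=P(A_{ij}=a\mid x_{ij},u_a)^{-1}$ and $\E\{\iota_{a,ij}w(c_{a0}^{*\T}x_{ij}+c_{a1}^{*\T}z_{ij})y_{ij}-y_{ij}(a)\mid x_{ij},u_a\}=0$, so each $\widehat{\theta}_{w,a}$ is consistent by the Chebyshev argument used for Theorem~\ref{thm:theta_linear}.

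The main obstacle is bookkeeping rather than a new idea: one must check that the two arm-specific remainder terms are simultaneously $o_{\P}(n^{-1/2})$ and that the dependence between the treated and control linearizations --- arising from shared clusters, the common two-stage design, and possibly correlated $(u_0,u_1)$ --- is correctly folded into the single variance $V_1$ rather than spawning cross-terms requiring separate treatment. Because the treated and control optimizations decouple and Corollary~\ref{coro_theta_cluster} already handles a single such estimator, the task reduces to invoking that corollary twice and collecting terms.
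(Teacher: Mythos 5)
Your proposal is correct and follows essentially the same route the paper intends: the paper gives no separate proof of this corollary beyond the remark that it follows from Theorem \ref{thm:theta_linear} (and Corollary \ref{coro_theta_cluster}), applied once to each treatment arm with $\delta_{ij}$ replaced by $A_{ij}$ and $1-A_{ij}$ and the two linearizations subtracted. Your added care about arm-specific adjustment matrices $(M_{\bbS,a},R_{\bbS,a})$ and about folding the cross-arm dependence into the single design variance $V_1$ is a useful tightening of bookkeeping the paper leaves implicit, not a different argument.
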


\subsection{Detailed Implementation for Soft Calibration}\label{sec:implement_details}
\begin{proof}[Poof of Proposition \ref{prop:dual-soft}]
    By soft calibration, our goal is to minimize $\sum_{i\in \bbU}\delta_i Q(w_i)$ while subject to $\sum_{i\in \bbU}\delta_i w_i x_{i} = NT_X$. We can show its dual problems by assigning a non-negative Lagrange multiplier $c$:
    \begin{align*}
        &\min_w\left\{
    \sum_{i\in\bbU} \delta_i Q(w_i) -
    \langle c, \sum_{i\in\bbU}\delta_i w_i x_i - NT_X\rangle
    \right\}
    = \sum_{i\in\bbU} \delta_i
    \min_{w_i}
    \left\{
    Q(w_i) -
    \langle c, w_i x_i\rangle
    \right\} + \langle c, NT_X \rangle
    \\
    &= \sum_{i\in\bbU} \delta_i
    \min_{w_i}
    \left\{
    Q(w_i) -
    \langle c^\T x_i, w_i\rangle
    \right\} + \langle c, NT_X \rangle
    \\
    &= -\sum_{i\in\bbU} \delta_i
    g(c^\T x_i) + \langle c, NT_X \rangle,
    \end{align*}
    where $g(z)$ is the convex conjugate function of $Q(w)$, defined by
    $$
    g(z) = \sup_{w}\{\langle z, w\rangle - Q(w)\} = 
    z\cdot(Q')^{-1}(z) - Q\{(Q')^{-1}(z)\},
    $$
    with $w(z) = g'(z) = (Q')^{-1}(z)$. By \cite{tseng1987relaxation}, the dual problem will be the unconstrained convex optimization 
    $$
    \widehat{c} = \arg\max_c\left\{
    -\sum_{i\in\bbU} \delta_i
    g(c^\T x_i) + N\langle c, T_X \rangle
    \right\}=
    \arg\min_c\left\{ \sum_{i\in\bbU} \delta_i
    g(c^\T x_i) - N\langle c, T_X \rangle\right\},
    $$
    which is proved to be convex in Lemma \ref{lem:unconstrained_convex} and can be solved via Newton-type method in Algorithm \ref{alg:newton-minimization}; see other methods for optimization within a bounded support $\mathbb{W}$ in \cite{devaud2019deville}.
    \end{proof}
    \begin{lemma}
\label{lem:unconstrained_convex}
$g(z)$ is a convex function as long as $Q(w)$ is convex.
\end{lemma}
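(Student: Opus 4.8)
The plan is to prove the lemma by the classical observation that a Legendre--Fenchel conjugate is automatically convex, and then to specialize to the smooth, strictly convex regime used throughout the paper in order to recover the explicit formula $g(z) = z\,(Q')^{-1}(z) - Q\{(Q')^{-1}(z)\}$ and its positive second derivative.

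First I would return to the variational definition $g(z) = \sup_{w\in\mathbb{W}}\{zw - Q(w)\}$. For each fixed $w$, the map $z \mapsto zw - Q(w)$ is affine in $z$, hence convex, and a pointwise supremum of an arbitrary family of convex functions is convex on the set where it is finite. Therefore $g$ is convex on its effective domain, and in particular on the image set $I(\mathbb{W})$ of $Q'$ over which it is used in (\ref{eq:est_alpha_beta}). This step does not even use convexity of $Q$; the hypothesis that $Q$ is smooth and strictly convex enters only to guarantee that the supremum is attained at the interior point $w(z) = (Q')^{-1}(z)$, so that the closed form quoted after Proposition \ref{prop:dual-soft} is valid and $g$ is differentiable with $w(z) = g'(z)$. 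As an immediate consequence, $c \mapsto \delta_i\, g(c^{\T}x_i)$ is convex, being the composition of a convex function with an affine map, so $G(c)$ in (\ref{eq:soft-dual}) is convex, which is what is actually needed for the Newton-type solver to target a global optimum.

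Second, for completeness I would verify the closed form and compute the curvature directly in the scalar case (all loss functions in Table \ref{tab:obj_propensity} are separable, so the vector statement reduces to this). Writing $w(z) = (Q')^{-1}(z)$, the first-order optimality condition for the sup is $Q'(w(z)) = z$; the envelope theorem (equivalently, differentiating $g$ and using $Q'(w(z)) = z$) gives $g'(z) = w(z)$, and differentiating once more yields $g''(z) = \{(Q')^{-1}\}'(z) = 1/Q''\{(Q')^{-1}(z)\}$, which is nonnegative whenever $Q'' \ge 0$ and strictly positive when $Q$ is strictly convex. Hence $g$ is convex, and strictly convex when $Q$ is. One can cross-check against Table \ref{tab:obj_propensity}: for the squared loss $Q(w) = q_i^{-1}(w-1)^2/2$ one gets $g(z) = z + q_i z^2/2$ with $g'' = q_i > 0$, and similarly for the entropy, empirical-likelihood and maximum-entropy losses.

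The only point requiring care --- rather than any real difficulty --- is the behaviour at the boundary of $\mathbb{W}$. When $\mathbb{W}$ is open, one must argue that the supremum defining $g(z)$ is not attained in the limit $w \to \partial\mathbb{W}$, so that the interior first-order condition and the closed form apply; the coercivity assumption $\lim_{w\to\partial\mathbb{W}}|Q(w)| = \infty$ imposed in Remark \ref{rmk:unique} is exactly what prevents this. Once that is settled, convexity of $g$ is immediate from the supremum-of-affine-functions representation, and no further computation is needed.
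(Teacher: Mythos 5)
Your proof is correct, and your second computation---differentiating the identity $g\{Q'(z)\}=Q'(z)z-Q(z)$ twice to obtain $g''(z)=1/Q''\{(Q')^{-1}(z)\}>0$---is exactly the argument the paper gives. What you add, and what the paper does not do, is the preliminary observation that $g(z)=\sup_{w}\{zw-Q(w)\}$ is a pointwise supremum of functions affine in $z$ and is therefore convex on its effective domain regardless of any smoothness, strict convexity, or even convexity of $Q$. This is a genuinely more robust route: the paper's computation tacitly requires $Q$ to be twice differentiable with $Q''>0$ so that $(Q')^{-1}$ exists and the formula $g''=1/Q''$ makes sense, whereas your conjugacy argument delivers the stated conclusion (convexity of $g$, hence convexity of $G(c)$ in (\ref{eq:soft-dual}) after composition with the affine map $c\mapsto c^{\T}x_i$) under the bare hypothesis of the lemma. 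Your remark on attainment of the supremum near $\partial\mathbb{W}$ and the role of the coercivity condition from Remark \ref{rmk:unique} is a useful clarification that the paper leaves implicit; it is needed only to justify the closed form $g(z)=z(Q')^{-1}(z)-Q\{(Q')^{-1}(z)\}$ and the identity $w(z)=g'(z)$, not for convexity itself.
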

    \begin{algorithm}[!ht]
\caption{\label{alg:newton-minimization} Newton-type algorithm for solving the unconstrained convex problem $L(c)$}
\textbf{Input}: $c^{(0)}=0$, $w^{(0)}=w(0)$, error tolerance $tol$, and the number of iterations $B$.\\
\For {$b=1,\cdots,B$}
{
$c^{(b)} = c^{(b-1)} - \{\nabla^2 L(c^{(b-1)})\}^{-1}\nabla L(c^{(b-1)})$.\\
$w^{(b)} = w\{c^{(b)\T} x_i\},i\in\bbS$.\\
\If{$\|w^{(b)} - w^{(b-1)}\|_{\infty}<tol$}{\text{break}}
}
$\widehat{w}=w\{c^{(B)\T} x_i\},i\in\bbS$.
\end{algorithm}
Let 
$$
L(c) = \sum_{i\in\bbU} \delta_i
g(c^\T x_i) - N\langle c, T_X \rangle,
\quad \nabla L(c)=\sum_{i\in \bbU}\delta_i g'(c^{\T}x_i)x_i,
\quad \nabla^2 L(c)=\sum_{i\in \bbU}\delta_i g''(c^{\T}x_i)x_ix_i^{\T},
$$
therefore, if one post-stratum in the population level is not selected in the sample, it will not contribute to the gradient. The Moore-Penrose generalized inverse can be used if $\nabla^2 L(c)$ is not invertible. Also, a small value $\epsilon>0$ can be added to the diagonal of $\nabla^2 L(c)$ to bypass the non-invertibility; other approaches for relaxing or prioritizing the restrictions can be found in \cite{montanari2007multiple} and \cite{williams2019optimization}. 

\subsection{Detailed Discussion for Bias-corrected Estimator}\label{sec:proof_rmk_dr}
Suppose that the outcome model $\widehat{\mu}_i$ is parameterized as $\widehat{\mu}=\mu(x_i;\widehat{\xi})$, where $\widehat{\xi}$ is the solution to $\Phi(\xi)=\sum_{i\in\bbU}\Phi(x_i,y_i,\delta_i;\xi)=0$. Let $\xi^*$ be the solution to $\E\{\Phi(x,y,\delta;\xi)\mid x_i\}=0$. Assume the following regularity conditions hold.
\begin{assumption}[Regularity conditions for $\Phi(x,y,\delta;\xi)$]
\label{assum:regularity-Phi-xi}
    (a) There exists an open set $\Xi$ that contains the true parameter $\xi^*$, where the first and the second derivative of $\Phi(x,y,\delta;\xi)$ exist. (b) For any $\xi\in\Xi$, the first derivative of $\Phi(x,y,\delta;\xi)$ is finite. Also, for any $\xi\in\Xi$, there exist function $B(x_i,y_i,\delta_i)$ such that $|\partial^2\Phi(x,y,\delta;\xi)/{\partial \xi_j\partial \xi_k^{\T}}|<B(x_i,y_i,\delta_i)$ for $j,k\in\{1,\cdots,p+q\}$, where $\E\{B(x_i,y_i,\delta_i)\mid x_i\}<\infty$.
\end{assumption}
Under Assumption \ref{assum:regularity-Phi-xi}, the asymptotic properties of $\widehat{\theta}_{\dr}=\widehat{\theta}_{\dr}(\widehat{c},\widehat{\xi})$ can be derived by Taylor's Theorem
\begin{align*}
   \widehat{\theta}_{\dr}(\widehat{c},\widehat{\xi})&=
   {\theta}_{\dr}({c}^*,{\xi}^*)+\frac{\partial {\theta}_{\dr}({c}^*,{\xi}^*)}{\partial ({c}^*,{\xi}^*)^{\T}}
   \begin{pmatrix}
       \widehat{c} - {c}^*\\
       \widehat{\xi} - {\xi}^*
   \end{pmatrix}+o_{\P}(n^{-1/2})\\
   &={\theta}_{\dr}({c}^*,{\xi}^*)+
   \begin{pmatrix}
       N^{-1}\Gamma_{c}({c}^*,{\xi}^*)\\
       N^{-1}\Gamma_{\xi}({c}^*,{\xi}^*)
   \end{pmatrix}^{\T}
   \begin{pmatrix}
   \{i_c(c^*)\}^{-1}U(c^{*})\\
    \{i_\xi(\xi^*)\}^{-1}\Phi(\xi^{*})
   \end{pmatrix}+o_{\P}(n^{-1/2})\\
   &=\frac{1}{N} \sum_{i\in\bbU}\psi_{i,\dr}(c^*,\xi^*)+o_{\P}(n^{-1/2}),
\end{align*}
where 
\begin{align*}
    &\Gamma_{c}(c,\xi)=\sum_{i\in\bbU}\delta_i w'(c^{\T}x_i)x_i\{y_i-\mu(x_i;\xi)\},\quad 
    \Gamma_{\xi}(c,\xi)=-\sum_{i\in\bbU}\{\delta_iw(c^{\T}x_i)-1\}\cdot
       \partial\mu_i/\partial\xi,\\
       &i_c(c) = -\sum_{i\in\bbU}\delta_i w''(c^{\T}x_i)x_ix_i^{\T},
       \quad i_\xi(\xi)=-\sum_{i\in\bbU}\partial\Phi(x_i,y_i,\delta_i;\xi)/\partial \xi,\\
       &\psi_{i,\dr}(c^*,\xi^*)=\delta_i w(c^{*\T}x_i)\{y_i-\mu(x_i;\xi^*)-B(c^*,\xi^*)x_i\} + 
        B(c^*,\xi^*)x_{i,\rmsc}+\mu(x_i;\xi^*)\\
       &B(c^*,\xi^*)=-\Gamma_c(c^*,\xi^*){i_c(c^*)}^{-1},\quad \bx_{i,\rmsc}=\left\{ \bx_{1i}^{\T},\bx_{1i}^{\T}M_{\bbS}+\bx_{2i}^{\T}(I_{q}+R_{\bbS})\right\}^{\T}.
\end{align*}
Hence, we can show that $\widehat{\theta}_{\dr}$ is doubly robust in a sense if the outcome $y_{i}$ is correctly specified by $\mu(x_i;\xi)$ or the propensity score $\pi_i$ is correctly specified by $w(c^{\T}x_i)^{-1}$ following the arguments in \cite{yang2019combining}. In addition, the estimator $\widehat{\theta}_{\dr}$ is likely to have a smaller variance than $\widehat{\theta}_w$ provided that the residuals $y_i- \mu(x_i;\xi)- B(c^{*},\xi^*)x_i$ have smaller variation
than $y_i-B(c^{*})x_i$ as long as $\mu(x_i;\xi)$ is able to partially explain the post-calibration bias. 

More importantly, the asymptotic distribution of $\widehat{\theta}_{\dr}$ can be obtained via the joint randomization framework in a similar manner, which accounts for the variations from the super-population model $\zeta$ and the sampling mechanism $\delta$. As a result, we have $n^{1/2}(\widehat{\theta}_{\dr}-\theta_N)\rightarrow N(0,V_1+V_2)$, where
 $$
V_1=\lim _{n \rightarrow \infty} \frac{n}{N^2} E_\zeta\left[\var_\delta\left\{\sum_{i \in \bbU} \psi_{i,\dr}(c^*,\xi^*) -\theta_N\mid X_{\bbU}, u, Y_{\bbS}\right\}\mid X_{\bbU}\right],
$$
and
$$
V_2=\lim _{n \rightarrow \infty} \frac{n}{N^2} \var_\zeta\left[E_\delta\left\{\sum_{i \in \bbU} \psi_{i,\dr}(c^*,\xi^*) -\theta_N\mid X_{\bbU}, u, Y_{\bbS}\right\}\mid X_{\bbU}\right] .
$$
\begin{example}
To raise an example for the bias-corrected estimator, we fit the outcome model by best linear unbiased predictor as $\widehat{\mu}_i = x_{1i}^{\T}\widehat{\beta} + x_{2i}^{\T}\widehat{u}$ and we can show that $\widehat{\theta}_{\dr} = \widehat{\theta}_w - N^{-1}\sum_{i\in\bbU}\{\delta_i w(\widehat{c}^{\T}x_i)-1\}\widehat{\mu}_i$ is consistent under either Assumption \ref{assmp:outcome} or \ref{assmp:weight}.

1. Under Assumption \ref{assmp:outcome}, we have the bias-corrected estimator as:
\begin{align*}
    \widehat{\theta}_{\dr} &= 
N^{-1}\sum_{i\in\bbU}\delta_i w(\widehat{c}^{\T}x_i)y_i
-N^{-1}\sum_{i\in\bbU}\{\delta_i w(\widehat{c}^{\T}x_i)-1\}(x_{1i}^{\T}\widehat{\beta} + x_{2i}^{\T}\widehat{u}).
\end{align*}
By subtracting it from the finite-population mean, the difference will be 
\begin{align*}
    \widehat{\theta}_{\dr} -\theta_N  &= 
        \frac{1}{N}
        \sum_{i\in\bbU} 
        \{\delta_i w(\widehat{c}^{\T}x_i) -1\}(y_i - \widehat{\mu}_i) \\
        &=
        N^{-1}
        \gamma_n
        (1_N^{\T}X_{1,\bbU}D_{12}+ 1_N^{\T}X_{2,\bbU}D_{22})D_q^{-1}(u-\widehat{u})+
        N^{-1}\sum_{i=1}^N \{\delta_i w(\widehat{c}^{\T}x_i) -1\}e_i.
\end{align*}
Given the block matrix computation, we have
$$
\left(\begin{array}{c}
{\widehat{\beta}}\\
\widehat{{u}}
\end{array}\right)=\left(\begin{array}{cc}
 {D}_{11} & {D}_{12}\\
 {D}_{21} & {D}_{22}
\end{array}\right)\left(\begin{array}{c}
 {X}_{1,\bbS}^{ \T} {Y}_{\bbS}\\
 {X}_{2,\bbS}^{ \T} {Y}_{\bbS}
\end{array}\right)=\left\{ \begin{array}{c}
( {D}_{11} {X}_{1,\bbS}^{ \T}+ {D}_{12} {X}_{2,\bbS}^{ \T}) {Y}_{\bbS}\\
( {D}_{21} {X}_{1,\bbS}^{ \T}+ {D}_{22} {X}_{2,\bbS}^{ \T}) {Y}_{\bbS}
\end{array}\right\} =\left(\begin{array}{c}
 {E}_{1}^{ \T}\\
 {E}_{2}^{ \T}
\end{array}\right) {Y}_{\bbS}.$$
As pointed out in \cite{henderson1975best}, the following identity holds
\begin{align*}
    &\left(\begin{array}{c}
 {E}_{1}^{ \T}\\
 {E}_{2}^{ \T}
\end{array}\right)\left(\begin{array}{cc}
 {X}_{1,\bbS} &  {X}_{2,\bbS}\end{array}\right)	=	\left(\begin{array}{cc}
 {D}_{11} &  {D}_{12}\\
 {D}_{21} &  {D}_{22}
\end{array}\right)\left(\begin{array}{cc}
 {X}_{1,\bbS}^{ \T} {X}_{1,\bbS} &  {X}_{1,\bbS}^{ \T} {X}_{2,\bbS}\\
 {X}_{2,\bbS}^{ \T} {X}_{1,\bbS} &  {X}_{2,\bbS}^{ \T} {X}_{2,\bbS}
\end{array}\right)\\
	&=	\left(\begin{array}{cc}
 {D}_{11} &  {D}_{12}\\
 {D}_{21} &  {D}_{22}
\end{array}\right)\left\{ \left(\begin{array}{cc}
 {X}_{1,\bbS}^{ \T} {X}_{1,\bbS} &  {X}_{1,\bbS}^{ \T} {X}_{2,\bbS}\\
 {X}_{2,\bbS}^{ \T} {X}_{1,\bbS} &  {X}_{2,\bbS}^{\T} {X}_{2,\bbS}+\gamma_n D_{q}^{-1}
\end{array}\right)\right.
		\left.-\left(\begin{array}{cc}
 {0} &  {0}\\
 {0} & \gamma_n D_{q}^{-1}
\end{array}\right)\right\} \\
	&=	\left(\begin{array}{cc}
I_{p} &  {0}\\
 {0} & I_{q}
\end{array}\right)-\left(\begin{array}{cc}
 {0} & \gamma_n {D}_{12}D_{q}^{-1}\\
 {0} & \gamma_n {D}_{22}D_{q}^{-1}
\end{array}\right),
\end{align*}
which means that  
\begin{align*}
   &{E}_{1}^{ \T} {X}_{1,\bbS}	=I_{p},\quad {E}_{1}^{ \T} {X}_{2,\bbS}=-\gamma_n {D}_{12}D_{q}^{-1},\\
 &{E}_{2}^{ \T} {X}_{1,\bbS}	= {0},\quad {E}_{2}^{ \T} {X}_{2,\bbS}=I_{q}-\gamma_n {D}_{22}D_{q}^{-1}. 
\end{align*}
Using these results gives us
\begin{equation}
    \begin{split}
        &\E_{e}(\widehat{\beta}\mid X,u) =  E_1^{\T}X_{1,\bbS}\beta+E_1^{\T}X_{2,\bbS}u=
    \beta -\gamma_n D_{12}D_q^{-1}u,\\
    &\E_{e}(\widehat{u}\mid X,u)=
    E_2^{\T}X_{1,\bbS}\beta+E_2^{\T}X_{2,\bbS}u
    =
    (I_q-\gamma_n D_{22}D_q^{-1})u,
    \\
    &\var_e(\widehat{\beta}	\mid {X}, {u})= 
    \sigma_e^2 E_1^{\T}E_1 = 
    \sigma_e^2 E_1^{\T}X_{1,\bbS}D_{11} +
    \sigma_e^2 E_1^{\T}X_{2,\bbS}D_{21}=
    (D_{11} -\gamma D_{12}D_{q}^{-1}D_{21})\sigma_e^2,
    \\
    &\var_{e}(\widehat{ {u}}	\mid {X}, {u})=\sigma_{e}^{2} {E}_{2}^{ \T} {E}_{2}
	=\sigma_{e}^{2} {E}_{2}^{ \T} {X}_{1,\bbS} {D}_{12}+\sigma_{e}^{2} {E}_{2}^{ \T} {X}_{2,\bbS} {D}_{22}
	=(I_{q}-\gamma_n {D}_{22}D_{q}^{-1}) {D}_{22}\sigma_{e}^{2},
    \end{split}
    \label{eq:beta_u_hat}
\end{equation}
which implies that $\widehat{u} = u +O{\P}(n^{-1/2})$. Thus, we have
\begin{align}
  \E_{\zeta}\{(\widehat{\theta}_{\dr}-\theta_N)^2\mid X_{\bbU}, \bbS\}&=
N^{-2}
\gamma_n^2\{(1_N^{\T}X_{1,\bbU}D_{12}+
1_N^{\T}X_{2,\bbU}D_{22})D_q^{-1}D_{22}D_q^{1/2}\}^{\otimes 2}\sigma_u^2\label{eq:mse_bc1}\\
&+N^{-2}\gamma_n^2\{(1_N^{\T}X_{1,\bbU}D_{12}+
1_N^{\T}X_{2,\bbU}D_{22})D_q^{-1}E_2^{\T}\}^{\otimes 2}\sigma_e^2\label{eq:mse_bc2}\\
&-N^{-2}\gamma_n(1_N^{\T}X_{1,\bbU}D_{12}+
1_N^{\T}X_{2,\bbU}D_{22})D_q^{-1}E_2^{\T}(w_{\rm sc}-1_n)\sigma_e^2\label{eq:mse_bc3}\\
&+N^{-2}(w_{\rm sc}-1_n)^{\T}(w_{\rm sc}-1_n)\sigma_e^2 +
N^{-2}(N-n)\sigma_e^2.\label{eq:mse_bc4}
\end{align}
where (\ref{eq:mse_bc1}) is induced by $|\E_{\zeta}(u-\widehat{u}\mid X_{\bbU},\bbS)|^2$, (\ref{eq:mse_bc2}) is induced by $\var_{\zeta}(\widehat{u}\mid X_{\bbU},\bbS)$, (\ref{eq:mse_bc3}) is induced by $\mathrm{cov}_{\zeta}(\widehat{u}, e\mid X_{\bbU},\bbS)$, and (\ref{eq:mse_bc4}) is induced by $\var_{\zeta}(e\mid X_{\bbU},\bbS)$.

Next, we can bound each of the above terms and obtain the mean squared error averaged over the selected samples
$$
\E\{(\widehat{\theta}_{\dr}-\theta_N)^2\mid X_{\bbU}\} = 
O_{\P}(\gamma_n^4n^{-4}q^2)
+O_{\P}(\gamma_n^2n^{-3}q)
+O_{\P}(\gamma_nn^{-5/2}q^{1/2})
+O_{\P}(n^{-1}).
$$
 Thus, the bias-corrected estimator $\widehat{\theta}_{\dr}$ is root-$n$ consistent if $\gamma_n = o_{\P}(n^{3/4}q^{-1/2})$. Compared to the conditions $\gamma_n=o_{\P}(n^{1/2}q^{-1/2})$ required for $\widehat{\theta}_w$, we allow $\gamma_n$ to grow faster with $n$ at rate $3/4$ instead of $1/2$ as being requested in Theorem \ref{thm:theta_linear} when $y_{ij}$ follows a linear mixed-effects model.
 
 2. Under Assumption \ref{assmp:weight}, we have
 \begin{equation*}
     \widehat{\theta}_{\dr} -\theta_N = 
     N^{-1}\sum_{i\in\bbU}\{\delta_iw(c_t^{*\T}x_i)-1\}(y_i-\widehat{\mu}_i) +
     N^{-1}\sum_{i\in\bbU}\delta_i\{w(\widehat{c}^{\T}x_i)-w(c_t^{*\T}x_i)\}(y_i-\widehat{\mu}_i),
 \end{equation*}
 where the second term is $O_{\P}(n^{-1/2})$ following the similar arguments in the Proof of Theorem \ref{thm:theta_linear} under the condition $\gamma_n=o_{\P}(n^{1/2}q^{-1/2})$. For the first term, it suffices to show that $$N^{-2}\var\left[\sum_{i\in\bbU}\{\delta_iw(c^{*\T}x_i)-1\}(y_i-\widehat{\mu}_i)
 \mid X_{\bbU},u
 \right]\rightarrow 0,$$ where
 \begin{align*}
      \var&\left[\sum_{i\in\bbU}\{\delta_iw(c^{*\T}x_i)-1\}(y_i-\widehat{\mu}_i)
 \mid X_{\bbU},u
 \right]
 \\
 &= \var_e\left(\E_{\delta}\left[\sum_{i\in\bbU}\{\delta_iw(c^{*\T}x_i)-1\}(y_i-\widehat{\mu}_i)
 \mid X_{\bbU},u,Y_{\bbS}
 \right]\mid X_{\bbU},u\right)\\
 &+\E_e\left(\var_{\delta}\left[\sum_{i\in\bbU}\{\delta_iw(c^{*\T}x_i)-1\}(y_i-\widehat{\mu}_i)
 \mid X_{\bbU},u,Y_{\bbS}
 \right]\mid X_{\bbU},u\right)\\
 &=\E_{e}\left\{
 \sum_{i\in\bbU}\pi_{i}^{-1}(1-\pi_i)(y_i-\widehat{\mu}_i)^2\mid X_{\bbU},u
 \right\}=O_{\P}(N^{2}n^{-1}),
 \end{align*}
 which is justified by (\ref{eq:beta_u_hat}) and completes the proof.
 \end{example}

 Next, we can establish the asymptotic properties of the bias-correct estimator $\widehat{\theta}_{\dr}$ by best linear unbiased predictor in Theorem \ref{thm:theta_dr}, Corollary \ref{coro_theta_cluster_dr} and Corollary \ref{coro_theta_causal_dr} via the standard Taylor linearization, similar to Theorem \ref{thm:theta_linear}, Corollary \ref{coro_theta_cluster} and Corollary \ref{coro_cluster_soft_causal}.
\begin{theorem}\label{thm:theta_dr}
    Suppose Assumptions \ref{assum:laten_ignore}-\ref{assm:regularity} hold, the bias-corrected soft-calibration estimator $\widehat{\theta}_{\dr}$ satisfies $\widehat{\theta}_{\dr}-\theta_N = N^{-1}\sum_{i\in\bbU}\psi_{i,\dr}(c^*) -\theta_N + o_{\P}(n^{-1/2})$, where $c^*$ is the solution to $\E\{\partial G(c)/\partial c\mid X_{\bbU},u\}=0$, \begin{align*}
        \psi_{i,\dr}(c^*) &= \delta_i \{w(c^{*\T}x_i)\eta_i(c^*) +
       NT_{r}(D_{21}x_{1i}+D_{22}x_{2i})y_i\}+
        B(c^{*})x_{i,\rmsc},\quad 
        \eta_i(c^*) = y_i - B(c^*)x_i,
    \end{align*}
    $B(c^*) = \left\{
         \sum_{i\in\bbU} \delta_i w'(c^{*\T }x_i)x_iy_i
         \right\}
         \left\{
         \sum_{i\in \bbU} \delta_i
         w'(c^{*\T }x_i) x_ix_i^{\T}
         \right\}^{-1}$, and $x_{i,\rmsc} = \{x_{1i}^{\T}, x_{1i}^{\T} M_{\bbS} + x_{2i}^{\T}(I_q + R_{\bbS})\}^\T$. Under Assumption \ref{assmp:outcome} and $\gamma_n=o_{\P}(n^{3/4}q^{-1/2})$ or Assumption \ref{assmp:weight} and $\gamma_n=o_{\P}(n^{1/2}q^{-1/2})$, we have $n^{1/2}(\widehat{\theta}_{\dr} - \theta_N) \rightarrow N(0, V_1 + V_2)$, where
         $$
        V_1=\lim _{n \rightarrow \infty} \frac{n}{N^2} E_\zeta\left[\var_\delta\left\{\sum_{i \in \bbU} \psi_{i,\dr}(c^*) -\theta_N\mid X_{\bbU}, u, Y_{\bbS}\right\}\mid X_{\bbU}\right],
        $$
        and
        $$
        V_2=\lim _{n \rightarrow \infty} \frac{n}{N^2} \var_\zeta\left[E_\delta\left\{\sum_{i \in \bbU} \psi_{i,\dr}(c^{*}) -\theta_N\mid X_{\bbU}, u, Y_{\bbS}\right\}\mid X_{\bbU}\right] .
        $$
        \label{thm:dr_est}
    \end{theorem}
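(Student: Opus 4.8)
To prove Theorem \ref{thm:theta_dr}, the plan is to mimic the Taylor-linearization template of Theorem \ref{thm:theta_linear}, now applied to $\widehat{\theta}_{\dr}$ regarded as a smooth functional of the joint estimator $(\widehat{c},\widehat{\beta},\widehat{\bu})$, with the outcome model taken to be the best linear unbiased predictor $\widehat{\mu}_i=\bx_{1i}^{\T}\widehat{\beta}+\bx_{2i}^{\T}\widehat{\bu}$. First I would record the rates of the plug-in quantities: Remark \ref{rmk:unique} and Lemma \ref{lm:c_hat} give that $\widehat{c}$ is well defined with probability tending to one and that $\|\widehat{c}-c^{*}\|=O_{\P}(n^{-1/2})$, where $c^{*}$ solves $\E\{\partial G(c)/\partial c\mid X_{\bbU},\bu\}=0$; since $(\widehat{\beta},\widehat{\bu})$ solves the linear mixed-model equations (\ref{eq:estimating-eqs}) it is an explicit linear functional of $Y_{\bbS}$, and combining the block identities $E_{1}^{\T}X_{1,\bbS}=I_{p}$, $E_{1}^{\T}X_{2,\bbS}=-\gamma_{n}D_{12}D_{q}^{-1}$, $E_{2}^{\T}X_{1,\bbS}=0$, $E_{2}^{\T}X_{2,\bbS}=I_{q}-\gamma_{n}D_{22}D_{q}^{-1}$ of \citet{henderson1975best} with the moment formulas (\ref{eq:beta_u_hat}) and Assumption \ref{assm:regularity}(a)--(b) yields $\widehat{\beta}=\beta+O_{\P}(n^{-1/2})$ and $\widehat{\bu}=\bu+O_{\P}(n^{-1/2})$.

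The core step is the influence-function expansion. Writing $\widehat{\theta}_{\dr}=\widehat{\theta}_{w}(\widehat{c})-N^{-1}\sum_{i\in\bbU}\{\delta_{i}w(\widehat{c}^{\T}\bx_{i})-1\}(\bx_{1i}^{\T}\widehat{\beta}+\bx_{2i}^{\T}\widehat{\bu})$, I would Taylor-expand in $\widehat{c}$ around $c^{*}$ exactly as in the opening display of $\S$\ref{subsec:proof_thm1}; by the rates above the remainder is $o_{\P}(n^{-1/2})$ and the Hessian may be frozen at $c^{*}$. Substituting the linear representation of $(\widehat{\beta},\widehat{\bu})$ and using the block structure of $\{\sum_{i\in\bbS}q_{i}\bx_{i}\bx_{i}^{\T}+\gamma\,\mathrm{diag}(0,D_{q}^{-1})\}^{-1}=[D_{11},D_{12}\mid D_{21},D_{22}]$ together with $M_{\bbS}=-\gamma D_{12}D_{q}^{-1}$, $R_{\bbS}=-\gamma D_{22}D_{q}^{-1}$ and $T_{r}=N^{-1}\sum_{i\in\bbU}(\bx_{1i}^{\T}M_{\bbS}+\bx_{2i}^{\T}R_{\bbS})$, the structural contribution should collapse to $B(c^{*})\bx_{i,\rmsc}$ with $\bx_{i,\rmsc}=\{\bx_{1i}^{\T},\bx_{1i}^{\T}M_{\bbS}+\bx_{2i}^{\T}(I_{q}+R_{\bbS})\}^{\T}$, while the remaining terms assemble into $\delta_{i}\{w(c^{*\T}\bx_{i})\eta_{i}(c^{*})+NT_{r}(D_{21}\bx_{1i}+D_{22}\bx_{2i})y_{i}\}$ with $\eta_{i}(c^{*})=y_{i}-B(c^{*})\bx_{i}$. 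Appealing to the linearization formula of \citet{kim2009} then gives $\widehat{\theta}_{\dr}-\theta_N=N^{-1}\sum_{i\in\bbU}\psi_{i,\dr}(c^{*})-\theta_N+o_{\P}(n^{-1/2})$ with $\psi_{i,\dr}$ of the stated form.

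For the rate conditions and asymptotic normality I would split according to which model is correct. Under Assumption \ref{assmp:outcome}, the bias correction replaces the leading imbalance term $N^{-1}\gamma_{n}(1_{N}^{\T}X_{1,\bbU}D_{12}+1_{N}^{\T}X_{2,\bbU}D_{22})D_{q}^{-1}\bu$ of $\widehat{\theta}_{w}$ by the same expression with $\bu-\widehat{\bu}=O_{\P}(n^{-1/2})$ in place of $\bu$; the four-term mean-squared-error decomposition (\ref{eq:mse_bc1})--(\ref{eq:mse_bc4}), each block bounded using $\gamma_{n}\|D_{12}\|,\gamma_{n}\|D_{22}\|=O_{\P}(\gamma_{n}n^{-1})$ and Assumption \ref{assm:regularity}, gives $\E\{(\widehat{\theta}_{\dr}-\theta_N)^{2}\mid X_{\bbU}\}=O_{\P}(\gamma_{n}^{4}n^{-4}q^{2})+O_{\P}(\gamma_{n}^{2}n^{-3}q)+O_{\P}(\gamma_{n}n^{-5/2}q^{1/2})+O_{\P}(n^{-1})$, so that $\widehat{\theta}_{\dr}$ is root-$n$ consistent once $\gamma_{n}=o_{\P}(n^{3/4}q^{-1/2})$. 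Under Assumption \ref{assmp:weight} I would instead expand $\widehat{\theta}_{\dr}-\theta_N$ around the true $c_{t}^{*}$ as in part~2 of the proof of Theorem \ref{thm:theta_linear}, into $N^{-1}\sum_{i\in\bbU}\{\delta_{i}w(c_{t}^{*\T}\bx_{i})-1\}(y_{i}-\widehat{\mu}_{i})$ plus an $O_{\P}(n^{-1/2})$ remainder from $\widehat{c}-c_{t}^{*}$ that needs only $\gamma_{n}=o_{\P}(n^{1/2}q^{-1/2})$; the leading term is mean zero given $(X_{\bbU},\bu,Y_{\bbS})$ by Assumption \ref{assum:laten_ignore} and has conditional variance $\E_{e}\{\sum_{i\in\bbU}\pi_{i}^{-1}(1-\pi_{i})(y_{i}-\widehat{\mu}_{i})^{2}\mid X_{\bbU},\bu\}=O_{\P}(N^{2}n^{-1})$ by Assumption \ref{assum:positivity}. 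In both cases a central limit theorem applied to $N^{-1}\sum_{i\in\bbU}\psi_{i,\dr}(c^{*})$ under the joint randomization framework---its Lyapunov condition supplied by Assumption \ref{assm:regularity}(c)---and the law of total variance conditional on $(X_{\bbU},\bu,Y_{\bbS})$ yield $n^{1/2}(\widehat{\theta}_{\dr}-\theta_N)\rightarrow N(0,V_{1}+V_{2})$ with $V_{1},V_{2}$ as stated.

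The hard part will be the bookkeeping in the influence-function expansion: showing that the cross-terms among the perturbations of $\widehat{c}$, $\widehat{\beta}$ and $\widehat{\bu}$, together with the $M_{\bbS}$, $R_{\bbS}$ and $T_{r}$ adjustments, cancel so as to leave the compact form $\psi_{i,\dr}(c^{*})$. Relatedly, the improved rate $\gamma_{n}=o_{\P}(n^{3/4}q^{-1/2})$ rests on the cancellation of the $O_{\P}(\gamma_{n}^{2}n^{-2}q)$ bias of $\widehat{\theta}_{w}$ down to order $\gamma_{n}^{2}n^{-2}q\,\|\widehat{\bu}-\bu\|$, which is precisely the content of the Example in $\S$\ref{sec:proof_rmk_dr}; the remaining moment bounds and the central-limit step are routine.
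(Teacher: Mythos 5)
Your proposal follows essentially the same route as the paper: a joint Taylor linearization of $\widehat{\theta}_{\dr}$ in $(\widehat{c},\widehat{\beta},\widehat{u})$ using Lemma \ref{lm:c_hat} and the Henderson block identities to obtain $\psi_{i,\dr}(c^*)$, the four-term mean-squared-error decomposition (\ref{eq:mse_bc1})--(\ref{eq:mse_bc4}) for the relaxed rate $\gamma_n=o_{\P}(n^{3/4}q^{-1/2})$ under the outcome model, the expansion around $c_t^*$ under the weight model, and the joint-randomization central limit theorem. The only (inconsequential) imprecision is your parenthetical heuristic that the residual bias is of order $\gamma_n^2 n^{-2} q\,\|\widehat{u}-u\|$, which does not exactly reproduce the exponents in the paper's bound, but you correctly defer to and quote the Example's rigorous decomposition for the actual rate.
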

    The variances $V_1$ and $V_2$ can be estimated by applying the standard variance estimation formulas with $c^{*}$ replaced by $\widehat{c}$, similar to Theorem \ref{thm:var_est}. In what follows, we present two corollaries for handling cluster-specific nonignorable missingness (Corollary \ref{coro_theta_cluster_dr}) and causal inference with unmeasured cluster-level confounders (Corollary \ref{coro_theta_causal_dr}) when the fitted outcomes $\widehat{\mu}_i$ are best linear unbiased predictors.
    
    \begin{corollary}\label{coro_theta_cluster_dr}
    Suppose the assumptions in Theorem \ref{thm:theta_dr} and Assumption \ref{assum:regular-sample} hold, we have $n^{1/2}(\widehat{\theta}_{\dr}-\theta_N)\rightarrow N(0, V_1+V_2+V_3)$ as $n\rightarrow \infty$, $n/N\rightarrow f\in[0,1]$, and 
    \begin{align*}
    &V_1 = \frac{n}{N^2}\E_{\zeta}\left(\E_{\delta}\left[\var_p\left\{\sum_{i=1}^kd_i\psi_{i,\dr}(c^*)-\theta_N\mid X_{\bbU},u,Y_{\bbS},\delta\right\}\mid X_{\bbU},u,Y_{\bbS}\right]\mid X_{\bbU}\right),\\
    &V_2 = \frac{n}{N^2}\E_{\zeta}\left(\var_{\delta}\left[\E_p\left\{\sum_{i=1}^kd_i\psi_{i,\dr}(c^*)-\theta_N\mid X_{\bbU},u,Y_{\bbS},\delta\right\}\mid X_{\bbU},u,Y_{\bbS}\right]\mid X_{\bbU}\right),\\
    &V_3 = \frac{n}{N^2}\var_{\zeta}\left(\E_{\delta}\left[\E_p\left\{\sum_{i=1}^kd_i\psi_{i,\dr}(c^*)-\theta_N\mid X_{\bbU},u,Y_{\bbS},\delta\right\}\mid X_{\bbU},u,Y_{\bbS}\right]\mid X_{\bbU}\right),
    \end{align*}
    where
    \begin{align*}
    &\psi_{i,\dr}(c^*) = 
    \frac{N_i}{n_i}\sum_{j=1}^{n_i}
    \left[\delta_{ij}\{w(c_0^{*\T}x_{ij}+c_1^{*\T}z_{ij})\eta_{ij}(c^*)+NT_r(D_{21}x_{ij}+D_{22}z_{ij})y_{ij}\} + B(c^*)x_{ij,\rmsc} \right],\\
    &B(c^*) = 
    \left\{
    \sum_{i=1}^k
    \sum_{j=1}^{n_i}
    \delta_{ij}
    d_{ij}
    w'(c_0^{*\T}x_{ij}+c_1^{*\T}z_{ij}) 
    \begin{pmatrix}
    x_{ij}\\
    z_{ij}
    \end{pmatrix}y_{ij}
    \right\}\\
    &\times
    \left\{
    \sum_{i=1}^k
    \sum_{j=1}^{n_i}
    \delta_{ij}
    d_{ij}
    w'(c_0^{*\T}x_{ij}+c_1^{*\T}z_{ij}) 
    \begin{pmatrix}
    x_{ij}\\
    z_{ij}
    \end{pmatrix}
    (x_{ij}^{\T}, z_{ij}^{\T})
    \right\}^{-1},\\
    &\eta_{ij}(c^*) = y_{ij} - B(c^*)(x_{ij}^{\T},z_{ij}^{\T})^{\T},
    \quad 
    x_{ij,\rmsc} = \{x_{ij}^{\T}, x_{ij}^{\T} M_{\bbS} + z_{ij}^{\T}(I_q + R_{\bbS})\}^\T.
    \end{align*}
    \end{corollary}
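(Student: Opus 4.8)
The plan is to combine the bias-corrected Taylor-linearization behind Theorem~\ref{thm:theta_dr} with the two-stage-cluster bookkeeping of Corollary~\ref{coro_theta_cluster}, and then to read off the limiting variance through a three-layer law of total variance corresponding to the super-population model $\zeta$, the response mechanism $\delta$, and the cluster sampling design $p$. First I would write $\widehat{\theta}_{\dr}=\widehat{\theta}_w-N^{-1}\sum_{i=1}^{k}\sum_{j=1}^{n_i}d_{ij}\{\delta_{ij}w(\widehat{c}_0^{\T}x_{ij}+\widehat{c}_1^{\T}z_{ij})-1\}\widehat{\mu}_{ij}$ with $\widehat{\mu}_{ij}=x_{ij}^{\T}\widehat{\beta}+z_{ij}^{\T}\widehat{u}$ and $(\widehat{\beta},\widehat{u})$ the design-weighted best linear unbiased predictor solving the clustered mixed-model equations at fixed $\gamma_n$; treating $(\widehat{c},\widehat{\xi})$ with $\widehat{\xi}=(\widehat{\beta}^{\T},\widehat{u}^{\T})^{\T}$ as jointly estimated, I would Taylor-expand $\widehat{\theta}_{\dr}(\widehat{c},\widehat{\xi})$ about $(c^{*},\xi^{*})$ exactly as in $\S$\ref{sec:proof_rmk_dr}, invoking $\|\widehat{c}-c^{*}\|=O_{\P}(n^{-1/2})$ (Lemma~\ref{lm:c_hat} adapted to the clustered design via Assumption~\ref{assum:regular-sample}) and $\|\widehat{\xi}-\xi^{*}\|=O_{\P}(n^{-1/2})$, which is standard for the mixed-model estimating equations under Assumptions~\ref{assm:regularity} and \ref{assum:regular-sample}.

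Next I would collect the first-order terms. Because $\widehat{u}$ is a design-weighted linear functional of $Y_{\bbS}$ whose coefficients are the lower blocks $D_{21},D_{22}$ of the regularized mixed-model matrix, the bias-correction piece $-T_r\widehat{u}$, after substituting this explicit linear form and using the exact calibration constraint (\ref{eq:calib_fix}) and the approximate one (\ref{eq:calib-random}), contributes precisely the term $\delta_{ij}NT_r(D_{21}x_{ij}+D_{22}z_{ij})y_{ij}$ to the per-cluster influence function; adding it to the linearization of $\widehat{\theta}_w$ from Corollary~\ref{coro_theta_cluster} — which supplies the $B(c^{*})x_{ij,\rmsc}+\delta_{ij}w(\cdot)\eta_{ij}(c^{*})$ structure — reproduces the stated $\psi_{i,\dr}(c^{*})$ and the representation $\widehat{\theta}_{\dr}-\theta_N=N^{-1}\sum_{i=1}^{k}d_i\psi_{i,\dr}(c^{*})-\theta_N+o_{\P}(n^{-1/2})$. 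The quadratic and cross remainders of the joint expansion are bounded by $o_{\P}(n^{-1/2})$ using the spectral-norm estimates already established in $\S$\ref{sec:proof_rmk_dr}, namely $\gamma_n\|D_{12}\|,\gamma_n\|D_{22}\|=O_{\P}(\gamma_n n^{-1})$, $\max_{ij}\|x_{ij}\|^2\le qC$ from Assumption~\ref{assm:regularity}(b), and the positivity bounds in Assumption~\ref{assum:regular-sample}; this is where the rate requirement $\gamma_n=o(n^{1/2}q^{-1/2})$ enters, relaxed to $\gamma_n=o(n^{3/4}q^{-1/2})$ when Assumption~\ref{assmp:outcome} holds, carried over from Theorem~\ref{thm:theta_dr}.

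For the limiting variance I would apply the law of total variance three times, conditioning successively on $(X_{\bbU},u,Y_{\bbS},\delta)$, then $(X_{\bbU},u,Y_{\bbS})$, then $X_{\bbU}$, which splits $\var\{N^{-1}\sum_i d_i\psi_{i,\dr}(c^{*})-\theta_N\mid X_{\bbU}\}$ into the sampling-design component $V_1$, the response-mechanism component $V_2$, and the model component $V_3$ as written; in contrast to the non-bias-corrected Corollary~\ref{coro_theta_cluster}, the bias-corrected influence function retains a non-negligible $\var_\zeta$ contribution through the random-effects predictor, so all three layers must be kept even when $n/N\to f>0$. Asymptotic normality of $n^{1/2}\{N^{-1}\sum_i d_i\psi_{i,\dr}(c^{*})-\theta_N\}$ then follows from a triangular-array central limit theorem: the sampling-design layer is controlled by the design central limit theorem assumed in Assumption~\ref{assum:regular-sample}(a), the response layer by the positivity bound in Assumption~\ref{assum:regular-sample}(c) together with the double-robustness dichotomy (under Assumption~\ref{assmp:weight}, $\E_\delta\{\delta_{ij}w(\cdot)-1\}=0$; under Assumption~\ref{assmp:outcome}, the residuals $\eta_{ij}$ absorb the imbalance), and the model layer by the $(2+\alpha)$-th moment condition in Assumption~\ref{assum:regular-sample}(a). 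Consistency of the plug-in variance estimator, with $c^{*}$ replaced by $\widehat{c}$, follows from the uniform-convergence arguments of \citet{kim2009}, exactly as in Theorem~\ref{thm:var_est}.

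The hard part will be the second step: verifying that the regularized, design-weighted predictor $\widehat{u}$, which averages \emph{all} sampled outcomes, contributes exactly the term $\delta_{ij}NT_r(D_{21}x_{ij}+D_{22}z_{ij})y_{ij}$ and nothing of larger order, so that the joint $(\widehat{c},\widehat{\xi})$ expansion leaves only $o_{\P}(n^{-1/2})$ remainders under the prescribed growth of $\gamma_n$; the interplay of the shrinkage in $D_{21},D_{22}$, the clustered weights $d_{ij}$, and a possibly non-negligible sampling fraction is the delicate bookkeeping here. A secondary point requiring care is keeping the three variance layers in the right order so that none of $V_1,V_2,V_3$ is inadvertently dropped, unlike in Corollary~\ref{coro_theta_cluster}.
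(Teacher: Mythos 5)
Your proposal follows essentially the same route as the paper: the paper proves this corollary only by pointing to the joint Taylor linearization of $\widehat{\theta}_{\dr}(\widehat{c},\widehat{\xi})$ about $(c^{*},\xi^{*})$ developed in $\S$\ref{sec:proof_rmk_dr} (with the BLUP identities for $\widehat{\beta},\widehat{u}$ in (\ref{eq:beta_u_hat}) supplying the extra $NT_r(D_{21}x_{ij}+D_{22}z_{ij})y_{ij}$ term) combined with the clustered-design bookkeeping of Corollary \ref{coro_theta_cluster}, followed by the three-stage law of total variance over $p$, $\delta$, and $\zeta$ — exactly the steps you outline. Your identification of the delicate point (that the design-weighted, regularized predictor $\widehat{u}$ contributes only the stated first-order term under the prescribed growth of $\gamma_n$) matches where the paper's own argument leans on the spectral-norm bounds for $\gamma_n D_{12},\gamma_n D_{22}$.
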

\begin{corollary}\label{coro_theta_causal_dr}
    Suppose the assumptions in Theorem \ref{thm:theta_dr} and Assumption \ref{assum:regular-causal} hold, we have $n^{1/2}(\widehat{\tau}_{\dr}-\tau_N)\rightarrow N(0, V_1+V_2+V_3)$ as $n\rightarrow \infty$, $n/N\rightarrow f\in[0,1]$, and 
    \begin{align*}
        & V_1 = \frac{n}{N^2}\E_{\zeta}\left(\E_{\delta}\left[\var_p\left\{\sum_{i=1}^kd_i\phi_{i,\dr}(c_0^*,c_1^*)-\tau_N\mid X_{\bbU},u,Y_{\bbS},\delta\right\}\mid X_{\bbU},u,Y_{\bbS}\right]\mid X_{\bbU}\right),\\
        & V_2 = \frac{n}{N^2}\E_{\zeta}\left(\var_{\delta}\left[\E_p\left\{\sum_{i=1}^kd_i\phi_{i,\dr}(c_0^*,c_1^*)-\tau_N\mid X_{\bbU},u,Y_{\bbS},\delta\right\}\mid X_{\bbU},u,Y_{\bbS}\right]\mid X_{\bbU}\right),\\
        & V_3 = \frac{n}{N^2}\var_{\zeta}\left(\E_{\delta}\left[\E_p\left\{\sum_{i=1}^kd_i\phi_{i,\dr}(c_0^*,c_1^*)-\tau_N\mid X_{\bbU},u,Y_{\bbS},\delta\right\}\mid X_{\bbU},u,Y_{\bbS}\right]\mid X_{\bbU}\right),
    \end{align*}
    where
    \begin{align*}
    &\phi_{i,\dr}(c_0^*,c_1^*) = 
    \frac{N_i}{n_i}\sum_{j=1}^{n_i}
    [A_{ij}
    \{w(c_{10}^{*\T}x_{ij} + c_{11}^{*\T}z_{ij})\eta_{1,ij}(c_1^*) + 
    NT_r(D_{21}x_{ij}+D_{22}z_{ij})y_{ij}
    \}
    + B_1(c_1^*)x_{ij,\rmsc}]\\
    &-
    \frac{N_i}{n_i}\sum_{j=1}^{n_i}
    [
    (1-A_{ij})
    \{
    w(c_{00}^{*\T}x_{ij} + c_{01}^{*\T}z_{ij})\eta_{0,ij}(c_0^*) +
    NT_r(D_{21}x_{ij}+D_{22}z_{ij})y_{ij}\}+ B_0(c_0^*)x_{ij,\rmsc}
    ],
    \end{align*}
    where $\{B_0(c),B_1(c)\}$ and $\{\eta_{0,ij}(c),\eta_{1,ij}(c)\}$ are defined similarly as in Corollary \ref{coro_theta_cluster_dr} with $\delta_{ij}$ replaced by $1-A_{ij}$ and $A_{ij}$. 
    \end{corollary}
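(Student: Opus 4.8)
The plan is to mirror the linearization arguments used for Theorem~\ref{thm:theta_dr} and Corollary~\ref{coro_theta_cluster_dr}, treating $\widehat{\tau}_{\dr}$ as the difference of two bias-corrected soft-calibration estimators, one for each treatment arm. Write $\widehat{\tau}_{\dr}=\widehat{\theta}_{\dr,1}-\widehat{\theta}_{\dr,0}$, where $\widehat{\theta}_{\dr,a}$ is the bias-corrected estimator of $N^{-1}\sum_{i,j}y_{ij}(a)$ obtained by soft calibration among units with $A_{ij}=a$ and augmented by the best linear unbiased predictor $\widehat{\mu}_{a,ij}=\bx_{ij}^{\T}\widehat{\beta}_a+z_{ij}^{\T}\widehat{u}_a$. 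First I would invoke the consistency of the Lagrange multipliers, $\|\widehat{c}_a-c_a^*\|=O_{\P}(n^{-1/2})$ for $a=0,1$ (the analog of Lemma~\ref{lm:c_hat} under the two-stage design, which follows from Assumption~\ref{assum:regular-causal} exactly as in the proof of Corollary~\ref{coro_theta_cluster}), together with the root-$n$ consistency of $(\widehat{\beta}_a,\widehat{u}_a)$ established through the block-matrix identities of \citet{henderson1975best} as in~(\ref{eq:beta_u_hat}).

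Next I would perform a joint Taylor expansion of $\widehat{\tau}_{\dr}$ in $(\widehat{c}_0,\widehat{c}_1)$ around $(c_0^*,c_1^*)$, plugging in the score-equation characterizations of the soft-calibration problems and of the mixed-model equations; the cross terms involving $\partial\widehat{\mu}_{a,ij}/\partial\xi_a$ are absorbed because the $\E_\delta$-type centering makes their contribution $o_{\P}(n^{-1/2})$, leaving the pseudo-value representation $\widehat{\tau}_{\dr}-\tau_N=N^{-1}\sum_{i=1}^{K}d_i\phi_{i,\dr}(c_0^*,c_1^*)-\tau_N+o_{\P}(n^{-1/2})$ with $\phi_{i,\dr}$ as stated. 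I would then apply the nested variance decomposition over the super-population model $\zeta$, the treatment mechanism indexed by $A$, and the two-stage sampling design $p$, which yields the three components $V_1,V_2,V_3$; asymptotic normality then follows from a finite-population central limit theorem for the cluster-level sum $\sum_{i}d_i\phi_{i,\dr}$ under the first-stage design (cf.\ Assumption~\ref{assum:regular-sample}(a)) combined with the super-population CLT from Assumption~\ref{assm:regularity}(c).

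Finally I would establish the double-robustness claim by the same two-branch argument as in the Example following~(\ref{eq:beta_u_hat}) and in the proof of Theorem~\ref{thm:theta_linear}, carried out separately for each arm. If the linear mixed-effects outcome model holds, the bias term carried by each arm reduces to $N^{-1}\gamma_n(1_N^{\T}X_{1,\bbU}D_{12}+1_N^{\T}X_{2,\bbU}D_{22})D_q^{-1}(u_a-\widehat{u}_a)$ plus a mean-zero error term, and using $\widehat{u}_a=u_a+O_{\P}(n^{-1/2})$ together with $\gamma_n\|D_{12}\|,\gamma_n\|D_{22}\|=O_{\P}(\gamma_n n^{-1})$ shows this is negligible once $\gamma_n=o(n^{3/4}q^{-1/2})$; if instead the weight model $w(\cdot)$ correctly specifies $\pi_{a,ij}^{-1}$, the leading term is a design-centered average whose variance is $O_{\P}(N^2 n^{-1})$, and the residual from replacing $c_a^*$ by $\widehat{c}_a$ is $O_{\P}(n^{-1/2})$ under $\gamma_n=o(n^{1/2}q^{-1/2})$. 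The main obstacle I anticipate is the bookkeeping in the first branch: the bias-correction term couples the approximate-calibration slack on the cluster effects with the prediction error $u_a-\widehat{u}_a$ for both arms simultaneously, and one must verify, via the covariance computations in~(\ref{eq:mse_bc1})--(\ref{eq:mse_bc4}) applied arm-by-arm and then differenced, that no cross-arm term degrades the $n^{3/4}$ rate, while also confirming that the extra sampling-design variance layer $V_3$ is the only new component relative to Theorem~\ref{thm:theta_dr}.
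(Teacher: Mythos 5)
Your proposal is correct and follows essentially the same route the paper takes: the paper gives no separate argument for this corollary, stating only that it follows by the standard Taylor linearization used for Theorem \ref{thm:theta_dr}, Corollary \ref{coro_theta_cluster_dr}, and Corollary \ref{coro_cluster_soft_causal}, which is precisely the arm-by-arm linearization, differencing, three-layer variance decomposition, and two-branch double-robustness argument you spell out. Your added attention to the cross-arm bookkeeping and the rate requirements $\gamma_n=o(n^{3/4}q^{-1/2})$ versus $\gamma_n=o(n^{1/2}q^{-1/2})$ is consistent with what the paper establishes in the Example following (\ref{eq:beta_u_hat}).
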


\section{Additional Simulation Results}\label{sec:additional_sims}

\subsection{Summary}\label{sec:additional_summary}
Due to the space restriction of the main paper, we include more extensive numerical results in this section. $\S$\ref{sec:sim_res_bound} presents the simulation results for our soft calibration scheme under the bounded distance function. $\S$\ref{sec:sim_res_L_2} draws a comparison with the $L_2$ penalized regression estimator. $\S$\ref{sec:sim_res_varying_lam1_lam2} presents the numerical results under varying $\lambda_1$ and $\lambda_2$. $\S$\ref{sec:sim_res_causal} presents a simulation study of causal inference under the cluster-specific nonignorable treatment mechanism. Finally, $\S$\ref{sec:sim_res_cluster} presents additional simulation results, including another cluster setup and visual illustrations of the calibration weights. 

\subsection{Simulation Results under Bounded Weight Constraints}\label{sec:sim_res_bound}
As we separate the minimization of the distance between the weights from the soft calibration conditions, our framework is directly applicable to bound the weights as long as the chosen distance metric satisfies the conditions of Remark \ref{rmk:unique}. As summarized in Chapter 12, \cite{tille2020sampling}, the restricted distance measurements can be motivated by ensuring bounded support of the calibration weights $\mathbb{W}\subset[L, U]$ given $0\leq L\leq 1\leq U$. For example, we have the following logistic distance objective:
\begin{align}
    &Q(w) =
\frac{(1-L)(U-1)}{q_i(U-L)}
\left\{
(w-L)\log\left(
\frac{w-L}{1-L}
\right) + 
(U-w)\log\left(
\frac{U-w}{U-1}
\right)
\right\}, L<w<U,\nonumber\\
& w(z) = 
\frac{L(U-1) + U(1-L)\exp\left\{
\frac{q_i(U-L)z}{(1-L)(U-1)}
\right\}}{U-1+ (1-L)\exp\left\{
\frac{q_i(U-L)z}{(1-L)(U-1)}
\right\}},-\infty < z<+\infty,\label{eq:logit_weight}
\end{align}
where $\lim_{w\rightarrow L}Q(w)=\infty$, $\lim_{w\rightarrow U}Q(w)=\infty$, and $\partial^2Q(w)/\partial w^2 = (1-L)(U-1)/\{(w-L)(U-w)\}>0$ satisfying the conditions in Remark \ref{rmk:unique}. It can be shown that when $L\rightarrow 0$ and $U\rightarrow\infty$, the specified weight function in (\ref{eq:logit_weight}) is close to the logistic regression model if the sampling fraction $nN^{-1}$ is small; Other approaches to control the dispersion of weights are achieved by truncation and the most well-known truncated distance measurement is the linear truncated objective:
\begin{align*}
    &Q(w) = \begin{cases}
    \frac{(w-1)^2}{2q_i}&\text{ if }L< w < U,\\
    \infty & \text{ else},
    \end{cases},\quad w(z) = 
    \begin{cases}
    1 + q_iz &\text{ if }\frac{L-1}{q_i} < z< \frac{U-1}{q_i},\\
    L&\text{ if }z \leq \frac{L-1}{q_i},\\
    U&\text{ if }z \geq \frac{U-1}{q_i},
    \end{cases}
\end{align*}
which clearly satisfies the conditions in Remark 1.

\begin{table}[!ht]
\caption{\label{tab:sim-survey:truncated} Bias $(\times 10^{-2})$, variance $(\times10^{-3})$, mean squared error
$(\times10^{-3})$ and coverage probability (\%) of the estimators under cluster-specific nonignorable missingness based on $500$ simulated datasets when $(k,n_i)=(30,200)$ and $(\lambda_1=0.01,\lambda_2=10)$}
\vspace{0.15cm}
\centering
{\begin{tabular}{lccccccccccc}
\toprule
&$\widehat{\theta}_{\rm hc}$ & $\widehat{\theta}_w^{\sq}$& $\widehat{\theta}_w^{\sq,[0,10]}$& $\widehat{\theta}_w^{\sq,[0,30]}$&
$\widehat{\theta}_w^{\sq,[0,50]}$&$\widehat{\theta}_w^{\sq,[0,100]}$\\
\midrule
Bias&0.40&1.05&1.37&1.13&1.11&1.08\\
VAR&4.50&1.51&1.10&1.29&1.35&1.42\\
MSE&4.52&1.62&1.29&1.41&1.47&1.54\\
\midrule
\midrule
& & $\widehat{\theta}_w^{\me}$& $\widehat{\theta}_w^{\logit,[0,10]}$& $\widehat{\theta}_w^{\logit,[0,30]}$&
$\widehat{\theta}_w^{\logit,[0,50]}$&$\widehat{\theta}_w^{\logit,[0,100]}$\\
\midrule
Bias&&0.14&1.60&1.58&1.05&0.54\\
VAR&&1.83&4.74&2.99&2.28&1.82\\
MSE&&1.83&4.99&3.24&2.39&1.85\\
\bottomrule
\end{tabular}}
\end{table}

Table \ref{tab:sim-survey:truncated} reports the numerical results of the estimators produced by the bounded version of the loss functions under the same data generation process in the main paper. It is seen that the linear truncated objective can improve the estimation efficiency by discarding outlier weights compared to its unrestricted counterpart in almost all scenarios. However, performances are sensitive to the choice of $Q(w)$, $L$, and $U$, which should be guided by domain expertise or prior knowledge of data distribution.

\subsection{Comparison with $L_2$ Penalized Regression}\label{sec:sim_res_L_2}
We conduct additional simulations to compare the proposed soft calibration method with the $L_2$ penalized propensity score weight estimation. The $L_2$ penalized propensity score weight estimation is defined as
\begin{align*}
    &L_{2}(c) = 
N^{-1}\sum_{i\in\bbU}\delta_i
g(c^\T x_i) - \langle c, N^{-1}\sum_{i\in\bbU} x_i\rangle + \lambda\|c_{(p+1):(p+q)}\|_2^2/2,\\
&\nabla L_{2}(c) = 
N^{-1}\sum_{i\in\bbU}\delta_i
g'(c^\T x_i)x_i -
N^{-1}\sum_{i\in\bbU} x_i +
\lambda\cdot (0_{p}, c_{(p+1):(p+q)}),
\end{align*}
where $g(\cdot)$ is the convex conjugate function of $Q(\cdot)$ and $\lambda$ is a tuning parameter, which can be selected by five-fold cross-validation. Setting the gradient $\nabla L_{2}(c)=0$, we have
\begin{align}
    & N^{-1}\sum_{i\in\bbU}\delta_i
g'(c^\T x_i)x_{1i} =
N^{-1}\sum_{i\in\bbU} x_{1i}\nonumber\\
& N^{-1}\sum_{i\in\bbU}\delta_i
g'(c^\T x_i)x_{2i} -
N^{-1}\sum_{i\in\bbU} x_{2i} = \lambda\cdot c_{(p+1):(p+q)}\label{eq:norm2_x2},
\end{align}
which leads to an approximate calibration on $X_2$, similar to a ridge-type regression. Denote the $L_2$ penalized propensity score weight estimator as $\widehat{\theta}_{w}^{\ell_2}=N^{-1}\sum_{i\in\bbU}g'(\widehat{c}_{l2}^{\T}x_i)y_i$, where $\widehat{c}_{l2}$ satisfies the condition $\nabla L_{2}(\widehat{c}_{l2})=0$. Next, we use the same data generation process as mentioned in the main paper to draw a comparison among $\widehat{\theta}_{\rm hc}$, $\widehat{\theta}_{\dr}$ and $\widehat{\theta}_{w}^{\ell_2}$ using the maximum entropy objective function under the linear mixed-effects model with $(\lambda_1=0.01,\lambda_2=1)$ and $(\lambda_1=0.01, \lambda_2=10)$.

\begin{figure}[!ht]
    \centering
    \includegraphics[width=.8\linewidth]{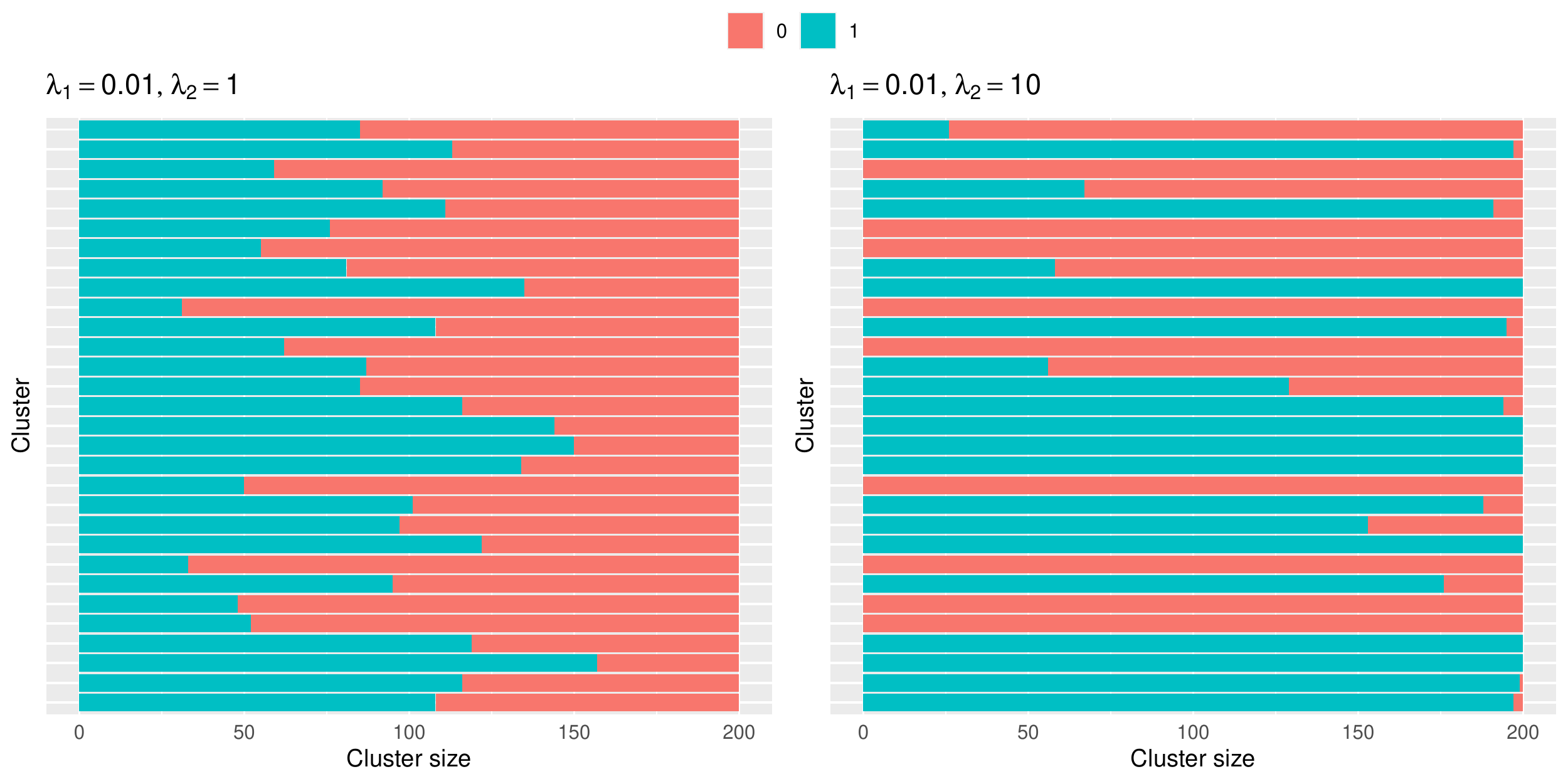}
    \caption{Cluster-specific missingness when $\lambda_1=0.01,\lambda_2=1$ (left) and $\lambda_1=0.01,\lambda_2=10$ (right); missing units are labeled by $1$ and observed units are labeled by $0$.}
    \label{fig:missing_cluster}
\end{figure}

Table \ref{tab:sim-survey:norm2} reports the simulation results based on $500$ Monte Carlo samples. We find that the $L_2$ penalized propensity score weight estimation achieves a similar performance as $\widehat{\theta}_{\rm hc}$ and $\widehat{\theta}_{\dr}$ when the between-cluster
variation of $\delta_{ij}$ is small, i.e., $\lambda_2=1$. This finding is reasonable since the data do not show stark cluster-specific missingness (Figure \ref{fig:missing_cluster} (left)), and therefore the exact calibration will not create too many extreme values. When the between-cluster variation of $\delta_{ij}$ becomes large (i.e., $\lambda_2=10$) as in Figure \ref{fig:missing_cluster} (right), the exact calibration is prone to yield extreme weights, which deteriorates the efficiency of the estimation. Both the $L_2$ penalized propensity score weight estimation and the soft calibration weighting are effective in alleviating the inflated variation induced by the abusive calibration.  However, since the approximate calibration in (\ref{eq:norm2_x2}) does not take account of the correlation structure under the linear mixed-effects model to minimize the mean square error, its numerical performance is not optimal compared to our soft calibration estimator.

\begin{table}[htbp]
\caption{\label{tab:sim-survey:norm2} Bias $(\times 10^{-2})$, variance $(\times10^{-3})$, mean squared error
$(\times10^{-3})$ and coverage probability (\%) of the estimators under cluster-specific nonignorable missingness based on $500$ simulated datasets under the linear mixed-effects model}
\vspace{0.15cm}
\centering
{\begin{tabular}{lcccccc}
\toprule
 & \multicolumn{3}{c}{$\lambda_1=0.01$, $\lambda_2=1$} & \multicolumn{3}{c}{$\lambda_1=0.01$, $\lambda_2=10$}  \\
& \multicolumn{1}{l}{bias $\times 10^{-2}$} & \multicolumn{1}{l}{var $\times 10^{-3}$} & \multicolumn{1}{l}{MSE $\times 10^{-3}$} & \multicolumn{1}{l}{bias $\times 10^{-2}$} & \multicolumn{1}{l}{var $\times 10^{-3}$} & \multicolumn{1}{l}{MSE $\times 10^{-3}$} \\
\midrule
$\widehat{\theta}_{\rm hc}$&0.10&0.78&0.78
&0.73&4.58&4.63\\
$\widehat{\theta}_{\dr}$&0.09&0.74&0.74
&0.13&1.79&1.80\\
$\widehat{\theta}_w^{\ell_2}$ 
&0.10&0.78&0.78
&2.13&1.64&2.09\\
\bottomrule
\end{tabular}}
\end{table}

\subsection{Simulation Results under Varying $\lambda_1$ and $\lambda_2$}\label{sec:sim_res_varying_lam1_lam2}

In Figure \ref{fig:varying}, we vary the parameter $\lambda_{1}$ and $\lambda_{2}$ in our simulation settings to analyze the effect of the cluster-specific variation of the outcomes and the missingness on the performance of the estimators. On the one hand, we find that $\widehat{\theta}_w^{\me}$ estimator achieves the smallest mean squared error at any $\lambda_{2}$ when $\lambda_1 = 0.01$, and this improvement becomes more evident compared to $\widehat{\theta}_{\rm hc}$ when the missing mechanism is strongly affected by the random effects, i.e., $\lambda_{2}=10$. The reason might be attributed to the fact that hard calibration creates unnecessary extreme weights and leads to inefficient estimation; see Figures \ref{fig:missing} for the density plots of their calibration weights. On the other hand, when the cluster-specific variation of $y_{ij}$ increases, i.e., $\lambda_1$ increases, it becomes more advantageous to use the bias-corrected soft-calibration estimator aided by best linear unbiased predictors, as it is more robust to the magnitude of the random effect $u$. More importantly, when the random effects have evidently strong impacts on outcomes, our soft calibration method with the data-adaptive tuning parameter selection can effectively reduce itself to hard calibration with little inflated mean squared errors.

\begin{figure}[!ht]
    \centering
    \includegraphics[width=.8\linewidth]{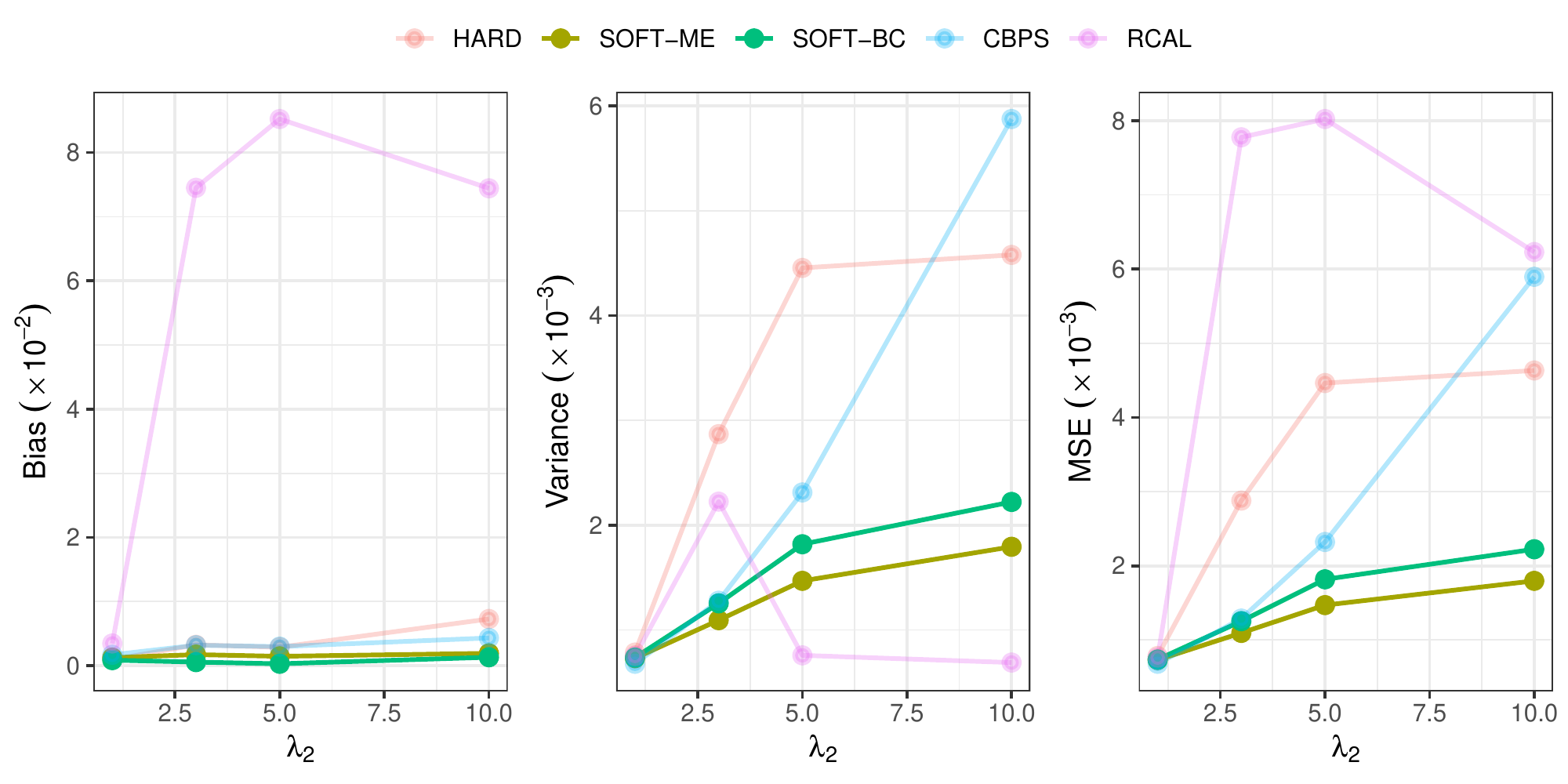}
    \includegraphics[width=.8\linewidth]{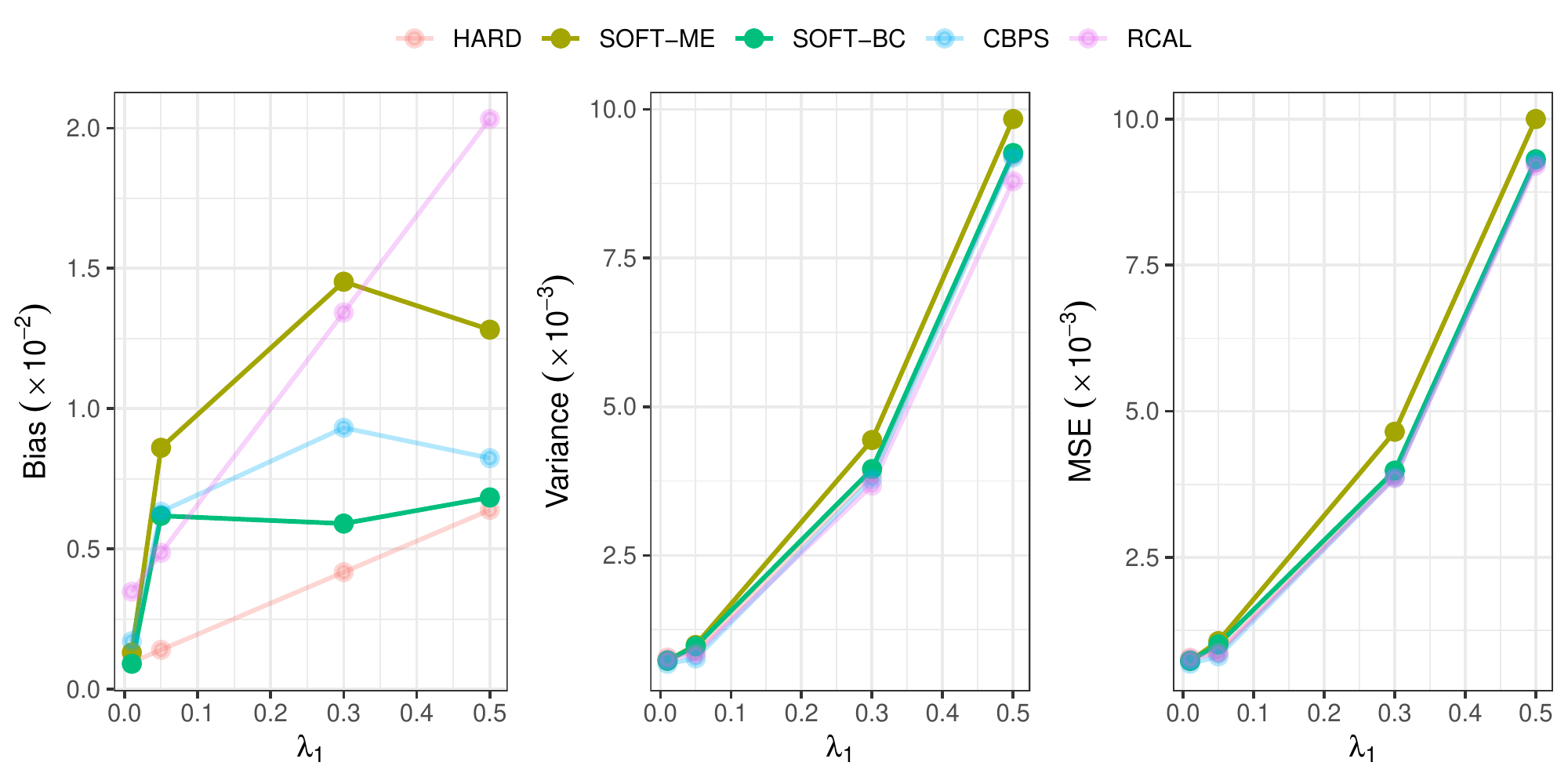}
    \caption{Bias $(\times 10^{-2})$, variance $(\times 10^{-3})$ and mean squared error $(\times 10^{-3})$ of the estimators with varying $\lambda_{2}$ with $\lambda_1=0.01$ (the top panel), and varying $\lambda_{1}$ with $\lambda_2 = 1$ (the bottom panel) under cluster setups $(k,n_i)=(30,200)$. In each plot, the estimators $\widehat{\theta}_{\rm hc}$, $\widehat{\theta}_w^{\me}$, $\widehat{\theta}_{\dr}$, $\widehat{\theta}_{\rm cbps}$ and $\widehat{\theta}_{\rm rcal}$ are labeled by \textsc{hard}, \textsc{soft-me}, \textsc{soft-bc}, \textsc{cbps} and \textsc{rcal}, where our proposed method is highlighted in bold.}
    \label{fig:varying}
\end{figure}

\subsection{Simulation Results for Cluster-specific Nonignorable Treatment Mechanism}\label{sec:sim_res_causal}

For an illustration of our soft calibration under the causal inference setting in $\S$\ref{sec:nonignore_trt}, we adopt the settings from \cite{yang2018propensity}. Two models are considered to generate the potential outcome for the $j$-th subject in cluster $i$ when $a=0,1$: 
\begin{align*}
&\textsc{lmm}: y_{ij}(a)=\bx_{ij}^{\T}\beta_{A=a}+\lambda_{1}a_i+e_{ij}\\
&\textsc{nlmm}: y_{ij}(a)=\bx_{ij}^{\T}\beta_{A=a}+
x_{1ij}^2 + x_{2ij}^2 + 0.1x_{3ij}^{\dagger} + 0.1x_{4ij}^{\dagger}+
\lambda_{1}a_i+e_{ij}
\end{align*}
where $\bx_{ij}=(1,x_{1ij},x_{2ij})^{\T}$, $x_{1ij}\sim U[-0.75,0.75]$, $x_{2ij}\sim N(0,1)$, ${\beta}_{A=0}=(0,1,1)^{\T}$, ${\beta}_{A=1}=(\tau,1,1)^{\T}$ with $\tau=2$, and $a_i\sim N(0,1)$, $e_{ij}\sim N(0,1)$
identically and independently. For the non-linear mixed-effects model, $x_{3ij}^{\dagger}$ and $x_{4ij}^{\dagger}$ are the standardized versions of $x_{3ij}=\exp(x_{1ij})$ and $x_{4ij}=\exp(x_{2ij})$, respectively. Also, we consider that treatment assignment $A_{ij}$ is generated through the logistic model:
$$P(A_{ij}=1\mid\bx_{ij},a_i)=h(-0.25+\bx_{ij}^{\T}{\alpha}+\lambda_{2}a_i)$$
with $h(\cdot)$ being the inverse logit-link and ${\alpha}=(0,1,1)^{\T}$. The observed outcome is $y_{ij}=A_{ij}y_{ij}(1)+(1-A_{ij})y_{ij}(0)$ with cluster setups $(k,n_i)=(30,200)$ and $(\lambda_1= 0.01, \lambda_2 = 1)$. Table \ref{tab:sim-causal:1} reports biases $(\times 10^{-2})$, variances $(\times 10^{-3})$, mean squared errors $(\times 10^{-3})$, and coverage
probability (\%) for causal inference with unmeasured cluster-level confounders. Notice that $\widehat{\theta}_w^{\sq}$ has the best performance under the linear mixed-effects model, but it does not have desirable cover probability when $y_{ij}$ follows a non-linear mixed-effects model as it postulates an erroneous linear model for the propensity score $P(A_{ij}=1\mid x_{ij},a_i)$. Other similar conclusions as stated in $\S$\ref{subsec:Cluster-specific-nonignorable-mi} can be also drawn from these results. 


\begin{table}[htbp]
\caption{\label{tab:sim-causal:1}Bias $(\times10^{-2})$, variance $(\times10^{-3})$, mean squared error
$(\times10^{-3})$ and coverage probability (\%) of the estimators under cluster-specific nonignorable treatment assignment based on $500$ simulated datasets when $(k,n_i)=(30,200)$}
\vspace{0.15cm}
\centering
{%
\begin{tabular}{lllllllllllll}
\toprule
&& 
$\widehat{\theta}_{\rm sim}$ &
$\widehat{\theta}_{\rm fix}$ & 
$\widehat{\theta}_{\rm rand}$ &
$\widehat{\theta}_{\rm hc}$ &
$\widehat{\theta}_{w}^{\sq}$ &
$\widehat{\theta}_{w}^{\me}$ &
$\widehat{\theta}_{\dr}$&
$\widehat{\theta}_{\rm cbps}$ & $\widehat{\theta}_{\rm rcal}$ \\ 
\midrule
\midrule
\multicolumn{10}{l}{\textit{Linear mixed-effects model with $(\lambda_1,\lambda_2)=(0.01,1)$}}\\
\midrule
Bias&&-165.1&0.00&-4.12&0.00&0.00&6.37&0.00&0.00&6.11\\
VAR&&4.23&3.28&2.94&1.44&1.01&1.23&1.26&1.17&1.34\\
MSE&&236.7&3.28&2.99&1.44&1.03&1.27&1.28&1.19&1.41\\
CP&&0.0&94.6&97.0&96.4&93.6&92.2&96.8&--&--\\
\\
\multicolumn{10}{l}{\textit{Linear mixed-effects model with $(\lambda_1,\lambda_2)=(0.01,10)$}}\\
\midrule
Bias&&-119.2&-50.4&-54.0&-2.31&3.78&3.40&3.16&2.10&110.8\\
VAR&&26.63&47.32&42.94&9.40&3.51&4.33&5.01&11.33&1.04\\
MSE&&780.0&292.1&293.3&9.45&3.67&4.52&5.18&11.46&25.73\\
CP&&0.0&28.4&26.4&93.6&94.6&92.8&94.2&--&--\\
\\
\multicolumn{10}{l}{\textit{Non-linear mixed-effects model with $(\lambda_1,\lambda_2)=(0.01,1)$}}\\
\midrule
Bias&&-60.75&-1.26&-1.33&0.00&16.91&3.31&3.31&11.55&3.36\\
VAR&&16.60&31.63&28.22&4.86&4.37&4.55&4.55&9.38&4.62\\
MSE&&141.4&31.70&28.32&4.87&6.82&4.57&4.58&11.76&4.78\\
CP&&20.2&93.0&94.0&93.4&87.0&92.2&95.8&--&--\\
\bottomrule
\end{tabular}}
\end{table}


\subsection{Additional Simulation Results}\label{sec:sim_res_cluster}

First, we present the densities of the calibration weights when $(k, n_i) = (30, 200)$ with $\lambda_1 = 0.01$ and $\lambda_2 = 10$ to handle cluster-specific nonignorable missingness (Figure \ref{fig:missing}, top) and cluster-specific nonignorable treatment assignment (Figure \ref{fig:missing}, bottom). It is obvious that $\widehat{\theta}_{\rm hc}$ calibration tends to reach extreme weights when the missing $\delta_{ij}$ or treatment mechanism $A_{ij}$ is extremely different across clusters, leading to an inefficient estimate. On the other hand, $\widehat{\theta}_{w}^{\me}$ calibration method can significantly mitigate the occurrence of extreme weights and therefore achieve more stable estimates.
\begin{figure}[!ht]
    \centering
    \includegraphics[width=.8\linewidth]{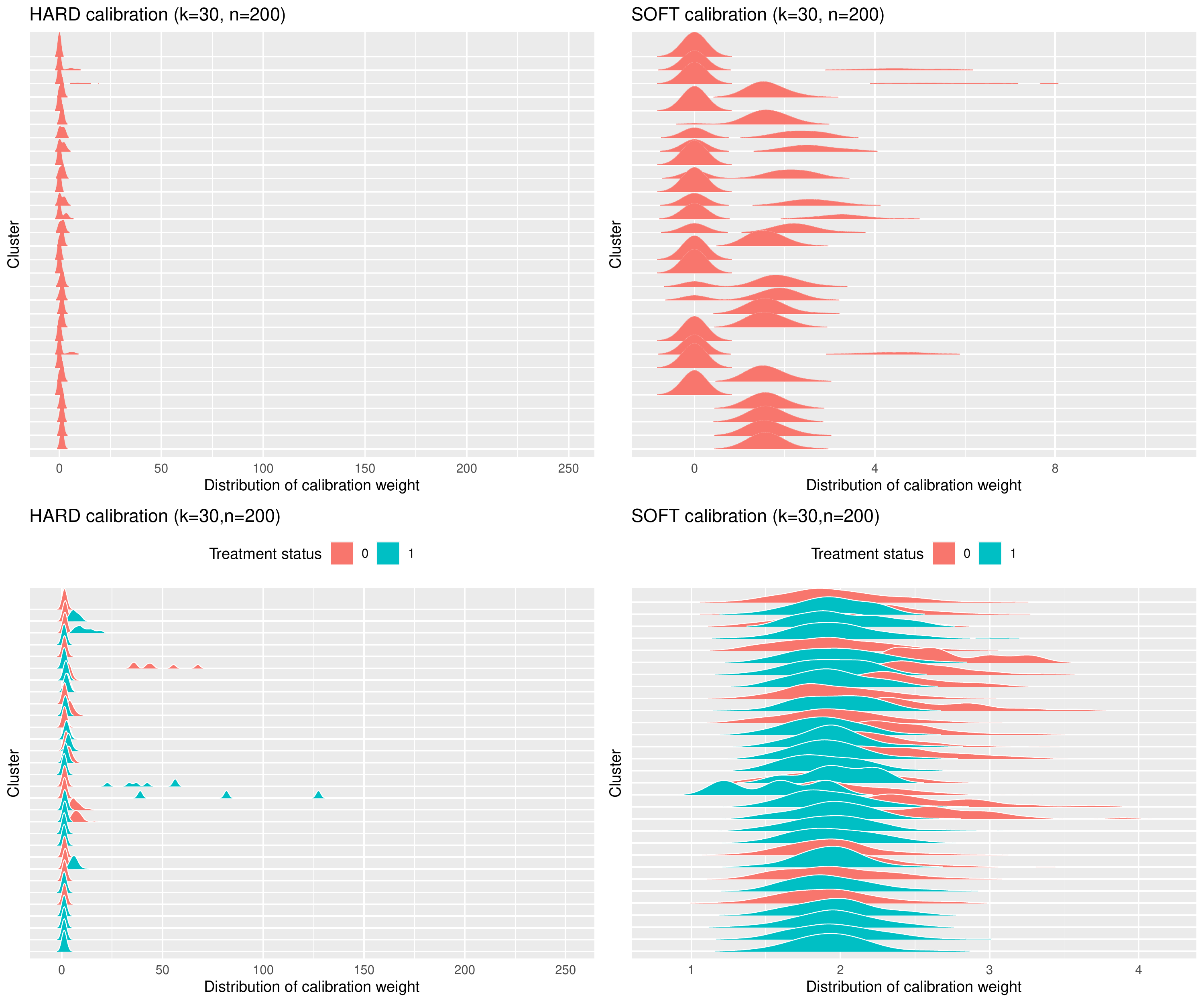}
    \caption{An illustration of the calibration weights using $\widehat{\theta}_{\rm hc}$ and $\widehat{\theta}_{w}^{\me}$ calibration method with $k=30, n_i= 200$ under cluster-specific nonignorable missingness (top) and cluster-specific nonignorable treatment assignment (bottom)}
    \label{fig:missing}
\end{figure}

Next, we present additional simulation results in Table \ref{tab:sim-other-cluster} with a different cluster setup $(k,n_i)=(100,30)$ to handle cluster-specific nonignorable missingness under the linear mixed-effects model illustrated in the main paper. One can observe that $\widehat{\theta}_{\rm rand}$ has lower variances  compared to $\widehat{\theta}_{\rm fix}$ by considering cluster effects as random, but it cannot control the bias well when the number of clusters $k$ is large. Moreover, when the between-cluster variation of $\delta_{ij}$ increases, i.e., $\lambda_2=10$, the exact calibration tends to yield extreme weights, whereas our soft-calibration estimators can mitigate the over-calibration issue and achieve more efficient estimation. In summary, the proposed estimators $\widehat{\theta}_{w}^{\sq}$, $\widehat{\theta}_{w}^{\me}$ and $\widehat{\theta}_{\dr}$ have smaller mean squared errors and exhibit desirable coverage probabilities in finite samples.

\begin{table}[!ht]
\caption{Bias $(\times10^{-2})$, variance $(\times10^{-3})$, mean squared error $(\times10^{-3})$ and coverage probability (\%) of the estimators under cluster-specific nonignorable missingness based on $500$ simulated datasets when $(k,n_i)=(100,30)$}
\vspace{0.15cm}
\centering
{%
\begin{tabular}{lllllllllllll}
\toprule
&& 
$\widehat{\theta}_{\rm sim}$ &
$\widehat{\theta}_{\rm fix}$ & 
$\widehat{\theta}_{\rm rand}$ &
$\widehat{\theta}_{\rm hc}$ &
$\widehat{\theta}_{w}^{\sq}$ &
$\widehat{\theta}_{w}^{\me}$ &
$\widehat{\theta}_{\dr}$&
$\widehat{\theta}_{\rm cbps}$ & $\widehat{\theta}_{\rm rcal}$ \\ 
\midrule
\midrule
\multicolumn{10}{l}{\textit{Linear mixed-effects model with $(\lambda_1,\lambda_2)=(0.01,1)$}}\\
\midrule
Bias&&21.22&0.44&1.49&0.03&0.22&0.19&0.13&0.12&1.50\\
VAR&&0.41&3.37&2.05&1.59&1.30&1.33&1.35&1.36&1.46\\
MSE&&45.42&3.39&2.27&1.59&1.30&1.33&1.35&1.36&1.68\\
CP&&0.0&96.0&94.0&94.2&94.2&92.2&92.0&--&--\\
\\
\multicolumn{10}{l}{\textit{Linear mixed-effects model with $(\lambda_1,\lambda_2)=(0.01,10)$}}\\
\midrule
Bias&&5.01&0.78&1.99&0.50&0.06&0.01&0.01&0.56&3.68\\
VAR&&0.45&5.44&3.69&3.02&1.78&1.85&2.01&5.07&1.28\\
MSE&&2.97&5.50&4.09&3.04&1.78&1.85&2.01&5.11&2.63\\
CP&&31.0&89.8&85.0&93.4&94.6&92.8&92.2&--&--\\
\bottomrule
\end{tabular}}
\label{tab:sim-other-cluster}
\end{table}

\section{Additional Application Results}\label{sec:additional_app}
Figure \ref{fig:A_Y} presents the frequency plot of $A_{ij}$ and kernel densities of $y_{ij}$ stratified by $A_{ij}$ for each fused cluster. The cluster-specific differences of $A_{ij}$ are not as evident as those of $y_{ij}$. Next, we fit the linear mixed effect models for the outcome $y_{ij}$ and use the maximum entropy loss function to obtain the weights $w$ for $A_{ij}=0$ and $A_{ij}=1$, respectively. Our cross-fitting strategy yields tuning parameters such as $\gamma_{n, A=1}=0.068$ and $\gamma_{n, A=0}=0.052$, which hardly relaxes the hard calibration for cluster effects. Therefore, it is reasonable to observe that the performance of the soft calibration is similar to the hard calibration in our application, which can also be justified by the visual similarities of the calibration weights produced by the hard and soft calibration schemes in Figure \ref{fig:weights}.

\begin{figure}[!ht]
    \centering
    \includegraphics[width = .8\linewidth]{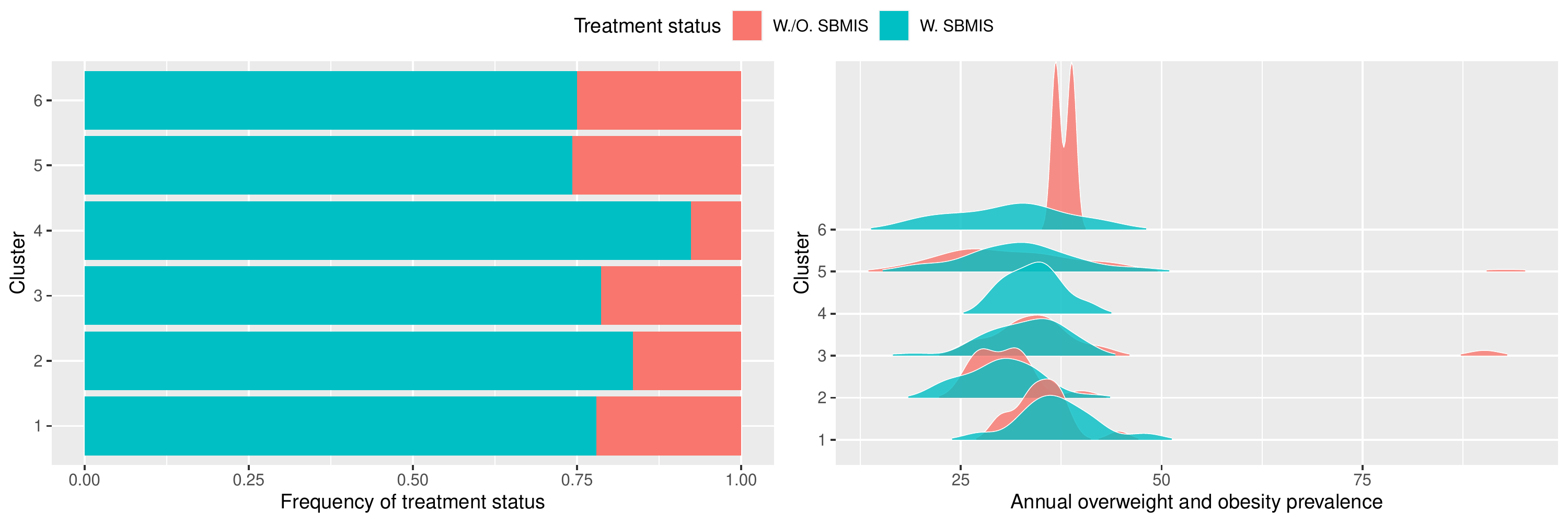}
    \caption{Frequency plot of the treatment $A_{ij}$ for each fused cluster (left); Kernel densities of the study variable $y_{ij}$ stratified by $A_{ij}$ for each fused cluster (right).}
    \label{fig:A_Y}
\end{figure}

\begin{figure}[!ht]
    \centering
    \includegraphics[width = .8\linewidth]{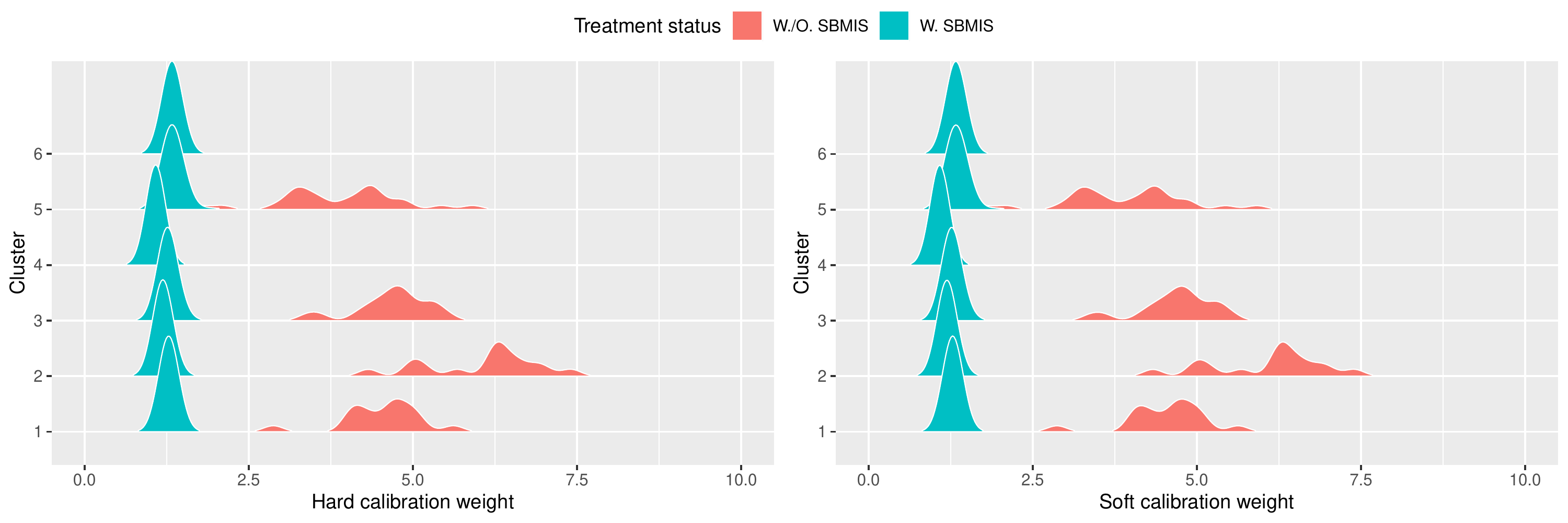}
    \caption{Kernel densities of the calibration weights $w_{ij}$ stratified by $A_{ij}$ for the hard calibration $\widehat{\theta}_{\rm hc}$ (left) and the soft calibration $\widehat{\theta}_w^{\me}$ (right).}
    \label{fig:weights}
\end{figure}

\section{Proof Additional Lemmas}

\subsection{Proof of Lemma \ref{lm:c_hat}}
\begin{proof}
Let $c^*$ be the solution to $\E\{U(c)\mid X,u\}=0$, we have by Taylor's Theorem, $$0=U(\widehat{c})=U(c^{*})+
\left\{\partial U(\overline{c})/\partial c^{\T}\right\}
(\widehat{c}-c^{*}),$$
where $\widehat{c},c^* \in \mathbb{C}_r^*$ defined in $\S$\ref{subsec:proof_unique}. For any $c\in\mathbb{C}_r^*$, we have
$$\partial U(c)/\partial c^{\T}	=\sum_{i\in\bbU}\delta_{i}w'(c^{\T}\bx_{i})\bx_{i}\bx_{i}^{\T} \leq \sup_{c\in\mathbb{C}_{r}^{*}}
\{N^{-1}nw'(c^{\T}\bx_{i})\} Nn^{-1} \sum_{i\in\bbU}\delta_{i}\bx_{i}\bx_{i}^{\T}  = O_\P(N),
$$ because $\mathbb{C}_r^*$ is compact and $w'$ is a continuous function. Next, we know that for any $c\in\mathbb{C}^*_r$, we have
\begin{align*}
   E\{\| U(c)\|^{2}\mid X_{\bbU},u\}&=\sum_{i\in\bbU}E[\{\delta_{i}w(c^{\T}\bx_{i})-1\}^{2}\bx_{i}^{\T}\bx_{i}\mid X_{\bbU},\bu]+
   \gamma_n^2\{\left(1_N^{\T} X_1 D_{12}+1_N^{\T} X_2 D_{22}\right) D_q^{-1}\}^{\otimes 2}\\
	&\leq\sum_{i\in\bbU}
 \E\left[\{\delta_{i}w(c^{\T}\bx_{i})^{2}-2\delta_{i}w(c^{\T}\bx_{i})+1\}\mid X_{\bbU}, u\right]\bx_{i}^{\T}\bx_{i}+O_{\P}(\gamma_n^2 n^{-2}N^2q)\\
	&\leq(Nn^{-1}\overline{d})^{2}n\Sigma_{n}+2Nn^{-1}\overline{d}n\Sigma_{n}+N\Sigma_{N}+O_{\P}(\gamma_n^2 n^{-2}N^2q),
\end{align*}
which is dominated by the first term under the condition that $\gamma_n=o(n^{1/2}q^{-1/2})$. Therefore,  $ U(c^{*})=O_{\P}(Nn^{-1/2})$ and $\widehat{c}-c^{*} =-\{\partial U(\overline{c})/\partial c^{\T}\}^{-1} U(c^{*})=O_{\P}(n^{-1/2})$.
\end{proof}

\subsection{Proof of Lemma \ref{lm:c_true}}
\begin{lemma}\label{lm:c_true}
Let the weight function or the inverse of propensity score function be correctly specified via $w(\cdot)$ entailed by the objective function $Q(w)$, that is, $w(c_t^{*\T}x_i)=P(\delta_i\mid x_i,u)^{-1}$, we have $\|\widehat{c}-c^*_t\|=O_{\P}(n^{-1/2})$ when $\gamma_n=o_{\P}(n^{1/2}q^{-1/2})$.
\end{lemma}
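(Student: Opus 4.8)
The plan is to follow the linearization argument used for Lemma~\ref{lm:c_hat}, but to expand the score $U(c)=\partial G(c)/\partial c$ around the true weight parameter $c_t^*$ instead of the pseudo-true value $c^*$, and to absorb the extra soft-calibration adjustment into the stochastic bound. First I would record $U(\widehat c)=0$ and the Taylor expansion $0=U(\widehat c)=U(c_t^*)+\{\partial U(\overline c)/\partial c^{\T}\}(\widehat c-c_t^*)$ with $\overline c$ on the segment joining $\widehat c$ and $c_t^*$. Exactly as in the proofs of Remark~\ref{rmk:unique} and Lemma~\ref{lm:c_hat}, Assumptions~\ref{assum:positivity} and \ref{assm:regularity}(a), together with the monotonicity and continuity of $w$, place $c_t^*$ in the interior of a compact set $\mathbb{C}^*_r$ on which $w'$ is bounded away from $0$ and $\infty$; combined with a preliminary consistency step $\widehat c\to c_t^*$ in probability (from the uniform convergence of $U$ to its conditional mean and uniqueness of the zero near $c_t^*$), this gives that $\partial U(\overline c)/\partial c^{\T}=\sum_{i\in\bbU}\delta_i w'(\overline c^{\T}\bx_i)\bx_i\bx_i^{\T}$ is invertible with $\|\{\partial U(\overline c)/\partial c^{\T}\}^{-1}\|=O_{\P}(N^{-1})$.

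The substance of the proof is to show $\|U(c_t^*)\|=O_{\P}(Nn^{-1/2})$. I would split $U(c_t^*)$ into the mean-zero stochastic part $\sum_{i\in\bbU}\{\delta_i w(c_t^{*\T}\bx_i)-1\}\bx_i$ and the soft-calibration adjustment $-\sum_{i\in\bbU}(M_{\bbS}^{\T}\bx_{1i}+R_{\bbS}^{\T}\bx_{2i})$ that appears in the second block of $U$. Under the correct-specification hypothesis $w(c_t^{*\T}\bx_i)=\pi_i^{-1}$ with $\pi_i=P(\delta_i=1\mid\bx_i,u)$, Assumption~\ref{assum:laten_ignore} gives $\E_{\delta}[\{\delta_i w(c_t^{*\T}\bx_i)-1\}\bx_i\mid X_{\bbU},u]=0$, so by Assumptions~\ref{assum:positivity} and \ref{assm:regularity}(a)--(b) the same second-moment computation as in Lemma~\ref{lm:c_hat} yields
\[
\var_{\delta}\Bigl[\sum_{i\in\bbU}\{\delta_i w(c_t^{*\T}\bx_i)-1\}\bx_i\,\Bigm|\,X_{\bbU},u\Bigr]=\sum_{i\in\bbU}\pi_i^{-1}(1-\pi_i)\bx_i\bx_i^{\T}=O_{\P}(N^{2}n^{-1}),
\]
so this part is $O_{\P}(Nn^{-1/2})$. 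For the adjustment, the bounds $\gamma_n\|M_{\bbS}\|\le\sqrt2\,C_0\gamma_n n^{-1}$ and $\gamma_n\|R_{\bbS}\|\le C_0\gamma_n n^{-1}$ (already used in the proof of Theorem~\ref{thm:theta_linear}, via $M_{\bbS}=-\gamma D_{12}D_q^{-1}$, $R_{\bbS}=-\gamma D_{22}D_q^{-1}$ and \citet[Lemma~1]{dai2018broken}) together with $\max_{i\in\bbU}\|\bx_i\|^2\le qC$ give $\|\sum_{i\in\bbU}(M_{\bbS}^{\T}\bx_{1i}+R_{\bbS}^{\T}\bx_{2i})\|=O_{\P}(N\gamma_n n^{-1}q^{1/2})$, which is $o_{\P}(Nn^{-1/2})$ precisely under the rate condition $\gamma_n=o_{\P}(n^{1/2}q^{-1/2})$. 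Hence $\|U(c_t^*)\|=O_{\P}(Nn^{-1/2})$, and plugging this and the Jacobian bound into the Taylor expansion gives $\|\widehat c-c_t^*\|\le\|\{\partial U(\overline c)/\partial c^{\T}\}^{-1}\|\,\|U(c_t^*)\|=O_{\P}(n^{-1/2})$.

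A cleaner bookkeeping alternative I would keep in reserve is to chain through Lemma~\ref{lm:c_hat}: since $\|\widehat c-c^*\|=O_{\P}(n^{-1/2})$ is already established, it suffices to prove $\|c^*-c_t^*\|=o_{\P}(n^{-1/2})$, which follows by Taylor-expanding the conditional mean $\E\{U(c)\mid X_{\bbU},u\}$ (whose zero defines $c^*$) around $c_t^*$: its value at $c_t^*$ equals only the soft-calibration adjustment, of order $O_{\P}(N\gamma_n n^{-1}q^{1/2})=o_{\P}(Nn^{-1/2})$, while the Jacobian is invertible with inverse $O_{\P}(N^{-1})$. The step I expect to be the main obstacle is controlling the soft-calibration adjustment rigorously and tying its order cleanly to $\gamma_n$ — in particular making the operator-norm bounds on $D_{12},D_{22}$ hold jointly with the compactness and positivity facts needed for the Jacobian, and verifying the preliminary consistency $\widehat c\to c_t^*$ before invoking the expansion rather than assuming it; everything else is a routine repetition of the linearization already carried out for Lemma~\ref{lm:c_hat}.
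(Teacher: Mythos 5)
Your proposal is correct and takes essentially the same route as the paper: the paper's own proof is a one-line remark that the argument of Lemma~\ref{lm:c_hat} carries over once one notes that the soft-calibration adjustment $N^{-1}\gamma_n(1_N^{\T}X_{1,\bbU}D_{12}+1_N^{\T}X_{2,\bbU}D_{22})D_q^{-1}$ is $o(n^{-1/2})$ under $\gamma_n=o_{\P}(n^{1/2}q^{-1/2})$, which is precisely the decomposition of $U(c_t^*)$ into a mean-zero part and a negligible adjustment that you carry out explicitly. Your write-up simply supplies the details (the variance bound via Assumption~\ref{assum:positivity}, the operator-norm bounds on $D_{12}$ and $D_{22}$, and the Jacobian invertibility on the compact set from Remark~\ref{rmk:unique}) that the paper leaves implicit.
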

\begin{proof}
    The proof of Lemma \ref{lm:c_true} follows in a similar manner to the proof of Lemma \ref{lm:c_hat} as $N^{-1}\gamma_n(1_N^{\T}X_{1,\bbU}D_{12}+1_N^{\T}X_{2,\bbU}D_{22})D_q^{-1}=o(n^{-1/2})$ under the condition that $\gamma_n=O_{\P}(n^{1/2}q^{-1/2})$.
\end{proof}

\subsection{Proof of Lemma \ref{lem:unconstrained_convex}}
\begin{proof}
As it is known that
$$
g\{Q'(z)\} = Q'(z)z - 
Q\{(Q')^{-1}(z)\} = 
Q'(z)z - Q(z).
$$
By differentiating both sides with respect to $z$ once and twice, it yields
$$
g'\{Q'(z)\}Q''(z)=
Q''(z)z, \quad 
g''\{Q'(z)\}Q''(z) = 1.
$$
Further, it implies that 
$$
g''(z) = 
\frac{1}{Q''\{(Q')^{-1}(z)\}},
$$
which will be larger than zero due to the convexity of $Q(w)$. 
\end{proof}

\end{document}